% This is samplepaper.tex, a sample chapter demonstrating the
% LLNCS macro package for Springer Computer Science proceedings;
% Version 2.21 of 2022/01/12
%
\documentclass[runningheads]{llncs}
\usepackage{amsmath}

\usepackage[T1]{fontenc}
% T1 fonts will be used to generate the final print and online PDFs,
% so please use T1 fonts in your manuscript whenever possible.
% Other font encondings may result in incorrect characters.
%
\usepackage{graphicx}
% Used for displaying a sample figure. If possible, figure files should
% be included in EPS format.
%
% If you use the hyperref package, please uncomment the following two lines
% to display URLs in blue roman font according to Springer's eBook style:
%\usepackage{color}
%\renewcommand\UrlFont{\color{blue}\rmfamily}
%\urlstyle{rm}
%

\usepackage{times} % DO NOT CHANGE THIS
\usepackage{helvet} % DO NOT CHANGE THIS
\usepackage{courier} % DO NOT CHANGE THIS
\usepackage[hyphens]{url} % DO NOT CHANGE THIS
\urlstyle{rm} % DO NOT CHANGE THIS
 % DO NOT CHANGE THIS
\usepackage[square,sort,comma,numbers]{natbib} % DO NOT CHANGE THIS

\usepackage{enumitem}

\usepackage[utf8]{inputenc} % allow utf-8 input
\usepackage[T1]{fontenc}    % use 8-bit T1 fonts
\usepackage{url}            % simple URL typesetting
\usepackage{booktabs}       % professional-quality tables
\usepackage{amsfonts}       % blackboard math symbols
\usepackage{nicefrac}       % compact symbols for 1/2, etc.
\usepackage{microtype}      % microtypography
\usepackage{xcolor}    

\usepackage{multirow,array}

\graphicspath{{images/}}
\usepackage{amssymb}
\usepackage{csquotes}
\usepackage{color}
\usepackage{thm-restate}
\usepackage{subcaption}
\usepackage{tikz}
\usepackage{pgfplots}

\pgfplotsset{compat=1.18}

\begin{document}
\newif\ifcomments   % For our comments

\commentstrue

\newcommand{\yotam}[1]{{\ifcomments \color{red}{Yotam: #1} \fi }}
\newcommand{\ronen}[1]{{\ifcomments \color{blue}{Ronen: #1} \fi }}

\newcommand{\moshe}[1]{{\ifcomments \color{blue}{Moshe: #1} \fi }}

\newcommand{\agents}{\mathcal{N}} %for set of agents
\newcommand{\ninput}{\mathcal{I}} %for set of agents
\newcommand{\uu}{\bigcup_{U\in \mathbf{U^t}}U}
\newcommand{\upu}{\biguplus_{U\in \mathbf{U^t}}U}

\newcommand{\argmin}{\operatornamewithlimits{arg\ min}}

\newcommand{\nitest}[1]{\mathsf{NT}\left(#1\right)}

\newcommand{\tlabel}{t} %for set of agents

\title{Prediction-sharing During Training and Inference}
%
%\titlerunning{Abbreviated paper title}
% If the paper title is too long for the running head, you can set
% an abbreviated paper title here
%
\author{Yotam Gafni\inst{1}%\orcidID{0000-0002-2144-655X} 
\and
Ronen Gradwohl\inst{2}%\orcidID{0000-0001-6332-641X} 
\and
Moshe Tennenholtz\inst{3}%\orcidID{0000-0002-9459-5388}
}
\authorrunning{Gafni, Gradwohl, and Tennenholtz}
% First names are abbreviated in the running head.
% If there are more than two authors, 'et al.' is used.
%
\institute{Weizmann Institute 
\email{yotam.gafni@gmail.com}\and
Ariel University
\email{roneng@ariel.ac.il}\\
 \and
Technion - Israel Institute of Technology
\email{moshet@ie.technion.ac.il}}
\maketitle              % typeset the header of the contribution

\begin{abstract}

Two firms are engaged in a competitive prediction task. Each firm has two sources of data---labeled historical data and unlabeled inference-time data---and uses the former to derive a prediction model, and the latter to make predictions on new instances. We study data-sharing contracts between the firms. The novelty of our study is to introduce and highlight the differences between contracts that share prediction models only, contracts to share inference-time predictions only, and contracts to share both.

Our analysis proceeds on three levels. First, we develop a general Bayesian framework that facilitates our study.
Second, we narrow our focus to two natural settings within this framework: (i) a  setting in which the accuracy of each firm's prediction model is common knowledge, but the correlation between the respective models is unknown; and (ii) a setting in which two hypotheses exist regarding the optimal predictor, and one of the firms has a structural advantage in deducing it. 

Within these two settings we study optimal contract choice. More specifically, we find the individually rational and Pareto-optimal contracts for some notable cases, and describe specific settings where each of the different sharing contracts emerge as optimal. Finally, in the third level of our analysis we demonstrate the applicability of our concepts in a synthetic simulation using real loan data. 

\keywords{Data Sharing \and Strategic Machine Learning \and Strategic Classification \and Information Sharing.}
\end{abstract}

\section{Introduction}
Machine learning (ML) is becoming a highly distributed endeavor. Data is spread among different firms, each of whom may have their own ML capabilities and economic utilities. In many cases, one firm's data and prediction capabilities are complemented by those available to a competing firm, and each firm would benefit from access to the other's  predictions. %\yotam{We should maybe frame them more as complementaries and not substitutes?}
 For example, two investment banks that attempt to predict loan defaults could each improve their respective predictions by accessing the other's predictions. Indeed, this is in the spirit of one of the most fundamental ideas in ML---aggregating weak learners into strong ones \cite{FreundS97}. However, the distributed nature of firms' capabilities introduces a major obstacle:  Why, and under what conditions, would firms willingly share their predictions with competitors? And what would equilibrium behavior look like, given such sharing? 

Our main innovation in this paper is the observation that this obstacle actually consists of two separate questions, corresponding respectively to the training and inference phases in ML. First,
why would firms share the labels they have (about individuals in their data logs) in the training phase? And second,  why would firms  share their predictions (about new, incoming instances) in the inference phase? As we show in this paper, this distinction has real bite. 

In order to tackle the question of training/inference-stage prediction-sharing, we proceed on three levels. First, we develop a general Bayesian model that captures the two kinds of sharing. The Bayesian model specifies the informational environment, while a utility model specifies the economic implications. In the Bayesian model, each firm obtains a {\em training signal} that represents the {\em prediction model} (a.k.a.\ classifier) learned by that firm via its labeled historical data. The firm also obtains an {\em inference-time signal} that represents the classifier's prediction on unlabeled inference-time data. In the utility model we associate a real number with each outcome quadrant: True-positive, true-negative, false-positive, and false-negative predictions. We moreover assume that if both firms arrive at the same outcome, then the associated utility is split between them.  %\moshe{ this is unclear -- it reads like if we know we have the same prediction then we share it, which is strange...****} 

In the second level of our analysis, we apply our model to a game-theoretic study of two natural settings. 
In the first setting, the accuracy of each firm's prediction model is common knowledge, but the correlation between the respective models is unknown. As for utilities, each firm has a safe prediction that yields utility zero (whether right or wrong), and a risky prediction. For example, a firm predicting a customer's trustworthiness in order to decide whether or not to issue a loan.  If a loan is provided, the firm's utility depends on the accuracy of the trustworthiness prediction, and whether or not the customer has other offers. If no loan is provided, the firm's utility is fixed at $0$.
In the second setting we study, there are two hypotheses regarding the optimal predictor, and one of the firms has a structural advantage in deriving it. Furthermore, firms' utilities are symmetric across prediction types (unlike the first setting), and depend only on the predictions' correctness---e.g., a firm recommending a movie to a viewer, where the utility depends on whether or not it accurately predicts the viewer's tastes.
%Within the two settings we find situations where, in equilibrium, firms prefer to (i) not share any predictions, (ii) share predictions only in the training phase, (iii) share predictions only in the inference phase, and (iv) share predictions in both phases. 

Finally, in the third level of our analysis, we demonstrate the applicability of our ideas in a synthetic simulation using real loan data. This is intended to provide an accessible, practical recasting of our abstract model's results. In broad terms, if we take a single firm's perspective, the \textit{no-sharing} contract allows it to build a classifier based on its own historical data. Then, based on its assessment (prior) of the competitor, it decides whether or not to act in accordance with the classifier's prediction (signal). An example of choosing to ignore the classifier's signal would be if the firm knows that its competitor can perfectly predict whether a loan would be repaid. Then, all the benefit of issuing a good loan is split (e.g., by the random decision of the consumer as to which of the offered loans to accept). However, since the firm knows that its own classifier is imperfect, it knows it will also end up issuing some bad loans. If the cost of bad loans outweighs the benefit of splitting the profit from good loans, the firm would decide to ignore its classifier and not issue any loans. Expanding on this example, the \textit{train-sharing} contract can allow the firm to make a more refined decision: Based on seeing how the other firm predicts on the historical data, it can assess whether or not to follow its own classifier. The \textit{full-sharing} contract allows even more intricate decision rules: They can depend both on what the firm learns about the competitor's predictions on historical data, and also on the competitor's prediction on each specific consumer. Lastly, the \textit{infer-sharing} contract does not see the competitor's predictions on historical data, so it must maintain its assessment/prior over the other firm's classifier, but it can use the competitor's prediction on the real-time consumer to decide whether to follow its own classifier's prediction. In our practical implementation of Section~\ref{sec:simulation} we examine the performance of the optimal decision rules under different contracts, and show that each of no-sharing, full-sharing, and train-sharing is uniquely optimal for some set of parameters. %We also compare our model that considers prediction-sharing to enable the wrapper layer of decision rules on top of an existing classifier, with total data sharing, where the historical data is shared in whole and classifiers are retrained. 

%All in all, the different contracts provide a wrapper layer over the firm's own classifier that takes the strategic setting into consideration. 

% Interestingly, in these two settings we describe instances in which no-sharing, (only) train sharing, (only) infer sharing, and full (both train and infer) sharing are the unique individually rational and welfare-optimal %\yotam{Or use our new framing?} 
% contracts obtained in equilibrium. 
%
%These results are derived within our general Bayesian model. In this model,  %In our framework the firms do not share their actual data with each other, but only their predictions; this is motivated by the proprietary nature of the data, privacy concerns, and government regulation.

% Indeed, in general, for different firms, although they may share data identifiers and their model predictions for them, it may be impossible to share the data itself to create a joint classifier. 
%Notice that sharing does not mean that parties have identical classifiers; indeed, in general, different parties, although they may share data identifiers, can not (or are not allowed to) share all data available to create a joint classifier. 

The emphasis of our work in game theoretic terms is to require that a contract is both \textit{individually rational} and \textit{Pareto-optimal} (IRPO). This follows the assumption that the natural state of affairs is that no contract is signed (no-sharing). Thus, for the firms to agree for any kind of prediction-sharing, it must be that for each of them, the expected utility under the prediction-sharing contract is at least as good as under no-sharing. We refer to this property as the contract being \textit{individually rational}. Moreover, the contract must be \textit{Pareto-optimal} w.r.t. the four possible contracts. E.g., if the utilities under full-sharing dominate these under train-sharing, even if train-sharing is by itself individually rational, it would make sense that the firms choose to sign the Pareto-optimal contract rather than a Pareto dominated one. As we see, there are different settings so that each of the contract types may become uniquely IRPO. 

Lastly, we note that in order for the firms to share their predictions, they need a way to match records. Facing this issue is common in the industry and there are companies that specialize in this task.\footnote{E.g., in advertising, identifying the same user on different devices is called cross-device targeting, and ``attribution providers'' companies such as AppsFlyer and Singular enable this.}
This type of prediction-sharing is valuable, even if done for identifiers both firms hold, as different firms may be exposed to different properties of the same identifier. As an example, think of firms that know different social and financial features associated with the same social security number. In this case, there is a difference between sharing each firm’s binary prediction regarding the user, or the entire data it holds for that identifier. Importantly, our model assumes that firms share  their training and inference-time \textit{signals}, and not their entire data. In practice, in the training stage the signals come in the form of  true labels in the historical data, and in the inference stage in the form of the classifier's predictions. 
The fact that this still proves to be useful is by itself interesting, as it suggests a path to data sharing that protects both the firm's intellectual property (in terms of both data and models used in training), and possibly the users' privacy.

\subsection{Our Contribution}
In Section~\ref{sec:model}, we provide the first model to reason about contracts that may involve sharing prediction both in the training and inference stage. In Sections~\ref{sec:corrModel},\ref{sec:two_hyp} we focus on two natural sub-models of the general model we present: 
\begin{enumerate}
 
 \item \textit{A Correlation Model:} Both firms know their own and their competitor's prediction accuracy, but not the correlation between the two prediction models. We characterize the uniquely individually rational and Pareto-optimal contracts for some notable cases. We also show that all contracts except inference-sharing can be optimal in this setting. 
 
 \item \textit{A Two Hypotheses Model:} One firm is able to determine the correct hypothesis during training, while the other has information about customers that is valuable during inference. %**** \moshe{ the term "important signal"  might sound strange ****}
 Here, we show that inference-sharing can be the unique individually rational and Pareto-optimal contract.  

\end{enumerate}
%\moshe{ perhaps in the above summarize saying all four combination of train / share can be materialized etc....}
Overall, we conclude that each of the four train/inference combination contracts can be optimal:

\begin{itemize}

\item \textit{No-sharing} is the optimal individually-rational contract when the cost of making a wrong prediction is equal to the reward of making a correct prediction.
We show this first in Lemma~\ref{lem:KnownArbitrarySA}, for the case where the prediction model of each firm (based on its own data) is common knowledge, and then generalize it in Theorem~\ref{thm:no-sharing-symmetric-sa} for the general training-phase prior. This characterization follows from two main insights: (1) Under full-sharing, when the two firms share their inference-time signals, the firms will simply follow the primary firm signal. This is because a negative primary firm signal overshadows a positive secondary firm signal. (2) Given the first insight, the primary firm is only set to lose by sharing its signal, since the aggregate utility of the two firms is constant (and equals the accuracy of the primary firm's prediction). The secondary firm becomes more informed under full-sharing, and can extract the same utility as the primary firm.

\item \textit{Full-sharing} is the optimal individually-rational contract when the two firms have the same prediction accuracy. It is at least as good as no-sharing because the firms can use both signals to ``amplify'' or mitigate their individual signal (Theorem~\ref{thm:full-sharing-symmetric-prediction}), in a way that is mutually beneficial w.r.t. their equilibrium behavior under no-sharing. Full-sharing is also at least as good as train-sharing, because in both cases the equilibrium behavior is symmetric, but the full-sharing equilibrium is more informed (Lemma~\ref{lem:full_dominates_train_symmetric}). This is also true w.r.t. the infer-sharing equilibrium, even more generally (Lemma~\ref{lem:full_dominates_infer}), as the symmetry in the infer-sharing case stems not from having the same prediction accuracy, but from the fact that the infer-phase signals are shared, and the train-phase signals can not individually teach more about the correlation than the common prior. 

\item \textit{Train-sharing} is the optimal individually-rational contract when the two firms benefit from reaching different equilibria given a different correlation between their signals. In particular, the firms may prefer to each follow its signal when the correlation is low, but have the secondary firm 'yield' to the primary firm when the correlation is high and exit the market. Learning `when to quit' benefits the secondary firm as well, and so can emerge as the optimal individually-rational contract when full-sharing is `too permissive' for the primary firm to follow, due to loss in competitive advantage. 

\item \textit{Infer-sharing} is harder to come up with a situation where it is the optimal individually-rational contract. In fact, we show that in our correlation model it can not be the uniquely optimal individually-rational contract (Lemma~\ref{lem:full_dominates_infer}). In Section~\ref{sec:two_hyp} we explore a model we call ``the two hypotheses model'', which has a natural interpretation in health and scientific contexts, and show how infer-sharing may arise as the uniquely optimal individually-rational contract there (Theorem~\ref{thm:infer-sharing}). 

\end{itemize}

Beyond the existence results detailed above, which help provide intuition into the different types of prediction sharing contracts, the theorems of Section~\ref{sec:corrModel} also provide a partial characterization of our correlation model in several important cases such as symmetric utilities, or symmetric prediction accuracy. In Section~\ref{sec:simulation}, and further in Appendix~\ref{app:robustness}, we demonstrate how our abstract Bayesian model may be put into practice and implemented, using a real loan dataset. 

\subsection{Related Work}
Previous work in ML considered different aspects of strategic prediction. For example, \cite{Ben-PoratT17} and \cite{feng2022bias} study competition in prediction, based on shared and independent data, respectively. Literature on federated learning \cite[e.g.][]{CongYWY20, fraboni} considers free-riding by data providers to save costs while still benefiting from better predictions. 
% While \cite{Ben-PoratT17} consider that both competitors access the same data, the literature on federated learning  does not deal with competition between firms in the prediction task. Distributed data together with competition in prediction jointly define the nature of the current economy. 
\cite{GafniT22} and \cite{GradwohlT22} study data aggregation between competitors: In the former, a firm aims to exploit another firm's contributed data but also to mislead it, putting the integrity of the data-sharing protocol at risk. In the latter, segmentation information about consumers is split between firms, and firms decide whether or not to share their part of the data with others during the inference phase. The former work focuses solely on training models, while the second only on the inference stage of a known segmentation. % (and training is not an issue)
Finally, some papers deal with the imbalance between firms with stronger and weaker models through the lens of fairness, leveraging tools from cooperative game theory \cite{modelSharingGames,ray2022fairness}.
There is also a growing economics literature on data markets \cite[see, e.g., the survey of][]{bergemann2019markets}%on information sharing between competitors and data-sale by third parties
.
However, neither this economics literature, nor work on strategic ML consider strategic sharing of prediction models between competitors. They also do not contrast sharing during training and during inference, a distinction we see as crucial for ML in the distributed economy. 
%Indeed, as our model is general (and, as we illustrate, flexible enough to capture different natural settings), it provides us a tool for studying both when and what type of prediction-sharing is jointly rational and beneficial. 

There is some analogy between our work and the fundamental ML idea of aggregating weak learners into strong ones, and specifically to {\em bagging} and {\em stacking}. In bagging \cite{breiman1996bagging}, the ML algorithm deliberately creates subsets of the data and learns models for them in parallel; this is somewhat analogous to how, in our setting, different firms develop their own models. In stacking \cite{wolpert1992stacked}, there are two stages: first, models (derived, e.g., from bagging) produce predictions over a data-set, and second, a meta-learning algorithm learns how to generate a final authoritative prediction from the models' predictions. In a sense, our work can be viewed as {\em strategic} bagging and stacking.

\section{Model}
\label{sec:model}
\paragraph{Informational environment} There are two firms engaged in a competitive prediction task. Each firm obtains data in two phases: training and inference. In the training phase, examples with binary labels are drawn at random, and each firm learns a respective prediction model (i.e., classifier). The training phase may consist of one example, multiple examples, or ``infinitely many'' examples. In the inference phase firms use their learned model in order to predict the label of a new example. Firm 1's prediction is either $A$ or $B$ and firm 2's prediction is either $a$ or $b$, where the former indicates that the firm's prediction model believes the label is 1 and the latter indicates the label is 0. %\yotam{Dont we prefer that the firms predictions are 0/1 (just as the true realization), but their signal is A/B and a/b respectively?}
We model this interaction in an abstract Bayesian framework using the  {\em rich signal spaces} of \cite{green2022two} %\yotam{I saw that this article is maybe actually from 1978? I had reviewers in the past that really hated if I cited a modern version of a classic article} 
and \cite{gentzkow2017bayesian}.
We next describe the formal model, and then highlight the main elements and their interpretations.

A {\em world model $w$} consists of a prior distribution $\pi_w$ over $\{0,1\}$, as well as two {\em signal spaces}, one for each firm. For every true label $\tlabel\in\{0,1\}$, each signal space partitions $[0,1]$ into two sets, representing the probabilities associated with firms' prediction models, given true label $\tlabel$.\footnote{Formally, each signal space is a Lebesgue measurable bi-partition of $[0,1]\times\{0,1\}$.}  For the first firm, the first set is denoted $A_w^t\subset [0,1]$, and the second is denoted $B_w^t=[0,1]\setminus A_w^t$. For the second firm, the two sets are denoted $a_w^t$ and $b_w^t$.
% For the first firm, the first partition element is denoted $A_w\subset [0,1]\times\{0,1\}$, and the second is denoted $B_w=\left([0,1]\times\{0,1\}\right)\setminus A_w$. For the second firm, the two partition elements are denoted $a_w$ and $b_w$. 
Given $w$, a random example is modeled as a label $\tlabel$ drawn from $\{0,1\}$ according to $\pi_w$, as well as $\zeta$ drawn from $[0,1]$ uniformly at random.\footnote{The uniformity assumption here is without loss of generality.} Firm 1's signal (i.e., its model's suggested prediction under $w$) on this example is then $1$ if $\zeta \in A_w^t$ under $\tlabel$, and $0$ otherwise; firm 2's signal is $1$ if $\zeta\in a_w^t$ under $\tlabel$, and $0$ otherwise. %\yotam{Is that what we want to say? Dont we want to say that its signal is ... and then the prediction depends on equilibrium etc? Also, shouldn't $A_w$ and $a_w$ be drawn together? As they may depend on each other?} 
In words, $\zeta$ chooses a ``location'' on the interval [0,1]. This location decides some signal for Firm 1 (according to the way it partitions the interval $[0,1]$), and similarly for Firm 2 (possibly with a different partition). Sampling $\zeta$ uniformly at random from $[0,1]$ is in a sense similar to sampling a random feature vector that is used to train the firms' prediction models / requires a prediction at inference time. 

In general, firms may not know the true $w$. Instead, let $W$ be a possibly infinite set of possible world models, and suppose there is a commonly known prior $\pi$ over them. An example of this framework is illustrated in Figure~\ref{fig:two_hyp_general_framework}.

Given this informational environment, the interaction proceeds as follows. In stage 0, Nature chooses an element $w$ of $W$ according to $\pi$. Then:
\begin{enumerate} 
    \item In the {\em training} stage, each firm $i$ obtains a {\em training signal} $w_i$ about the realized world model $w$. Each $w_i$ is a function of firm $i$'s respective signal space under $w$. Given signal $w_1$ (respectively, $w_2$) and the prior over $W$,  each firm $i$ uses Bayesian updating to derive posterior beliefs $\pi_i$ over world models $W$. 
    %\yotam{Is it a posterior distribution over $A_w$? But this doesn't teach us the correlation (if it only relates to posterior on $\pi_w$?)? So what is this object?}.
    \item In the {\em inference} stage, $\zeta$ is drawn from $[0,1]$ uniformly at random, and a label $\tlabel$ is drawn from $\{0,1\}$ according to $\pi_w$. Firm 1  obtains the {\em inference-time signal} $X\in\{A,B\}$ that satisfies $\zeta\in X_{w'}^t$, where $w'\sim \pi_1$; firm 2  obtains the {\em inference-time signal} $x\in\{a,b\}$ that satisfies $\zeta\in x_{w'}^t$, where $w'\sim \pi_2$.\footnote{Notice that the inference-time signal is drawn according to the firm's \textit{posterior}, rather than according to some specific true possible world. This is since we are interested in calculating the firms' equilibrium behaviors, which follow their Bayesian perspective.}
    \item In the {\em action} stage, each firm $i$ takes an action $a_i\in\{0,1\}$. Utilities depend on both firms' actions, and true label $\tlabel$.
\end{enumerate}

Next, we consider different contracts for prediction sharing. Under {\em no-sharing}, the interaction proceeds as above. Under {\em train-sharing}, there is an additional stage between 1 and 2:
\begin{itemize}
    \item[1b.] Firms share their respective training signals $w_1$ and $w_2$.
\end{itemize}
Under {\em infer-sharing}, an additional stage between 2 and 3:
\begin{itemize}
    \item[2b.] Firms share respective inference-time signals $X$ and $x$.
\end{itemize}
Finally, under {\em full-sharing} both 1b and 2b take place.

\paragraph{Summary and interpretation} We now summarize the model elements and their interpretations:
\begin{itemize}
\item The {\em world model} $w$ is an information-theoretically optimal pair of classifiers for the firms.
    \item The {\em training signal} $w_i$ implies a posterior $\pi_i$ over world models, which we interpret as the actual classifier firm $i$ is able to train. We interpret the signal $w_i$ as firm $i$'s predictions on its labeled historical data. In practical terms, the training signal can be interpreted as the model that best fits the training data, out of all possible models. The firms can then share these signals (i.e., the functions or code representing their best models given their data), without sharing the data itself.

    \item The {\em inference-time signal} is the prediction ($X\in\{A,B\}$ for firm 1, $x\in\{a,b\}$ for firm 2) made by the trained classifier on an unlabeled inference-time example.
    \item Under {\em train-sharing}, firms share $w_1$ and $w_2$, their predictions on labeled historical data.
    \item Under {\em infer-sharing}, firms share $X$ and $x$, their respective predictions on the unlabeled inference-time example.
\end{itemize}

This formulation can capture a wide range of scenarios. The prior over $W$ implies a prior over the relative share $\pi_w$ of each label, a prior over the accuracy of each firm's model, and a prior over the correlation between the predictions of firms' models. %The formulation can also capture different forms of learning. For example, under ``infinitely many'' examples (on which we focus for most of the paper), knowing $w_i$ allows firm 
%$i$ to determine its entire signal space, which amounts to learning an optimal classifier given the features that it sees (i.e., one that maximizes prediction accuracy).\footnote{Optimal classifiers are, of course, not perfect -- their accuracies are bounded above by irreducible error and by hidden/unobserved features.} Under finitely many examples, there are at least two possibilities. If the two firms learn by seeing the same $k$ examples (but possibly different features of each example), then this will be captured by $k$ draws $(\zeta,\tlabel)$ from $[0,1]\times\{0,1\}$, and each $w_i$ will be the $k$ draws and their respective partition elements under firm $i$'s partition. If the two firms learn by seeing different examples, then for each firm there will be distinct draws from $[0,1]\times\{0,1\}$, and the respective partition elements. 
The framework is illustrated in Figure~\ref{fig:two_hyp_general_framework}. See also Figure~1 and Figure~2 in \cite{gentzkow2017bayesian}. 

\begin{figure}[!htb]
\includegraphics[scale=0.4]{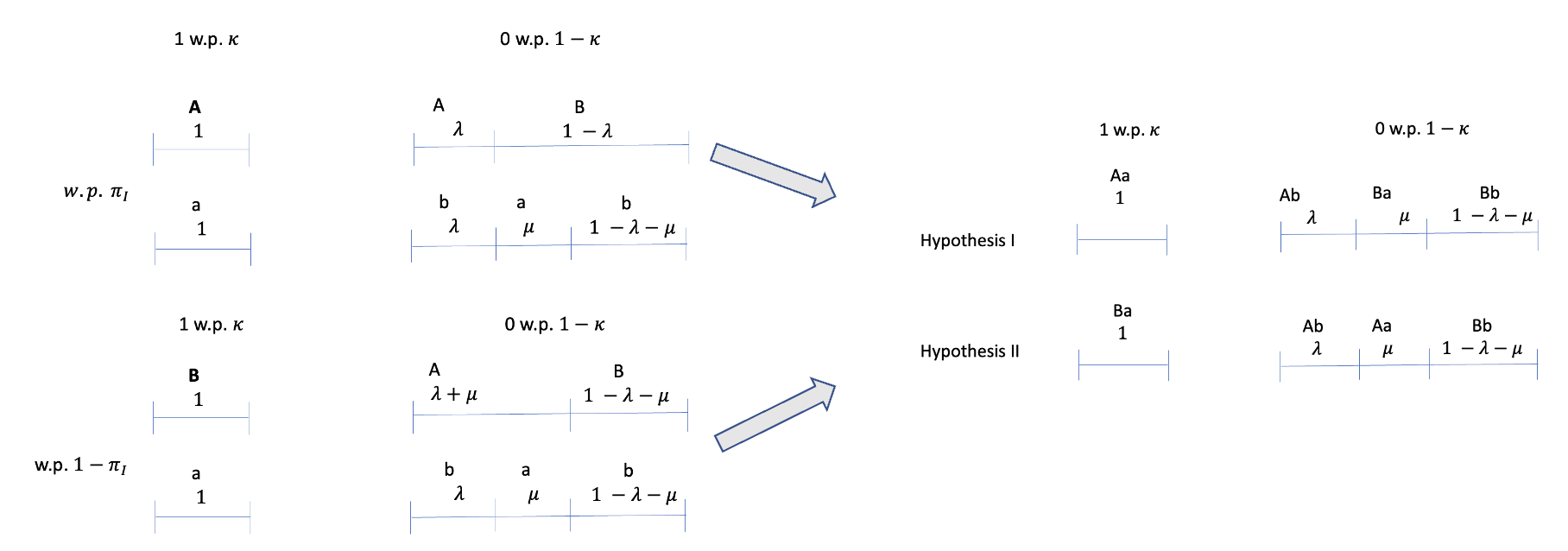}
\caption{There are two world models, represented by the top two and bottom two pairs of intervals, respectively. For both world models, $\pi_w=Pr[\tlabel=1]=\kappa$. In the first world, $A_w^1=[0,1]$ and $A_w^0=[0,\lambda]$. Thus, if $\tlabel=1$ firm 1 always obtains signal $A$, and if $\tlabel=0$  firm 1 obtains signal $A$ with probability $\lambda$---i.e., whenever $\zeta\in [0,\lambda]$---and signal $B$ with probability $1-\lambda$. Furthermore, $a_w^1=[0,1]$ and  $a_w^0=[\lambda,\lambda + \mu]$. Thus, if $\tlabel=1$ firm 2 always obtains signal $a$, and if $\tlabel=0$ obtains signal $a$ with probability $\mu$---i.e., whenever $\zeta\in [\lambda,\lambda + \mu]$---and signal $b$ with probability $1-\mu$. 
Finally, the bottom two pairs of line segments represent the firms' signal spaces in the second world model, which differs from the first only in firm 1's signal under $\tlabel=1$, namely, $A_w^1=\emptyset$ and $B_w^1=[0,1]$.
The interval structure of each of the firms results in a joint interval structure (and an induced joint probability over firm 1 signal $A/B$, firm 2 signal $a/b$, and the true realization $0/1$), shown on the rhs of the figure. In the infinite data model, where each of the firms learns its own interval structure with certainty, firm 1 is able to deduce the correct world model just by knowing its own interval structure. On the other hand, firm 2 does not learn (in a Bayesian sense) anything from its own interval structure. This example captures our ``Two Hypotheses'' model of Section~\ref{sec:two_hyp}.  %\ronen{I think the probabilities here shouldn't be normalized by $1-\alpha$. }\yotam{No actually the mistake is probably that I neglected a $-\alpha$ in the numerator of the most RHS interval. But still needs to be normalized. I will fix this}
%\ronen{but why not redefine the Greek letters so they are already normalized to sum to 1? To make it look simpler?}
}
\label{fig:two_hyp_general_framework}
\end{figure}

\paragraph{Strategies} A  strategy $s_i$ of firm $i$ in the action stage is a mapping from the firm's signals to a distribution over actions $a_i\in\{0,1\}$. The firm's signals depend on the contract: under no-sharing, the respective signals are $\sigma_1^{ns}=(w_1, X)$ for firm 1 and $\sigma_2^{ns}=(w_2, x)$ for firm 2. Under train-sharing, they are $\sigma_1^{ts}=(w_1, w_2, X)$ and $\sigma_2^{ts}=(w_1, w_2, x)$. Under infer-sharing, they are $\sigma_1^{is}=(w_1, X, x)$ and $\sigma_2^{is}=(w_2, X, x)$. And under full sharing, both firms obtain signals $\sigma_i^{fs}=(w_1, w_2, X, x)$. 

%\paragraph{Utility model}  
\paragraph{Utility Model}
As noted above, utility $u_i(p,t,p')$ of  firm $i$ depends on 3 variables: The firm's action $p$, the true label $t$, and the other firm's action $p'$. For a given example, action $p$ is {\em correct} if it matches the example's label $t$. Given a training signal $w_i$, a contract $ct\in\{ns, ts, is, fs\}$, and a pair of strategies $(s_1,s_2)$%\yotam{Maybe just use $s_1, s_2$?}
, the expected utility of firm $i$ is
\begin{equation}\label{eq:firms_utility}u_i^{ct}(w_i, s_1, s_2)=E\Big[u_i\Big(s_1\left(\sigma_1^{ct}\right),\tlabel, s_{2}\left(\sigma_{2}^{ct}\right)\Big)\Big],\end{equation}
where the expectation is over the draw of $w$ from $W$ according to $\pi|w_i$, the draw of $\tlabel$ according to $\pi_w$, the draw of $w_j$ under $w$, the draws of inference-time signals $X$ and $x$ under $w$, and the distributions of firms' randomization over actions. %\yotam{But seems we only consider pure strategies?}

We  make some simplifying assumptions about utilities. First, we assume that 
\begin{equation}
\label{eq:symmetric_utils}
u_1 = u_2
\end{equation}. Second, we assume $u_i(p,t,p) = \frac{1}{2}u_i(p,t,\neg p)$, i.e., that if the two firms take the same action, the utility (whether positive or negative) is divided between them, in the sense that:
$$\sum_{i=1}^2 u_i(p,t,p) = \sum_{i=1}^2 \frac{1}{2}u_i(p,t,\neg p) \stackrel{\text{Eq.~\ref{eq:symmetric_utils}}}{=} u_1(p,t,\neg p).$$

To emphasize the notation, $u_i(p,t,p)$ is the utility when the other firm's prediction $p'$ is equal to $p$, and $u_i(p,t,\neg p)$ is the utility when the other firm's prediction $p'$ is \textit{different} than $p$. 

Thus, the ex-post utility is determined by four numbers: $R_0 = u_1(0,0,1), R_1 = u_1(1,1,0), C_0 = u_1(0,1,1), C_1 = u_1(1,0,0)$, where for example $R_0$ is the reward from correctly taking action $0$ while the other firm takes action $1$. We assume that $R_0, R_1 \geq 0$ and $C_0, C_1 \leq 0$. 

In the paper, we largely focus on two specific utility models that capture important settings. 
%
%Then, each of the firms can make a decision whether to make a costly prediction. We consider a competitive setting where if a firm makes a wrong prediction, it gets no reward (and pays its cost). 
In Section~\ref{sec:corrModel}, we focus on a utility model we call {\em significant-action} utilities. In this model, there is a significant action---w.l.o.g., the action $1$. For example, this action may be choosing to issue a loan. When taking the other, safe action, both reward and cost satisfy $R_0 = C_0 = 0.$ 
If a firm takes a correct significant action exclusively, meaning that the other firm takes the safe action, it gets the full reward $R_1$. 
%If both firms take correct significant actions, they each get $\frac{1}{2}$ (as always in our general model). % they get a discounted reward (each gets $\alpha_i$ and pays its own cost). 
On the other hand, if a firm takes an incorrect significant action exclusively, it pays a cost $C_1$. %If both firms  take incorrect significant actions, they each pay $\frac{C_1}{2}$. 
When $C_1 = 1$, we call this the \textit{symmetric} significant-action utility model.
%, where the symmetry is between the cost and reward of a significant action.

In Section~\ref{sec:two_hyp}, we focus on a utility model we call {\em matching recommendations} \cite[as in, e.g.,][]{CoopetitionAmazon}. In this model, there are no costs to a mistake---formally, $C_0 = C_1 = 0$---and there is a symmetric reward for any correct action---formally, $R_0 = R_1 = 1$. %\ronen{If both take the correct action then they each get $1/2$?} 
E.g., consider a firm that chooses between two possible recommendations to a user, and, if it correctly recommends what the user is looking for, the user will make a purchase. 

\paragraph{Equilibrium, individual rationality, and Pareto optimality}
Given training signals $w_1$ and $w_2$ 
and a contract $ct\in\{ns, ts, is, fs\}$, a pair of strategies $s=(s_i,s_{\neg i})$ form a {\em Nash equilibrium} at $(w_1, w_2)$ if for each $i$ and each strategy $s_i'$,
\begin{equation}u_i^{ct}(w_i, s_i, s_{\neg i}) \geq u_i^{ct}(w_i, s_i', s_{\neg i}).\end{equation}

An equivalent and perhaps more useful formulation of the equilibrium condition takes the perspective of the agent together with her beliefs \cite[see the discussion in Chapter 9 of ][]{maschler_solan_zamir_2013}. We define the utility from taking action $p_i \in \{0,1\}$ given the collection of signals $\sigma_i^{ct}$ %\ronen{$\sigma_i^{ct}$ isn't the belief, it's the collection of signals. I don't think that matters for what follows though..} (which stems from the collection of signals under the contract $ct$)
and the other firm's strategy $s_{\neg i}$ as
\begin{equation}\label{eq:firm_cond_utility}\tilde{u}_i^{ct}(\sigma_i^{ct}, p_i, s_{\neg i})=E\Big[u_i\Big(p_i,\tlabel, s_{\neg i}\left(\sigma_{\neg i}^{ct}\right)\Big)\mid \sigma_i^{ct}\Big],\end{equation}
where the expectation is over the conditional draw of $\sigma_{\neg i}^{ct}$ and $\tlabel$ given signals $\sigma_i^{ct}$. We then say that $s$ is an equilibrium for firm $i$ if for every belief $\sigma_i^{ct}$ and every possible action $p_i' \in \{0,1\}$, 
\begin{equation}\label{eq:conditional_equilibrium_condition}\tilde{u}_i^{ct}(\sigma_i^{ct}, s_i(\sigma_i^{ct}), s_{\neg i})\geq \tilde{u}_i^{ct}(\sigma_i^{ct}, p_i', s_{\neg i}).\end{equation}

%If $s$ is the unique Nash equilibrium at $(w_1, w_2)$, then we say $u_i^{ct}(w_i)=u_i^{ct}(w_i, s_i, s_j)$ is firm $i$'s utility under contract $ct$ and training signal $w_i$. If there are unique equilibria under both no-sharing and under contract $ct\in\{ns, ts, is, fs\}$, 

Next, a contract $ct$ {\em Pareto dominates} contract $ct'$ at $(w_1,w_2)$ if there exists an equilibrium $s$ under $ct$ such that, for every equilibrium $s'$ under $ct'$,
\begin{equation}u_1^{ct}(w_1, s) \geq u_1^{ct'}(w_1, s')~~~\mbox{ and }~~~u_2^{ct}(w_2, s) \geq u_2^{ct'}(w_2, s').\end{equation}
If at least one of the inequalities is strict then the Pareto dominance is {\em strict}. Contract $ct$ {\em Pareto dominates} contract $ct'$ if it Pareto dominates $ct'$ at every $(w_1, w_2)$, and in this case we write $ct \succeq ct'$. If $ct \succeq ct'$ and $ct' \succeq ct$, we write $ct = ct'$, and say that the two contracts are equivalent.  
Contract $ct$ {\em strictly Pareto dominates} $ct'$ if $ct \succeq ct'$ but $ct' \not\succeq ct$, and in this case we write $ct \succ ct'$.

Contract $ct$ is {\em individually rational (IR)} at $(w_1, w_2)$ either if it is the no-sharing contract (which we consider the default contract), or if $ct$ Pareto dominates the no-sharing contract at $(w_1,w_2)$. Contract $ct$ is {\em always IR} if it is IR at every $(w_1,w_2)$, namely, $ct \succeq ns$.

Contract $ct$ is {\em Pareto optimal} if it is not Pareto dominated by any other contract,  {\em Pareto-optimal IR} (IRPO) if it is both Pareto optimal and always IR, and  {\em uniquely IRPO} if it is the only contract that is both Pareto optimal and always IR.

We note that although our model is general, and can handle both mixed and pure Bayesian equilibria, our results in Section~\ref{sec:corrModel} onwards are for \textit{pure} Bayesian equilibria.

\section{Contracts for Prediction-Sharing with Unknown Correlation}
\label{sec:corrModel}

In this section we focus on the first of two specific settings within our framework. We assume firms have significant-action utilities and ``infinite data''. The latter assumption means that each firm's prediction model is in some sense an optimal classifier given the data features it is able to see. We believe that this is the most natural assumption to closely approximate massive data sets.\footnote{See also our analysis of a finite data case in  Section~\ref{subsec:finite_data}.} The main caveat is that neither firm knows the correlation between the firms' classifiers, even after learning its own classifier. We further assume that the prediction accuracy of each firm's classifier---formally, $Pr_{\pi_w}[1 | X = A]$ and $Pr_{\pi_w}[1 | x = a]$---are common knowledge.\footnote{We use $Pr[1]$ as shorthand for $Pr[t=1]$, and may omit the identifiers $\pi_w,X,x$ when clear from context.}
Finally, for simplicity we assume that $Pr[0] = Pr[1] = \frac{1}{2}$, and that the false-positive and false-negative rates  
 are the same for each of the firms:
% i.e.,\footnote{$\alpha$ is in fact the true-positive rate, but the above equation implies that the false-positive rate $Pr[A | 0] = 1 - Pr[B | 0] = 1 - Pr[A | 1] = Pr[B | 1]$, which is the false-negative rate.} 
\begin{equation}\alpha \stackrel{\mathrm{def}}{=} Pr[A | 1] = Pr[B | 0]~~~\mbox{ and }~~~ \beta \stackrel{\mathrm{def}}{=} Pr[a | 1] = Pr[b | 0].\end{equation} Still, the full \textbf{joint} distribution of the firms' pair of signals together with the true realizations under $w$ is unknown. As we see later in Section~\ref{subsec:unknownCorr}, this is equivalent to both firms not knowing how the signals of the two firms are correlated under the no-sharing contract, regardless of $(w_1, w_2)$. Firms also do not know the true label of the outcome they are trying to predict in the inference phase. We assume w.l.o.g.\ that $\alpha \geq \beta \geq \frac{1}{2}$.
%, and call firm 1 the primary firm and firm 2 the secondary firm, in terms of their strength in predicting the market. 

%\yotam{The addition is incorrect (more on that in Appendix B). I think we should just avoid it for the Neurips version.}
%\added{One way to capture this in the framework of Section~\ref{sec:model} is the following: $w_1$ consists of the bi-partition where $A_w = \left([0,\alpha]\times \{1\}\right) \cup \left([\alpha, 1]\times \{0\}\right)$ and  $B_w = \left([0,\alpha]\times \{0\}\right) \cup \left([\alpha, 1]\times \{1\}\right)$. $w_2$ consists of the bi-partition where $a_w = \left(S\times \{1\}\right) \cup \left([0,1]\setminus S\times \{0\}\right)$ and  $b_w = \left(T\times \{0\}\right) \cup \left([0,1]\setminus T\times \{1\}\right)$, where $S$ and $T$ are subsets of $[0,1]$ of measure $\beta$. The correlation between the two firms' prediction models is determined by the overlaps in their partition elements.}
At one extreme, it is possible that the firms' signals are independent. At the other extreme, it is possible that they are fully correlated. %\added{(under the maximal positive correlation, for example, $S,T\subseteq [0,\alpha]$). The prior $\pi$ over $W$ is then the prior distribution over such correlations. Note also that train-sharing leads both firms to learn the correlation between their prediction models, and that infer-sharing leads each firm to learn the other's realized prediction, $X\in\{A,B\}$ and $x\in\{a,b\}$.}
In %Appendix~\ref{app:unknown_corr}
Appendix~B we show how this model can be formulated using our general model from Section~\ref{sec:model}. 

%\ronen{Should we add a brief discussion here about how this setup maps to the general model?}

\subsection{Warm-Up: Known Correlation}
\label{subsec:knownIndependent}

We start our investigation with a simple model in which the correlation between the firms is known. For an example in which firms' signals are known to be conditionally independent, see Figure~\ref{fig:known-corr}. 
\begin{figure}[!htb]
\includegraphics[scale=0.7]{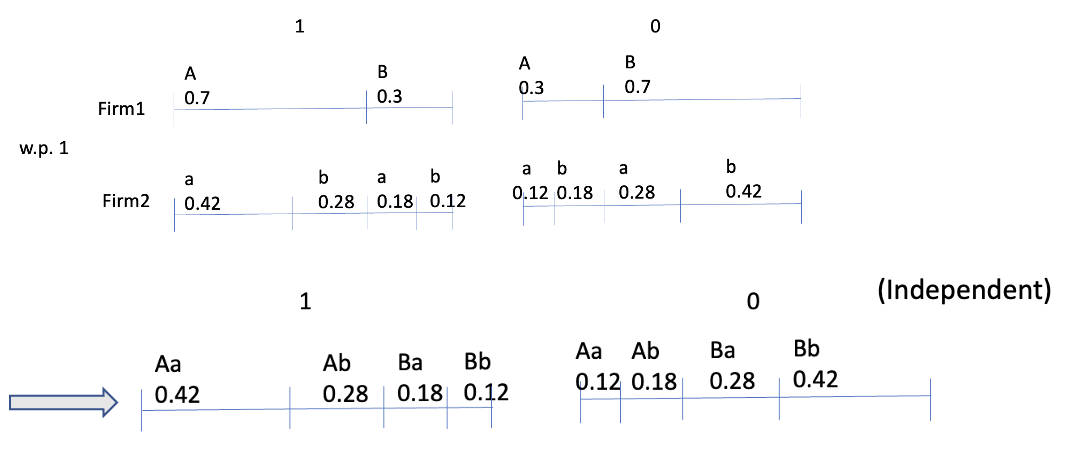}
\caption{Known correlation: An example of conditionally independent signals with $\alpha = 0.7, \beta = 0.6$. With one possible world, both firms know the joint distribution over true realizations and inference-time signals with certainty. 
}
\label{fig:known-corr}
\end{figure}

When the precision accuracy of both firms is common knowledge, as we assume throughout this section, then the \textit{correlation} between the firms' predictions fully determines the joint distribution of the pair of signals under label $t$. We show that formally in Claim~\ref{clm:cor-joint-dist}. By correlation we mean the Pearson correlation of the signals, namely 
$$\theta_t = \frac{Pr[X=A \land x = a | t] - \alpha \beta}{\sqrt{\alpha (1-\alpha) \beta (1 - \beta)}},$$

where $t$ is the true label realization. Notice that the two Bernoulli variables are the two firms' signals given the true realization. For simplicity, we assume that $\theta_1=\theta_0$, and denote the correlation simply by $\theta$.  

\begin{claim}\label{clm:cor-joint-dist}
In the correlation model, knowing the correlation $\theta$ determines the joint distribution of Firm1's signal $A/B$, Firm2's signal $a/b$, and the true realization $0/1$. 
\end{claim}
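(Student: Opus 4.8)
The plan is a straightforward degrees-of-freedom count. Since $Pr[X=\cdot,\, x=\cdot,\, t=\cdot] = Pr[X=\cdot,\, x=\cdot \mid t]\cdot Pr[t]$ and $Pr[t=1]=Pr[t=0]=\tfrac12$ is given, it suffices to determine, for each $t\in\{0,1\}$, the conditional joint distribution of the pair $(X,x)$ given $t$. That conditional law is supported on the four outcomes $\{A,B\}\times\{a,b\}$, so it is a vector of four nonnegative numbers summing to $1$, i.e.\ it carries three degrees of freedom.

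First I would fix the marginals. For $t=1$ we are given $Pr[X=A\mid 1]=\alpha$ and $Pr[x=a\mid 1]=\beta$; for $t=0$, the identities $\alpha = Pr[B\mid 0]$ and $\beta = Pr[b\mid 0]$ give $Pr[X=A\mid 0]=1-\alpha$ and $Pr[x=a\mid 0]=1-\beta$. In a $2\times 2$ table, prescribing the two row/column marginals removes two of the three degrees of freedom, so for each $t$ the conditional joint is pinned down by one remaining scalar; take $q_t \stackrel{\mathrm{def}}{=} Pr[X=A,\, x=a\mid t]$. The other three cells are then obtained by subtraction, e.g.\ at $t=1$: $Pr[X=A,\,x=b\mid 1]=\alpha-q_1$, $Pr[X=B,\,x=a\mid 1]=\beta-q_1$, and $Pr[X=B,\,x=b\mid 1]=1-\alpha-\beta+q_1$, and analogously at $t=0$ with $1-\alpha$ and $1-\beta$ replacing $\alpha$ and $\beta$.

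Finally I would invert the definition of the correlation. For each $t$, the expression defining $\theta_t$ is an affine function of $q_t$ whose slope $\big(\alpha(1-\alpha)\beta(1-\beta)\big)^{-1/2}$ is nonzero whenever $\alpha\neq 1$ (the case $\alpha=1$ is degenerate: firm $1$'s signal then equals $t$ deterministically and the joint is fixed by $\alpha,\beta$ alone). Hence the equation $\theta_t=\theta$ has a unique solution $q_t$ expressed in terms of $\theta,\alpha,\beta$, and by the previous paragraph this determines the entire conditional law at $t$. Applying this at $t=0$ and $t=1$ --- with the single number $\theta$, since $\theta_0=\theta_1=\theta$ by assumption --- and then multiplying by $Pr[t]=\tfrac12$ recovers the full joint distribution, as claimed.

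There is no real obstacle in this argument; it is essentially bookkeeping. The two places that deserve a moment's care are (i) reading off the $t=0$ marginals as the complements of $\alpha$ and $\beta$, and (ii) recognizing that we need not verify that the four resulting probabilities are nonnegative: the claim only asserts that $\theta$ \emph{determines} the joint, not that every $\theta\in[-1,1]$ is feasible, so the (separate, easy) feasibility question can be set aside here.
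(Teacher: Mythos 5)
Your proposal is correct and follows essentially the same route as the paper's proof: fix the conditional marginals from $\alpha,\beta$, invert the (affine) correlation formula to recover $Pr[X=A \land x=a \mid t]$, and fill in the remaining cells of each $2\times 2$ table by subtraction, then handle $t=0$ symmetrically. Your added remarks on the degenerate case $\alpha=1$ and on not needing to check feasibility of $\theta$ are harmless elaborations beyond what the paper states.
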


\begin{proof}
To see that the correlation determines the joint distribution in our setting, recall that for the Bernoulli variables in our settings, the Pearson correlation under label $1$ satisfies $\theta = \frac{Pr[X=A \land x = a | 1] - \alpha \beta}{\sqrt{\alpha (1-\alpha) \beta (1 - \beta)}}$. Thus, given $\theta$, $\alpha$, and $\beta$ we have 
\begin{equation}
\label{eq:Aa1_formula}
\begin{split}
& Pr[X=A \land x = a | 1] \\
& = \sqrt{\alpha \beta}\left(\sqrt{\alpha \beta} + \theta \cdot \sqrt{(1-\alpha)(1-\beta)}\right).
\end{split}
\end{equation}
This then determines $Pr[X=A \land x = b | 1] = \alpha - Pr[X=A \land x = a | 1], Pr[X=B \land x = a | 1] = \beta - Pr[X=A \land x = a | 1]$, and $Pr[X=B \land x = b | 1] = 1 - Pr[X = A \land x = a | 1] - Pr[X = A \land x = b | 1] - Pr[X = B \land x = a | 1]$. That is, it fully determines the joint distribution. 
For example, when $\alpha = \beta$ and $\theta = 0$ (i.e., the signals are conditionally independent), we have $Pr[X=A \land x = a | 1] = \alpha^2$, and when $\alpha = \beta$ and $\theta = 1$, we have $Pr[X=A \land x = a | 1] = \alpha$.
Finally, a symmetric argument holds under label 0.
\end{proof}

%A specific correlation $\theta_t$ and accuracy rates $\alpha = Pr[X = A | t]$ and $\beta = Pr[x = a | t]$ fully determine the joint distribution of the pair of signals under label $t$, as we show in %Claim~\ref{clm:cor-joint-dist} 
%Claim~1
%in 
%Appendix~\ref{sec:KnownArbitrarySA}
%Appendix~A. %(as demonstrated in Figure~\ref{})

%Although there is a degree of freedom in how the partition elements $A_w^t, B_w^t, a_w^t, b_w^t$ are represented, this does not impact the Bayesian posterior nor the firms' behavior, and we assume w.l.o.g.\ each $\theta$ maps to a specific world model $w$. %interval structure of the prior, as demonstrated in Figure~\ref{}. 

When the correlation is known, there is no added value in sharing $w_i$, since the world model $w$ is already known to both firms. Therefore, no-sharing is equivalent to train-sharing, and infer-sharing is equivalent to full-sharing. The only question is, which of these contracts, if any, is IRPO? %We have the following result:

%\ronen{Maybe it's important to point out that here there is no point to sharing the training signals, since everything is already common knowledge ($W$ is a singleton). So in this warm-up, we are back to the full vs.\ no-sharing dichotomoy.}

\begin{restatable}[]{lemma}{knownSA}
    With known correlation and symmetric significant-action utilities,  only no-sharing and the equivalent train-sharing are IRPO. The unique equilibrium under these contracts has two regimes: A high $\beta$ regime where both firms play by their inference-time signals, and a low $\beta$ regime where Firm 2 ``gives in'' and always takes action $0$, while Firm 1 matches its action to its inference-time signal.
\label{lem:KnownArbitrarySA}
\end{restatable}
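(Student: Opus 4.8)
The plan is to analyze the (unique) Nash equilibrium of the no-sharing game under symmetric significant-action utilities ($R_0=C_0=0$, $R_1=C_1=1$, $Pr[0]=Pr[1]=\tfrac12$) and then argue that among the four contracts, only no-sharing / train-sharing are IRPO. Since the correlation $\theta$ is known, by Claim~\ref{clm:cor-joint-dist} the entire joint distribution is common knowledge, so $w_i$ carries no information beyond the prior; hence $ns = ts$ and $is = fs$, and the whole argument reduces to comparing the no-sharing game to the infer-sharing game.

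First I would pin down the no-sharing equilibrium. A strategy for firm~$i$ maps its inference-time signal to $\{0,1\}$; since action $0$ is safe with utility $0$, the only nontrivial question is when a firm takes action $1$. I would compute the expected utility of firm~1 playing $1$ when it sees signal $A$, assuming firm~2 plays by its signal: the firm gets $R_1=1$ on the event ``$\tlabel=1$ and firm~2 plays $0$'', gets $\tfrac12$ on ``$\tlabel=1$ and firm~2 plays $1$'', pays $C_1=1$ on ``$\tlabel=0$ and firm~2 plays $0$'', and pays $\tfrac12$ on ``$\tlabel=0$ and firm~2 plays $1$'' — each probability read off from the joint distribution as a function of $\alpha,\beta,\theta$. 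Comparing this to $0$ gives a threshold condition. Because $\alpha\ge\beta\ge\tfrac12$ and the cost equals the reward, the key phenomenon is that when $\beta$ is small, firm~2's signal is so weak relative to the competitive split that firm~2 cannot profitably play $1$ (its positive signal $a$ does not overcome the $\tfrac12$-split penalty), so firm~2 ``gives in'' and always plays $0$; firm~1 then faces no competition and simply plays its own signal, which is profitable since $\alpha>\tfrac12$. When $\beta$ is large enough, both firms playing their signals is self-consistent. I would verify both regimes are genuine equilibria and rule out the remaining candidate profiles (firm~1 constant, or firm~2 following while firm~1 gives in — the latter impossible since $\alpha\ge\beta$), establishing uniqueness. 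The threshold in $\beta$ (depending on $\alpha,\theta$) is exactly the boundary between the two regimes.

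Next I would handle optimality. The crucial structural fact — flagged in the introduction's sketch of Lemma~\ref{lem:KnownArbitrarySA} — is that under full/infer-sharing, when both firms see both inference-time signals, a negative primary-firm signal $B$ overshadows a positive secondary-firm signal $a$, so in equilibrium both firms act on firm~1's signal alone. I would verify this by the same expected-utility computation: conditioned on $(X,x)$, check that the sign of the optimal action is governed by $X$. Given that, the two firms' actions always coincide, so the total ex-post surplus is split, and the aggregate expected utility equals $Pr[\text{firm 1's signal correct}] - Pr[\text{firm 1's signal wrong}] = \alpha - (1-\alpha) = 2\alpha-1$, a constant independent of the contract (in the regime where firm~1 acts on its signal). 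I would then argue: (i) infer-sharing cannot strictly Pareto-dominate no-sharing, because the aggregate is (weakly) the same but the secondary firm, now fully informed, can guarantee itself at least what the primary firm gets, squeezing the primary firm below its no-sharing payoff — so infer-sharing (hence full-sharing) is not always IR, i.e., not IRPO; (ii) no-sharing is trivially always IR as the default and is Pareto optimal because no contract dominates it. This yields that only $ns$ and $ts$ are IRPO.

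The main obstacle I anticipate is the ``overshadowing'' step under infer-sharing: showing cleanly, for all $(X,x)$ and all admissible $(\alpha,\beta,\theta)$ with $\alpha\ge\beta\ge\tfrac12$, that the equilibrium collapses to ``follow firm~1's signal'' — in particular handling the $(B,a)$ quadrant where the two signals conflict, and confirming there is no other equilibrium (e.g.\ a mixed or asymmetric one) that would give the primary firm more. I would address this by writing the conditional expected utilities explicitly in terms of $Pr[X=A\wedge x=a\mid t]$ from Eq.~\ref{eq:Aa1_formula} and its label-$0$ analogue, and checking the inequalities monotonically in $\theta$. A secondary subtlety is being careful about the low-$\beta$ regime on the infer-sharing side — there firm~1 acts on its signal and firm~2 mirrors it, so the comparison is firm~1's no-sharing payoff (full $R_1$ on good loans, since firm~2 is out) versus its infer-sharing payoff (split on good loans); this is where individual rationality visibly fails for the primary firm, and I would make sure the bookkeeping of ``who is in the market'' is consistent across contracts.
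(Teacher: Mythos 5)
Your proposal follows essentially the same route as the paper's proof: reduce to comparing no-sharing with full-sharing (since known correlation makes $ns=ts$ and $is=fs$), show the full-sharing equilibrium collapses to both firms following Firm~1's signal (because $Pr[Ab\mid 1]-Pr[Ab\mid 0]=\alpha-\beta\ge 0$ while the $Ba$ inequality is reversed), derive the two-regime no-sharing equilibrium from the threshold $3\beta-\alpha-1\ge 0$, and conclude that Firm~1's no-sharing utility weakly exceeds its full-sharing utility in both regimes. The only quibble is an arithmetic slip in the sketch: the aggregate equilibrium utility under full-sharing is $(2\alpha-1)/2$, not $2\alpha-1$ (you dropped the prior weight $Pr[t]=\tfrac12$), but this does not affect the comparison.
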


This matches what we learned to expect in practice: Firms develop their own classification models, and, assuming they are accurate enough, predict according to them. In Section~\ref{subsec:unknownCorr} we show, however, that once the correlation is not known with certainty, this conclusion may change, and full-sharing or train-sharing contracts may be uniquely IRPO. 

%\ronen{Is there a reason we have the result about independence here, rather than the general result with arbitrary correlation? Why don't we replace Lemma 1 with Lemma 6?}

We also note that the threshold that separates the high and low $\beta$ regime is itself dependent on $\alpha$. The higher $\alpha$ is, the higher the threshold for the high beta regime, where Firm 2 follows its prediction signal. I.e., fixing Firm 2's prediction accuracy $\beta$, the firm is more likely to give in the higher Firm 1's prediction accuracy $\alpha$ is. 

Lemma~\ref{lem:KnownArbitrarySA} deals with symmetric significant-utilities. In the \textit{asymmetric} case, with a
higher cost for a mistake in the significant action $C_1$, but not so high as to prohibit ever taking a significant action altogether, the firms would prefer full-sharing, which enables them to take the significant action only when both receive positive signals. We show this in Appendix~\ref{sec:GeneralIndependent}.

%In 
%Appendix~\ref{sec:GeneralIndependent}
%Appendix~A, we show that the conclusions can also change in the known-correlation case under different utility structures. When the cost of an incorrect significant action is high (but not too high), full-sharing is IRPO. 
%In Appendix~\ref{sec:IndependentMatching}, we show that with known correlation and matching-recommendation utilities, no-sharing is the unique feasible contract for any $\alpha, \beta$ values, albeit with a different analysis and equilibria. 

\subsection{Unknown Correlation}
\label{subsec:unknownCorr}

 So far we have considered \textit{known} correlations. However, a more natural model is that the correlation is \textit{unknown}, and only some distribution over it is known. As we will see, this model can give rise to train-sharing as uniquely IRPO.

We begin with some preliminary lemmas. First, we show that within the specification of this subsection, full-sharing always Pareto dominates infer-sharing.
 \begin{restatable}[]{lemma}{fullDominatesInfer}
    \label{lem:full_dominates_infer}
 For any distribution $\pi_\theta$ over correlations and any $R_1$ and $C_1$, $fs \succeq is$. 
\end{restatable}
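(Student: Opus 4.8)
The plan is to show that any equilibrium outcome achievable under infer-sharing is also achievable under full-sharing, so that the full-sharing equilibrium that mimics the infer-sharing one dominates it for both firms simultaneously — and hence $fs \succeq is$ in the Pareto-dominance sense defined above (there exists an $fs$-equilibrium that is at least as good as every $is$-equilibrium). The key structural observation is that under this subsection's specification, the individual training signals $w_1$ and $w_2$ carry \emph{no} Bayesian information beyond the common prior: the prediction accuracies $\alpha, \beta$ are common knowledge, the label prior is fixed at $\tfrac12$, and the only unknown is the correlation $\theta$, which (by assumption) is drawn from a commonly known $\pi_\theta$ and is not revealed by a firm observing its own classifier. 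So conditioning on $(w_1,w_2)$ is the same as conditioning on the prior, for both firms, in every contract of this section. I would first state and justify this ``uninformative training signal'' fact carefully, since it is what makes $ts$-type information worthless here and is the engine of the whole argument.

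Given that, the signal profiles reduce as follows: under infer-sharing firm $i$ effectively sees $(X,x)$, and under full-sharing firm $i$ sees $(w_1,w_2,X,x)$, which by the uninformativeness fact is informationally equivalent to $(X,x)$. Hence the two contracts induce the \emph{same} Bayesian game in the action stage — same information partitions, same posteriors, same payoffs. Therefore every equilibrium of the infer-sharing game is an equilibrium of the full-sharing game with identical utilities, and in particular the set of equilibria coincides. Pick any $is$-equilibrium $s^*$; the same strategy profile is an $fs$-equilibrium yielding exactly $u_i^{is}(w_i,s^*) = u_i^{fs}(w_i,s^*)$ for both $i$, which certainly dominates (weakly) every $is$-equilibrium — in particular it dominates $s^*$ itself and any other, so by taking, say, the best $is$-equilibrium for the pair (or simply noting equality of the equilibrium sets) we get $fs \succeq is$ at every $(w_1,w_2)$, as required.

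The main subtlety — and the step I'd be most careful about — is the precise reading of the Pareto-dominance definition: it requires \emph{one} $fs$-equilibrium that beats \emph{all} $is$-equilibria. Since the two equilibrium sets are literally the same here, any single $is$-equilibrium re-interpreted under $fs$ has the same payoff vector as itself; to beat \emph{every} $is$-equilibrium one should either invoke that the $is$-equilibrium outcome is essentially unique in payoff terms in this setting (which the earlier analysis of significant-action utilities in Section~\ref{subsec:unknownCorr} would support), or, more cleanly, argue that because the games are identical, for the particular $fs$-equilibrium that coincides with the Pareto-best $is$-equilibrium, domination over all $is$-equilibria is immediate. I would therefore organize the write-up as: (1) uninformative-training-signal lemma; (2) informational equivalence of $\sigma_i^{is}$ and $\sigma_i^{fs}$; (3) equality of the induced Bayesian games and hence of equilibrium sets; (4) conclude Pareto dominance, handling the ``for every $is$-equilibrium'' quantifier via the coincidence of equilibrium sets. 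No hard computation is needed; the content is entirely in the informational reduction.
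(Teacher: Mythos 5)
There is a genuine gap, and it is located exactly at the step you flagged as the ``engine of the whole argument.'' You claim that conditioning on the \emph{pair} $(w_1,w_2)$ is the same as conditioning on the common prior, so that $\sigma_i^{fs}=(w_1,w_2,X,x)$ is informationally equivalent to $\sigma_i^{is}=(X,x)$ and the two contracts induce the same Bayesian game. That is false in this model. What is true (and what the paper's proof and its appendix construction establish) is that each \emph{individual} training signal $w_i$ is uninformative about the correlation $\theta$; but the \emph{joint} pair $(w_1,w_2)$ --- the two firms' interval structures together --- does reveal $\theta$ (or at least refines the prior $\pi_\theta$). Indeed, the entire point of train-sharing in Section~\ref{subsec:unknownCorr} is that by pooling training signals the firms learn how correlated their classifiers are; this is what drives Theorem~\ref{thm:train-sharing-example}. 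If your equivalence held, you would have proved $fs=is$ (each dominates the other), which contradicts the paper's later observation that full-sharing typically strictly Pareto dominates infer-sharing.

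The paper's actual argument keeps the part of your reasoning that is correct --- under infer-sharing each firm's own $w_i$ can be ignored, so both firms condition only on $(X,x)$ and play a symmetric equilibrium given the prior $\pi_\theta$ --- but then observes that under full-sharing both firms additionally learn $\theta$ in the training stage and play the symmetric pointwise-optimal rule ``predict $1$ iff $\Pr[1\mid Xx,\theta]-C_1\Pr[0\mid Xx,\theta]\geq 0$'' \emph{for each realized} $\theta$. Because both games are symmetric (both firms hold identical information and split the payoff), each reduces to a single-decision-maker problem, and the comparison is a standard value-of-information inequality: $u_i^{is}=E_{\theta}\bigl[E[u_i(s(X,x),t,s(X,x))\mid\theta]\bigr]\leq E_{\theta}\bigl[E[\max_s u_i(s(X,x),t,s(X,x))\mid\theta]\bigr]=u_i^{fs}$, where the $is$-strategy $s$ is constrained to be measurable only with respect to $(X,x)$ while the $fs$-strategy may also depend on $\theta$. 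Your write-up needs this refinement-of-information step in place of the claimed game equivalence; without it the proof does not go through.
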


\begin{proof}
 
 In the correlation model, the private signal $w_i$ a firm gets during the training phase does not impact its posterior regarding the correlation $\theta$, which follows the distribution $\Theta$. Thus under infer-sharing, where each firm $i$ only sees $w_i$, we can ignore it, and we have $\sigma_1^{is} = \sigma_2^{is} = Xx$ for some pair of inference-time signal $X, x$. 
 
Thus, we can conclude that the infer-sharing equilibrium is symmetric between the firms. That is since as we argue above, the posterior for both firms after the training phase stays the same as the common prior. In the inference phase, both firms share their signals, and so both firms end up with the exact same information. Both firms' equilibrium strategy is to predict $1$ if and only if 
\[
\begin{split}& E_{\theta \sim \Theta}[Pr[1 | X = x_1, x = x_2, \theta] \\
& - C_1\cdot Pr[0 | X = x_1, x = x_2, \theta]] \geq 0.\end{split}\]
 
 Under full-sharing, a similar argument shows that for every pair of signals $Xx$ and correlation $\theta$ (which both firms learn during the training phase), the symmetric equilibrium strategy is to predict $1$ if and only if $Pr[1 | X = x_1, x = x_2, \theta] - C_1\cdot Pr[0 | X = x_1, x = x_2, \theta]$. 
 
 We can thus write, for the symmetric equilibrium strategies $s \stackrel{def}{=} s_1 = s_2$ of the infer-sharing contract,
 \[
 \begin{split}
     & u_i^{is} = E[u_i(s(X,x),t,s(X,x)] \\
     & = E_{\theta \sim \Theta} [E[u_i(s(X,x),t,s(X,x) | \theta]]  \\
 & \leq E_{\theta \sim \Theta} [E[\max_s u_i(s(X,x),t,s(X,x) | \theta]] = u_i^{fs}. 
 \end{split}
 \]
 \end{proof}

%\yotam{It is $\succeq$, actually}
%\ronen{Just wanted to verify that there is indeed {\em strict} dominance in the lemma above, and that it shouldn't be $\succeq$. Also, let's not normalize $R_1$ to 1.}
  Next, we see that train-sharing and no-sharing contracts are equivalent under sufficient symmetry.
 \begin{restatable}[]{lemma}{trainEqNoFirst}
    \label{lem:symmetric_reward_train_eq_no}
 If $R_1=C_1$ then $ts=ns$.
\end{restatable}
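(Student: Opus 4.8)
The plan is to show that, when $R_1=C_1$, the no-sharing and train-sharing contracts induce the same equilibrium behaviour, hence the same utilities, so that each Pareto dominates the other. (Write $C_1$ for the cost magnitude, so $u_i(1,0,0)=-C_1$; the case $R_1=C_1=0$ is trivial, so assume $R_1=C_1>0$.)

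\emph{Step 1: reduce the best-response condition.} Under significant-action utilities the safe action $0$ always yields $0$, so firm $i$ with information $\sigma_i$ weakly prefers action $1$ to action $0$ against an opponent's realized action $a_{\neg i}$ exactly when $E[u_i(1,\tlabel,a_{\neg i})\mid\sigma_i]\ge 0$. Expanding with $u_i(1,1,0)=R_1$, $u_i(1,0,0)=-C_1$, $u_i(1,\tlabel,1)=\tfrac12 u_i(1,\tlabel,0)$ and substituting $R_1=C_1$ rewrites this as
\[ 2\,Pr[\tlabel=1\mid\sigma_i]-1\ \ge\ \tfrac12\big(Pr[\tlabel=1,\,a_{\neg i}=1\mid\sigma_i]-Pr[\tlabel=0,\,a_{\neg i}=1\mid\sigma_i]\big), \]
i.e.\ the stand-alone incentive to act must beat a correction for the event that the opponent also acts.

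\emph{Step 2: the correlation drops out --- the crux.} I would then prove that, in the correlation model, \emph{both sides of this inequality are independent of the realized correlation $\theta$} whenever $a_{\neg i}$ is a function of the opponent's inference-time signal alone. The left side is immediate: firm $i$'s posterior on $\tlabel$ given only its own inference-time signal is $\alpha$ or $1-\alpha$ (resp.\ $\beta$ or $1-\beta$), independent of $\theta$, since the marginal accuracies are common knowledge and hold in every world. For the right side, expand $Pr[\tlabel=\tau,a_{\neg i}=1\mid\sigma_i]=\sum_y s_{\neg i}(y)\,Pr[\tlabel=\tau,\,y\mid\sigma_i]$ over the opponent's signal values $y$; using $Pr[\tlabel=0]=Pr[\tlabel=1]=\tfrac12$ together with the equal-error-rate assumptions, each difference $Pr[\tlabel=1,y\mid\sigma_i]-Pr[\tlabel=0,y\mid\sigma_i]$ reduces to a difference of conditional second moments of the signal pair under the two labels. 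For instance, with $\mu_\tau:=Pr[X=A\wedge x=a\mid\tlabel=\tau]$ as in Claim~\ref{clm:cor-joint-dist}, one gets $Pr[\tlabel=1,x=a\mid X=A]-Pr[\tlabel=0,x=a\mid X=A]=\mu_1-\mu_0=\alpha\beta-(1-\alpha)(1-\beta)=\alpha+\beta-1$: the $\theta$-dependent covariance contributions to $\mu_1$ and $\mu_0$ are equal --- this is precisely where $\theta_0=\theta_1$ and $\alpha=Pr[A\mid1]=Pr[B\mid0]$, $\beta=Pr[a\mid1]=Pr[b\mid0]$ (which equalize the per-label conditional variances) enter --- and cancel. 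Pushing this cancellation through every information state of both firms is the main technical step; once it is in place, the best-response map of each firm is visibly $\theta$-free.

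\emph{Step 3: conclude.} Plugging arbitrary opponent behaviour into the Step~1 inequality and using $\alpha\ge\beta\ge\tfrac12$, one checks that ``predict $1$ exactly on signal $A$'' is firm $1$'s essentially dominant strategy; that firm $2$ strictly prefers $0$ on signal $b$ when $\beta>\tfrac12$; and that, against firm $1$'s dominant strategy, firm $2$'s best response on signal $a$ is $1$ if $3\beta>\alpha+1$ and $0$ if $3\beta<\alpha+1$ --- every comparison $\theta$-free by Step~2. So the (essentially unique) equilibrium is the same map from inference-time signals to actions under both contracts: firm $1$ follows its signal, and firm $2$ follows its signal if $3\beta>\alpha+1$ and otherwise ``gives in'' and always plays $0$. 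The only extra information a firm obtains under train-sharing is $\theta$ --- recall from the proof of Lemma~\ref{lem:full_dominates_infer} that $w_i$ alone is uninformative about $\theta$ while $(w_1,w_2)$ pins it down --- and $\theta$ enters no best-response comparison, so train-sharing introduces no further equilibria. Since the joint law of $(X,x,\tlabel)$ is also the same under both contracts (in both, $u_i^{ct}(w_i,\cdot)$ integrates $\theta$ out over $\pi_\theta$), the equilibrium utility profiles coincide, yielding $ts\succeq ns$ and $ns\succeq ts$, i.e.\ $ts=ns$. The knife-edge cases ($\alpha=\beta=\tfrac12$, where every profile is an equilibrium and all give both firms $0$; $3\beta=\alpha+1$, where firm $2$ is indifferent on signal $a$; and $\beta=\tfrac12$, where firm $2$ is indifferent on signal $b$) produce the same family of equilibrium utility profiles under both contracts and so leave the conclusion intact.
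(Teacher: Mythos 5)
Your proof is correct and follows essentially the same route as the paper: the paper's proof also reduces the claim to the observation that the fixed-$\theta$ (train-sharing) best responses are independent of $\theta$ under symmetric utilities, so the same strategy profile is an equilibrium under both contracts with identical utilities; it simply delegates that $\theta$-independence to Lemma~\ref{lem:KnownArbitrarySA} rather than re-deriving the cancellation of the covariance terms as you do in Step~2. Your version is more self-contained and is slightly more careful about multiplicity of equilibria and knife-edge cases, but the underlying argument is the same.
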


 \begin{proof}
 Suppose first that, under train-sharing, the firms follow the same equilibrium strategies $s_1, s_2$ for any realization $\theta \sim \Theta$.
 Then, it must be that, under no-sharing, $s_1, s_2$ is also an equilibrium: This is immediate since the IC conditions of Equation~\ref{eq:conditional_equilibrium_condition} under no-sharing follow immediately if the more granular IC conditions of the same equation under train-sharing are satisfied. 
 %For any alternative strategy $s_1'$ of the primary firm (and similarly for the secondary firm), 
 
 %$$u_1^{no-sharing}(s_1, s_2) = E_{\theta \sim \Theta}[u_1^{\theta}(s_1, s_2)] \geq E_{\theta \sim \Theta}[u_1^{\theta}(s_1', s_2)] = u_1^{no}(s_1', s_2). $$
 
 Now, we know by Lemma~\ref{lem:KnownArbitrarySA} that for any fixed $\theta$ the equilibrium strategies under train-sharing (which are the same as the equilibrium strategies for no-sharing given we know that the correlation is $\theta$) depend only on the values of $\alpha, \beta$, and so are independent of $\theta$. Thus, the same equilibrium strategies are played for any $\theta$. 

%\yotam{The example I mention is NOT good because it uses $\theta = 1$, that can not really happen when $\alpha \neq \beta$. So I need to find a different counter example... But then, another example with $C = \frac{493}{512}, \alpha = \frac{273}{512}, \beta = \frac{257}{512}$ and $\theta = 0$ vs. $\theta = \theta_{max}$ does work}
%In the Mathematica file possibleEqAndEqCombs.nb there is an example with $C = 0.998062, \alpha = 0.501709, \beta = 0.500099$, where for $\theta = 0$ the train-sharing equilibrium is $A \rightarrow 0, a\rightarrow 0$ (and the rest $\rightarrow 1$), and $\theta = 1$ it is $A \rightarrow 0$ (and the rest $\rightarrow 1$). Since the outcome of the latter is better for the primary firm, in a distribution $\Theta$ that puts enough weight on the second case, the no-sharing equilibrium will be $A \rightarrow 0$, but then under train-sharing the primary firm loses utility compared to no-sharing and so $train-sharing \neq no-sharing$. 
 \end{proof}

 %\ronen{We haven't defined equality between contracts. Does it mean that behavior is the same in both, that neither Pareto dominates the other, or that both dominate one another?} \yotam{I defined equivalence as both weakly Pareto dominate each other... I think it should be OK?}
 
  %\begin{restatable}[]{lemma}{trainEqNoSecond}
  %  \label{lem:symmetric_precision_train_eq_no}
 %If $\alpha = \beta$, then $ts = ns$ for any $R_1$ and $C_1$.
%\end{restatable}

  \begin{restatable}[]{lemma}{fullDominatesTrainSymmetric}
    \label{lem:full_dominates_train_symmetric}
  If $\alpha = \beta$, then for any distribution $\pi_\theta$ over correlations and any $R_1$ and $C_1$, $fs \succeq ts$. 
\end{restatable}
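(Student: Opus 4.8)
The plan is to show that the full-sharing equilibrium, for every realized correlation $\theta$, yields each firm weakly more utility than the train-sharing equilibrium, and then integrate over $\theta \sim \pi_\theta$. The key structural fact I would use is that when $\alpha = \beta$ the two firms are symmetric, and the equilibria under both contracts are symmetric between the firms. Under train-sharing the signals are $\sigma_i^{ts}=(w_1,w_2,X)$ for firm 1 and $(w_1,w_2,x)$ for firm 2; but in the correlation model the private training signals carry no information about $\theta$ beyond the common prior, so effectively $\sigma_1^{ts}$ reduces to $(\theta,X)$ once the correlation is revealed (train-sharing reveals $\theta$ exactly, as no-sharing plus full training signals determines the world model), and $\sigma_2^{ts}$ to $(\theta,x)$. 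Similarly under full-sharing the signals reduce to $(\theta,X,x)$ for both firms. So the comparison, for each fixed $\theta$, is between a game where each firm conditions only on its own inference signal and one where both condition on the pair $(X,x)$.

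First I would record that, fixing $\theta$, the train-sharing subgame coincides exactly with the known-correlation no-sharing game analyzed in Lemma~\ref{lem:KnownArbitrarySA}, so it has a unique symmetric-structure equilibrium with the two-regime form described there; and I would likewise pin down the full-sharing subgame equilibrium, which by the argument in the proof of Lemma~\ref{lem:full_dominates_infer} is the symmetric strategy "predict $1$ iff $Pr[1\mid X,x,\theta] - C_1 \cdot Pr[0\mid X,x,\theta] \ge 0$." The central claim is then: for each $\theta$, the per-$\theta$ expected utility $u_i^{fs}(\theta)$ is at least $u_i^{ts}(\theta)$. Because the game is symmetric and both equilibria are symmetric, I can exploit the constant-sum-like structure of significant-action utilities on each outcome: when the two firms' actions agree, the aggregate utility equals the "exclusive" utility ($R_1$ on a correct joint significant action, $C_1$ on an incorrect one, $0$ on the safe action), and when they disagree the one taking the significant action gets the full $R_1$ or $C_1$. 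By symmetry each firm gets exactly half the aggregate utility in the symmetric equilibrium of each contract, so it suffices to show the aggregate expected utility is weakly larger under full-sharing. The full-sharing strategy is, by construction, the one that pointwise maximizes $E[u_i(p,t,p)\mid X,x,\theta]$ over $p\in\{0,1\}$ — exactly the "$\max_s$" inside the expectation in the chain of inequalities at the end of Lemma~\ref{lem:full_dominates_infer}'s proof — whereas the train-sharing strategy is a feasible but coarser (measurable only in $X$, resp.\ $x$) rule. Hence $u_i^{fs}(\theta) = E_{X,x}[\max_p u_i(p,t,p)\mid \theta] \ge E[u_i(s^{ts}(X),t,s^{ts}(X))\mid\theta] = u_i^{ts}(\theta)$, and integrating over $\theta$ gives $u_i^{fs}\ge u_i^{ts}$ for both $i$, i.e.\ $fs \succeq ts$.

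The main obstacle I anticipate is justifying the reduction of the non-symmetric-looking strategy space cleanly: I need that the train-sharing equilibrium is genuinely symmetric and has the Lemma~\ref{lem:KnownArbitrarySA} form, and — more delicately — that comparing "best symmetric response measurable in $(X,x)$" against "equilibrium strategy measurable only in own signal" is legitimate, since the train-sharing strategy is an \emph{equilibrium} strategy, not merely the symmetric-welfare-maximizing one. The point to be careful about is that the symmetric train-sharing equilibrium need not maximize aggregate welfare among $(X)$-measurable rules; but it does not have to — the inequality I need only compares it to the \emph{full-information pointwise maximizer}, which dominates \emph{any} strategy (symmetric or not, coarse or fine) in the sense $E[u_i(s(X,x),t,s(X,x))] \le E_{X,x}[\max_p u_i(p,t,p)]$, simply because the outcome-utility when both play $p$ is $u_i(p,t,p)$ and we are taking the better of $p\in\{0,1\}$ pointwise. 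So the chain goes through for any train-sharing equilibrium strategy whatsoever, which is exactly what Pareto dominance "for every equilibrium $s'$ under $ts$" requires. I would also double-check the edge case where the train-sharing equilibrium has firm 2 "give in" (the low-$\beta$ regime): there the argument is unaffected since that strategy is still just a particular $(x)$-measurable rule bounded above by the pointwise maximizer.
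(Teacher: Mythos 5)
Your main line of argument is essentially the paper's: both it and you reduce to a per-$\theta$ comparison, observe that under $\alpha=\beta$ the symmetric equilibrium gives each firm half of the aggregate utility, and then bound the train-sharing aggregate $E\big[\mathbf{1}[s_1(X)=1 \lor s_2(x)=1]\,(R_1\Pr[1\mid X,x,\theta]+C_1\Pr[0\mid X,x,\theta])\big]$ by the pointwise positive part, which is exactly what the full-sharing symmetric equilibrium attains. This is the same chain the paper writes out explicitly in the proof of Theorem~\ref{thm:full-sharing-symmetric-prediction}, and it is correct for symmetric train-sharing equilibria. (A minor inaccuracy: Lemma~\ref{lem:KnownArbitrarySA} characterizes the two-regime equilibrium only for the \emph{symmetric} significant-action utilities $R_1=C_1$, so you cannot import its exact form for general $R_1,C_1$; but your argument never actually uses that form, so this is cosmetic.)

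The genuine gap is in your treatment of the asymmetric ``firm 2 gives in'' equilibrium. There your claim that ``the argument is unaffected since that strategy is still just a particular $(x)$-measurable rule bounded above by the pointwise maximizer'' does not close the case: bounding the \emph{aggregate} only bounds each firm's utility when the split is $50$--$50$, and in the asymmetric profile firm 2 gets $0$ while firm 1 captures essentially the entire aggregate. Since $u_1^{fs}$ is only \emph{half} of the full-sharing aggregate, it is not a priori excluded that firm 1 strictly prefers the asymmetric train-sharing equilibrium to full-sharing --- indeed this is precisely the mechanism by which train-sharing becomes uniquely IRPO in Theorem~\ref{thm:train-sharing-example} when $\alpha\neq\beta$. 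To rule it out when $\alpha=\beta$ you need an extra ingredient: firm 2's no-deviation condition (the inequality stating that $a\to 1$ is unprofitable against $A\to 1$), which by the exchangeability of $Ab$ and $Ba$ under $\alpha=\beta$ lets you subtract a nonpositive term from $u_1^{ts}$ and land back under $\tfrac{1}{2}E[(\cdot)^+]$. This is exactly the argument the paper carries out for the asymmetric no-sharing equilibrium in the proof of Theorem~\ref{thm:full-sharing-symmetric-prediction} (its Equation for the asymmetric case), applied per realized $\theta$; the paper's own one-paragraph proof of Lemma~\ref{lem:full_dominates_train_symmetric} sidesteps the issue by simply assuming a symmetric equilibrium, so your proposal is no less complete than the printed proof, but the justification you give for the edge case is wrong as stated and needs this replacement.
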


 Finally, we can use the lemmas above to identify IRPO contracts. The two theorems below show that, under sufficient symmetry, only full-sharing or no-sharing are such contracts.
 
  \begin{restatable}[]{theorem}{NoSharingSymmetricSA}
    \label{thm:no-sharing-symmetric-sa}
If $R_1=C_1$ then no-sharing is uniquely IRPO.
\end{restatable}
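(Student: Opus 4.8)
The plan is to show that when $R_1 = C_1$, every sharing contract is Pareto dominated (weakly) by no-sharing, so that no-sharing is the only Pareto-optimal always-IR contract. I would organize the argument around the four contracts $ns, ts, is, fs$ and use the preliminary lemmas already established. First, by Lemma~\ref{lem:symmetric_reward_train_eq_no}, the hypothesis $R_1 = C_1$ gives $ts = ns$, so train-sharing collapses to no-sharing and cannot be a distinct IRPO contract. It remains to handle $is$ and $fs$ and to verify that no-sharing itself is always IR (which is immediate, since $ns$ is the default contract and IR-ness is defined reflexively for it) and Pareto optimal.

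The crux is to show $ns \succeq fs$, i.e., that full-sharing never makes both firms better off than no-sharing. Here I would follow the two insights flagged in the introduction. Insight (1): under full-sharing, both firms know $\theta$ (and $\alpha = \beta$ is \emph{not} assumed here, so I must be careful), see both inference-time signals $X$ and $x$, and — because the symmetric equilibrium strategy depends only on the posterior $Pr[1 \mid X=x_1, x=x_2, \theta]$ versus $C_1 \cdot Pr[0 \mid \cdots]$ — the equilibrium is symmetric and both firms take the same action whenever that action is the significant action $1$. I need to argue that, with $C_1 = R_1$, the threshold rule reduces to "predict $1$ iff $Pr[1 \mid \text{signals}, \theta] \geq \tfrac12$", and that a negative signal from the less-accurate firm can only pull the posterior the "wrong" way relative to the more-accurate firm's signal in the sense that both firms end up effectively following Firm 1's signal. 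Insight (2): once both firms follow the same rule, the summed ex-post utility on any example is constant — it equals $R_1$ times the probability Firm 1's prediction is correct (using $R_1 = C_1$ so that a correct exclusive action earns $R_1$ and an incorrect one loses $C_1 = R_1$, and the split halves both) — independent of the contract. Hence $u_1^{fs} + u_2^{fs} = u_1^{ns} + u_2^{ns}$, so full-sharing cannot Pareto-dominate no-sharing unless it is equivalent; and under no-sharing the primary firm already attains its share, so $fs$ gives the secondary firm (weakly) more only by giving the primary firm (weakly) less. Therefore $ns \succeq fs$, in fact typically $ns \succ fs$ at parameters where Firm 2 is informative.

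For infer-sharing: by Lemma~\ref{lem:full_dominates_infer}, $fs \succeq is$, and combined with $ns \succeq fs$ we get $ns \succeq is$. This disposes of all three non-default contracts. Finally, to conclude \emph{uniqueness} I must observe that no-sharing is itself Pareto optimal: no other contract strictly dominates it, precisely because each of $ts, is, fs$ is (weakly) dominated \emph{by} it, so none can strictly dominate it. Assembling: $ns \succeq ts, ns \succeq is, ns \succeq fs$, and $ns$ is always IR by definition, so $ns$ is IRPO; and since every other contract is weakly dominated by $ns$ and hence not Pareto optimal (or equivalent to $ns$), $ns$ is the unique IRPO contract.

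The main obstacle I anticipate is the careful proof of the "constant aggregate utility" claim together with the precise structure of the full-sharing equilibrium when $\alpha \neq \beta$: I need to nail down that with $R_1 = C_1$ the firms' common threshold strategy under full-sharing makes $u_1^{fs} + u_2^{fs}$ depend only on $\alpha$ (the accuracy of the primary firm), and to rule out asymmetric pure equilibria under full-sharing that might break the symmetry argument. I would address the latter by invoking that the per-signal best response is a threshold on a posterior that is common to both firms, so any pure equilibrium must be symmetric, and then compute the aggregate utility directly from the joint distribution given by Claim~\ref{clm:cor-joint-dist}. The rest is bookkeeping with the definitions of Pareto dominance and individual rationality.
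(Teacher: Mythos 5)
Your skeleton matches the paper's: collapse $ts$ into $ns$ via Lemma~\ref{lem:symmetric_reward_train_eq_no}, eliminate $is$ via Lemma~\ref{lem:full_dominates_infer}, observe that the full-sharing equilibrium has both firms follow Firm 1's signal for every $\theta$, and then compare $fs$ with $ns$ through Firm 1's utility. But the step you yourself flag as the crux is wrong: the aggregate utility is \emph{not} contract-independent. Under no-sharing in the regime where Firm 2 follows its own signal (when $3\beta-\alpha-1\geq 0$), the sum of utilities is $\frac{\alpha+\beta-1}{2}$, whereas under full-sharing (both firms following $X$) it is $\frac{2\alpha-1}{2}$; these differ by $\frac{\alpha-\beta}{2}>0$ whenever $\alpha>\beta$. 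So the conservation argument ``the sum is fixed and Firm 2 gains, hence Firm 1 loses'' does not go through --- full-sharing actually has the \emph{larger} aggregate, which by itself leaves open the possibility that both firms gain. The paper instead computes $u_1^{ns}$ explicitly in both no-sharing regimes in the proof of Lemma~\ref{lem:KnownArbitrarySA} (it equals $\frac{2\alpha-1}{2}$ or $\frac{3\alpha-\beta-1}{4}$, each at least $\frac{2\alpha-1}{4}=u_1^{fs}$) for each fixed $\theta$, notes that neither the equilibrium strategies nor $u_1^{fs}$ depend on $\theta$, and integrates over $\theta\sim\Theta$. You need that direct computation; the sum argument cannot substitute for it.

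A second, related problem is that your conclusion ``$ns \succeq fs$'' misuses the paper's Pareto relation. Since $u_2^{fs}\geq u_2^{ns}$ (typically strictly --- Firm 2 is the beneficiary of sharing), $ns$ does not Pareto dominate $fs$; the two contracts are Pareto-incomparable, and $fs$ in fact \emph{is} Pareto optimal. What $fs$ fails is individual rationality: it does not Pareto dominate $ns$ because $u_1^{ns}\geq u_1^{fs}$. Consequently your closing step (``every other contract is weakly dominated by $ns$ and hence not Pareto optimal'') rests on a false premise, and the chain $ns\succeq fs\succeq is$ does not exist. The correct bookkeeping is: $ts=ns$, while $fs$ and (via $fs\succeq is$) $is$ fail to be always IR, leaving $ns$ as the only contract that is both always IR and Pareto optimal.
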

 
 %\deleted{Notice that Theorem generalizes Lemmas~\ref{lem:independentSA},\ref{lem:KnownArbitrarySA} to the unknown correlation case.}
 
   \begin{restatable}[]{theorem}{FullSharingSymmetricPrediction}
    \label{thm:full-sharing-symmetric-prediction}
 If $\alpha = \beta$ then full-sharing is either uniquely IRPO, or $fs=ns$ are the only IRPO.
\end{restatable}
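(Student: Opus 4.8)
The plan is to first reduce the four-contract comparison to the single statement $fs\succeq ns$. By Lemma~\ref{lem:full_dominates_infer} we have $fs\succeq is$, and by Lemma~\ref{lem:full_dominates_train_symmetric} (which applies because $\alpha=\beta$) we have $fs\succeq ts$. A contract that is Pareto optimal cannot be strictly Pareto dominated, so if $is$ or $ts$ were Pareto optimal it would have to be Pareto-equivalent to $fs$; hence, up to equivalence, the only candidates for an IRPO contract are $ns$ and $fs$. Once we also establish $fs\succeq ns$ (so that $fs$ is always IR, hence IRPO), the argument closes: either $fs\succ ns$ strictly, in which case $ns$ is not Pareto optimal and $fs$ is the unique Pareto-optimal contract and thus uniquely IRPO; or $fs=ns$, in which case both are Pareto optimal and always IR (the second by default), and by the above nothing else is, so $fs=ns$ are the only IRPO. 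So the whole theorem reduces to proving $fs\succeq ns$ when $\alpha=\beta$.

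For $fs\succeq ns$, fix training signals $(w_1,w_2)$. In the correlation model the training signals are payoff-irrelevant under $ns$ (they reveal nothing about $\theta$, exactly as in the proof of Lemma~\ref{lem:full_dominates_infer}), while under $fs$ they reveal $\theta$; so the comparison depends only on $\alpha$ and on $\pi_\theta$. The key leverage is the ``split'' assumption: under significant-action utilities, for \emph{any} pure action profile and any contract the \emph{sum} of the two firms' utilities equals $E\big[\phi\cdot\mathbf 1[\text{some firm takes the significant action}]\big]$, where $\phi=E[u_i(1,t,0)\mid\theta,X,x]$ is the expected ex-post utility of taking the significant action alone given the joint information; this is at most $V\stackrel{\mathrm{def}}{=}E[\max(0,\phi)]$, a quantity that depends only on the joint distribution. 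Under $fs$ both firms observe $(\theta,X,x)$, and the symmetric profile in which each firm takes the significant action iff $\phi\ge 0$ is an equilibrium (a positive-$\phi$ action is still profitable after the $\tfrac12$ split, and a negative-$\phi$ action is never a best response); it attains total utility exactly $V$, and by symmetry each firm gets $V/2$. Thus $u_1^{fs}=u_2^{fs}=V/2$.

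It therefore suffices to show that in \emph{every} no-sharing equilibrium each firm's payoff is at most $V/2$. Since the sum of the firms' payoffs in any no-sharing equilibrium is $\le V$ by the identity above, this follows once we know that every no-sharing equilibrium is symmetric in payoffs: then $u_1^{ns}=u_2^{ns}=\tfrac12(u_1^{ns}+u_2^{ns})\le V/2=u_i^{fs}$, giving $fs\succeq ns$. This equilibrium characterization is the step I expect to be the main obstacle. It should go through as follows: since each firm observes only its own binary signal, its best response to any strategy is a threshold rule on that signal, so there are only finitely many pure profiles to check; and because $\alpha=\beta$ makes the game symmetric between the firms, a case analysis over these profiles---in the spirit of Lemma~\ref{lem:KnownArbitrarySA} and the reduction in Lemma~\ref{lem:symmetric_reward_train_eq_no}, but carried out for general $R_1,C_1$---shows that every equilibrium yields equal payoffs. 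The delicate points are (i) ruling out asymmetric no-sharing equilibria in which one firm's payoff exceeds $V/2$, i.e., making sure the symmetric full-sharing equilibrium dominates \emph{all} no-sharing equilibria rather than only the symmetric one, and (ii) tie-breaking on the boundary $\phi=0$, which must be handled so that the profiles claimed to be equilibria really are.
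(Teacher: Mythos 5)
Your overall architecture matches the paper's: the reduction to $fs\succeq ns$ via Lemmas~\ref{lem:full_dominates_infer} and~\ref{lem:full_dominates_train_symmetric}, the identity that the sum of the two firms' payoffs under any profile equals $E\bigl[\phi\cdot\mathbf{1}[\text{some firm takes the significant action}]\bigr]\le V$, and the fact that the symmetric full-sharing equilibrium attains exactly $V/2$ per firm are all exactly the paper's steps. The gap is in the step you yourself flag as the main obstacle: the claim that \emph{every} no-sharing equilibrium yields equal payoffs is false, and symmetry of the game (which $\alpha=\beta$ does give you) does not imply symmetry of all its equilibria. The paper's proof explicitly exhibits the asymmetric no-sharing equilibrium in which $s_1$ plays $A\to 1,\ B\to 0$ while $s_2$ always predicts $0$, and this profile can indeed be an equilibrium when $\alpha=\beta$ (Firm 2 declines to act because, given that Firm 1 acts on $A$, it would have to split the profitable $Aa$ outcomes while bearing the full cost on $Ba$). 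In that equilibrium Firm 2's payoff is $0$ while Firm 1 keeps the entire \emph{unsplit} surplus from every $A$ signal, so the payoffs are unequal, your averaging step $u_1^{ns}=\tfrac12(u_1^{ns}+u_2^{ns})\le V/2$ does not apply, and a priori Firm 1's payoff could exceed $V/2=u_1^{fs}$ --- which is precisely the individual-rationality failure the theorem must exclude.

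The missing idea is to use \emph{Firm 2's} incentive constraint to cap \emph{Firm 1's} payoff. Since $a\to 0$ is a best response for Firm 2, one has $E_{\theta}\bigl[R_1\bigl(\tfrac12 Pr[1\mid Aa,\theta]+Pr[1\mid Ba,\theta]\bigr)-C_1\bigl(\tfrac12 Pr[0\mid Aa,\theta]+Pr[0\mid Ba,\theta]\bigr)\bigr]<0$ (Equation~\ref{eq:asymmetric_equilibrium_equal_precision_a}), and when $\alpha=\beta$ the joint distribution of $(t,Ba)$ coincides with that of $(t,Ab)$. Decomposing $u_1^{ns}$ (a sum over the $Aa$ and $Ab$ events with full, unsplit weight) as one half of the $Aa$ term plus exactly the left-hand side of that constraint yields $u_1^{ns}<\tfrac12 E_\theta\bigl[R_1 Pr[1\mid Aa,\theta]-C_1 Pr[0\mid Aa,\theta]\bigr]\le \tfrac12 E_\theta[\phi^+]=u_1^{fs}$. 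Without this argument (or a substitute that handles the asymmetric equilibrium directly), your proof does not establish $fs\succeq ns$ and hence does not close the theorem.
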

%\ronen{Check if $fs=ns$ is redundant in the theorem above.}
 
 %\moshe{does the below still on significant-action????}
 %\moshe{we would need a table showing we cover nicely, also with matching recommendations (and what mentioned in abstract. am I confused}

However, outside the symmetries in Theorems~\ref{thm:no-sharing-symmetric-sa} and~\ref{thm:full-sharing-symmetric-prediction}, train-sharing can emerge as uniquely optimal.

\begin{restatable}[]{theorem}{TrainSharingExample}
    \label{thm:train-sharing-example}
 Train-sharing is uniquely IRPO for an open subset of parameters $\pi_\theta$, $\alpha$, $\beta$, $R_1$, $C_1$.
\end{restatable}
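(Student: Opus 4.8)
\emph{Proof idea.} I will prove the theorem by exhibiting an explicit region of parameters and verifying three facts on it: (i) $ts\succ ns$; (ii) $fs\not\succeq ns$; and (iii) $is\not\succeq ns$. Fact~(iii) is immediate from Lemma~\ref{lem:full_dominates_infer}: were $is\succeq ns$, then composing with $fs\succeq is$ would give $fs\succeq ns$, contradicting~(ii). Given~(i)--(iii): no-sharing is strictly Pareto dominated by train-sharing, hence not Pareto optimal, so not IRPO; full-sharing and infer-sharing are not always IR, so not IRPO; and train-sharing is always IR (it Pareto dominates $ns$ by~(i)) and Pareto optimal---if $fs\succeq ts$ or $is\succeq ts$, composing with $ts\succeq ns$ would contradict~(ii)/(iii). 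Hence train-sharing is uniquely IRPO, and the whole task reduces to~(i) and~(ii). I work outside the symmetries of Theorems~\ref{thm:no-sharing-symmetric-sa} ($R_1=C_1$) and~\ref{thm:full-sharing-symmetric-prediction} ($\alpha=\beta$): normalize $R_1=1$, take $\tfrac12<\beta<\alpha<1$ and $C_1<1$, and impose the further constraints below.

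\emph{Identifying the equilibria.} By Claim~\ref{clm:cor-joint-dist}, for a \emph{known} $\theta$ the conditional joint law of the two signals is an explicit function of $(\alpha,\beta,\theta)$, and firm~2's expected payoff from playing the significant action on signal $a$ while firm~1 follows its signal is an affine function $\Delta(\theta)$ with slope $-\tfrac12(R_1-C_1)\sqrt{\alpha\beta(1-\alpha)(1-\beta)}$---strictly negative precisely because $R_1>C_1$ (the only place this is used). Extending the characterization behind Lemma~\ref{lem:KnownArbitrarySA} to asymmetric utilities, the unique known-$\theta$ equilibrium is ``both firms follow their signals'' when $\Delta(\theta)\ge 0$ (say $\theta\le\theta^*$) and ``firm~2 yields, firm~1 follows'' when $\Delta(\theta)<0$. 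A short computation shows the $C_1$-interval for which $\theta^*$ lies strictly inside the range of correlations yielding a valid joint law is nonempty exactly when $\beta<\tfrac{1+\alpha}{3}$; fix such a $C_1$, and pick a prior $\pi_\theta$ supported on that range with mean strictly below $\theta^*$ and strictly positive mass above $\theta^*$. The first property makes ``both follow'' the unique no-sharing equilibrium (firm~2's binding incentive constraint is $E_{\pi_\theta}[\Delta]\ge 0$, and the other pure profiles are excluded by a single best-response check for these parameters). Under full-sharing, as in the proof of Lemma~\ref{lem:full_dominates_infer}, the unique equilibrium is symmetric: on seeing $(X,x,\theta)$ both firms play the significant action iff $v(X,x,\theta):=R_1\Pr[1\mid X,x,\theta]-C_1\Pr[0\mid X,x,\theta]\ge 0$, and each earns $\tfrac12\,E[\max(0,v)]$; in particular $u_1^{fs}=u_2^{fs}$.

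\emph{Fact~(i).} Under train-sharing the outcome equals that of no-sharing when $\theta\le\theta^*$ and differs when $\theta>\theta^*$, where firm~2 yields instead of following. Firm~2 strictly gains there, earning $0$ rather than the negative $\Delta(\theta)$. Firm~1 also strictly gains: once firm~2 exits, firm~1 collects the whole reward $R_1\alpha-C_1(1-\alpha)$ on signal $A$ instead of splitting good \emph{and} bad loans with firm~2, a gain of $\tfrac12\big([R_1\alpha\beta-C_1(1-\alpha)(1-\beta)]+\theta\sqrt{\alpha\beta(1-\alpha)(1-\beta)}\,(R_1-C_1)\big)>0$, positive because $\alpha,\beta>\tfrac12$ forces $\alpha\beta>(1-\alpha)(1-\beta)$ and $R_1>C_1$. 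Since $\pi_\theta$ places positive mass above $\theta^*$, both inequalities are strict in expectation, so $ts\succ ns$.

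\emph{Fact~(ii), the crux.} Full-sharing equalizes the firms, but no-sharing does not: conditional on any $\theta$, firm~1's payoff on signal $A$ minus firm~2's payoff on signal $a$ is $(\alpha-\beta)(R_1+C_1)$ (the common splitting term cancels), whence $u_1^{ns}-u_2^{ns}=\tfrac12(\alpha-\beta)(R_1+C_1)>0$. Writing $u_1^{ns}=\tfrac12(u_1^{ns}+u_2^{ns})+\tfrac14(\alpha-\beta)(R_1+C_1)$ and $u_1^{fs}=\tfrac12(u_1^{fs}+u_2^{fs})$, and using that the total full-sharing surplus dominates the total no-sharing surplus (full-sharing acts optimally at every $(X,x,\theta)$),
\[
u_1^{ns}-u_1^{fs}\;=\;\tfrac14(\alpha-\beta)(R_1+C_1)\;-\;\tfrac12\big(u_1^{fs}+u_2^{fs}-u_1^{ns}-u_2^{ns}\big),
\]
so it suffices to keep the no-sharing efficiency loss below $\tfrac12(\alpha-\beta)(R_1+C_1)$. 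Comparing surpluses pointwise---no-sharing gives $\mathbf 1[X{=}A\lor x{=}a]\,v$, full-sharing gives $\max(0,v)$---they agree on every signal profile except $(B,a)$ provided $v(A,a,\cdot),v(A,b,\cdot)\ge 0$ and $v(B,b,\cdot)\le 0$ (strict inequalities I would verify by substituting the explicit conditionals). Hence the efficiency loss is $E_{\pi_\theta}[\Pr[B,a\mid\theta]\cdot(-v(B,a,\theta))]\le \Pr[B,a]\cdot C_1\le\tfrac12(\alpha+\beta-2\alpha\beta)\,C_1$, which is strictly below $\tfrac12(\alpha-\beta)(R_1+C_1)$ for $\alpha$ reasonably large and $\beta$ bounded away from $\tfrac12$; e.g.\ $\alpha=0.85,\ \beta=0.55,\ R_1=1,\ C_1\approx0.745$ satisfies all constraints above and yields $u_1^{ns}>u_1^{fs}$, i.e.\ $fs\not\succeq ns$. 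Every condition imposed is a strict inequality or an open condition on $\pi_\theta$ (positive mass on an open set, and mean below $\theta^*$), so the parameters for which train-sharing is uniquely IRPO form a nonempty open set. I expect this bound in~(ii) to be the genuine obstacle---establishing with honest constants that firm~1's competitive-advantage term strictly exceeds the no-sharing efficiency loss---together with confirming equilibrium uniqueness in the chosen window (ruling out ``firm~1 yields'' and ``both always act'' under no-sharing, and asymmetric profiles under full-sharing), since the argument compares the distinguished equilibrium of each contract.
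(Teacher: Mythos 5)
Your proposal is correct and shares the paper's skeleton: both arguments build a parameter region where (a) train-sharing strictly improves on no-sharing because, conditional on a high realized correlation, Firm~2 learns to yield---sparing it the negative expected payoff of acting on $a$ and letting Firm~1 keep the full surplus on the overlapping $Aa$ events---and (b) full-sharing fails to be IR for Firm~1, which via Lemma~\ref{lem:full_dominates_infer} disposes of infer-sharing as well. Where you genuinely depart is in the prior and in how (b) is established. The paper fixes a two-point prior (conditionally independent vs.\ maximally correlated signals), writes $u_1^{ns}$ and $u_1^{fs}$ in closed form, imposes $u_1^{ns}>u_1^{fs}$ as a bare condition, and exhibits a tiny open box ($\alpha\approx 0.72$, $\beta\approx 0.513$, $C_1\approx 0.755$, $w\approx 0.5$) verified numerically. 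You allow any prior with mean below the yielding threshold $\theta^*$ and positive mass above it, and prove (b) structurally from the identity $u_1^{ns}-u_1^{fs}=\tfrac14(\alpha-\beta)(R_1+C_1)-\tfrac12(\text{efficiency loss})$, bounding the loss by the $(B,a)$ contribution. I checked $u_1^{ns}-u_2^{ns}=\tfrac12(\alpha-\beta)(R_1+C_1)$ and your slope of $\Delta(\theta)$ against the joint law of Claim~\ref{clm:cor-joint-dist}, and your sample point $\alpha=0.85$, $\beta=0.55$, $C_1\approx 0.745$ does place $\theta^*\approx 0.27$ strictly inside the feasible correlation range with the sign conditions $v(Aa),v(Ab)>0>v(Ba),v(Bb)$ holding throughout; so the deferred verifications are indeed routine. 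Your decomposition makes the economic trade-off (competitive advantage vs.\ coordination surplus) explicit where the paper only computes. Two small cautions: the bound $\Pr[B,a]\le\tfrac12(\alpha+\beta-2\alpha\beta)$ uses $\theta\ge 0$, so either restrict the support of $\pi_\theta$ to nonnegative correlations or use the cruder bound $\tfrac12(2-\alpha-\beta)$ (which still suffices at your parameters); and your ``nonempty exactly when $\beta<\tfrac{1+\alpha}{3}$'' claim is correct only if the admissible range of $\theta$ is taken to include its negative part.
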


The intuition underlying the construction in the proof of Theorem~\ref{thm:train-sharing-example} is the following. Under no-sharing, the firms play the same equilibrium regardless of their train-phase signals $w_1$ and $w_2$. Under train-sharing, the equilibrium may depend on $w_1$ and $w_2$, and so in some cases may improve both firms' utilities relative to the no-sharing equilibrium. This happens particularly when the firms learn that their signals are highly correlated, which results in Firm 2 not taking a significant action (e.g., not issue a loan). This saves Firm 2 from attaining negative utility, and allows Firm 1 to fully exploit the utility of its predictions.

\section{A ``Two Hypotheses'' Model}
\label{sec:two_hyp}

%\ronen{We may want to use different Greek letters in this section, since $\alpha$ and $\beta$ are used (differently) in the previous section. Also, should we describe how this setting maps into our model, or is the figure enough?}

In Section~\ref{sec:corrModel} we showed that all contracts except for infer-sharing can be uniquely optimal. In this section we complete the picture by describing a setting where infer-sharing is uniquely optimal. 
We focus on the second setting described in the introduction, which is summarized in Figure~\ref{fig:two_hyp_general_framework} of Section~\ref{sec:model}. 
We assume that firms have ``infinite data'' and matching-recommendations utilities: $C_0 = C_1 = 0$ and $R_0 = R_1 = 1$. 
 This setting captures a variety of natural circumstances, such as multi-factorial genetic disease and chemical testing. Consider a genetic disease that only manifests itself as a result of some environmental cause. There are two firms: Firm 1 performs genetic testing and knows (i) what genes cause the disease (ii) for a specific person, whether these genes are present. Firm 2, on the other hand, has users' behavioral data (e.g., credit card histories) and can identify the environmental cause. However, it does not understand the underlying genes that enable the disease. 

%\yotam{Change the letters and get rid of $\gamma$ (only consider what we now call $\beta$)}
Formally, let $t=1$ denote the presence of the disease,  inference-time signals $A$ and $B$ denote the presence of two different gene mutations in the population, and inference-time signals $a$ and $b$ denote the presence and absence of the environment cause, respectively. There are two hypotheses: (I) the disease is caused by mutation $A$ and the environmental cause, and (II) the disease is caused by mutation $B$ and the environmental cause. Thus,  Hypothesis I (resp., Hypothesis II) is that firms see inference-time signals $Aa$ (resp., $Ba$) if and only if $t=1$. The following are common knowledge: 
\begin{itemize}
    \item Hypothesis I is correct with probability $\pi_I$, and Hypothesis II with probability $1-\pi_I$;
    
    \item without the environmental cause, the disease remains dormant ($Pr[0 | b] = 1$);
    
    \item the incidence rate of the disease in the general population is $Pr[1] = \kappa$;  and
    
    \item the incidence rates of the two different gene mutations in the general population are $Pr[A]=\kappa+(1-\kappa)\cdot \lambda$ and $Pr[B]=1-Pr[A]$.
\end{itemize}
 %This is summarized in 
 %Figure~\ref{fig:two_hyp_general_framework} in Section~\ref{sec:model} summarizes the settings. 
 %
 
%
Our main result is that, within this setting, there are instances where infer-sharing is uniquely optimal.
\begin{restatable}[]{theorem}{InferSharingExample}
    \label{thm:infer-sharing}
    Infer-sharing is uniquely IRPO for an open subset of parameters $\pi_I$, $\kappa$, $\lambda$, and $\mu$.
\end{restatable}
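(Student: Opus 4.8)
The plan is to compute, for each of the four contracts, the (generically unique) pure Bayesian equilibrium and the resulting pair of firm utilities at each of the two relevant training profiles: the one in which Firm~1's training signal $w_1$ reveals Hypothesis~I and the one in which it reveals Hypothesis~II (Firm~2's training signal carries no information about the hypothesis, so these are the only profiles to consider). Two contracts are immediate. Under full-sharing both firms learn the hypothesis (via $w_1$) together with both inference signals, hence learn $t$ exactly; predicting $t$ is then a dominant action for each, so $u_1^{fs}=u_2^{fs}=\tfrac12$ at every $(w_1,w_2)$ (both correct, reward split). Under infer-sharing Firm~1 learns the hypothesis and both inference signals, hence $t$ exactly, so predicting $t$ is dominant for Firm~1; Firm~2 sees $(X,x)$ but not the hypothesis, so it best-responds with the MAP estimate of $t$ given $(X,x)$ under $\pi_I$. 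Because Firm~1 is always correct the full unit of reward is always captured, so $u_1^{is}+u_2^{is}=1$, with $u_2^{is}=\tfrac12\Pr[\text{Firm 2's MAP estimate is correct}]$ evaluated under the realized hypothesis and $u_1^{is}=1-u_2^{is}$.

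Next I would characterize the no-sharing and train-sharing equilibria. The matching-recommendations payoffs give a mixed coordination/anti-coordination game: a firm wants to be correct, but being correct while the competitor is wrong is worth twice being correct together, so a firm whose posterior is only mildly above $\tfrac12$ may prefer the minority action. Using the conditional best-response form of Eq.~(\ref{eq:conditional_equilibrium_condition}), I would partition the parameter space according to whether each firm plays by its own MAP, plays the safe action, or plays against its MAP, identify the equilibrium profile in each cell and argue uniqueness where it holds. A structural point to exploit: Firm~2's posterior over $t$ given only $x$ is the \emph{same} under both hypotheses, so sharing $w_1$ (train-sharing) never improves Firm~2's own prediction and only lets it anticipate Firm~1's action --- this is exactly what makes the train- and no-sharing equilibria differ. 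The outputs of this step are explicit formulas for $u_i^{ns}$ and $u_i^{ts}$ at both training profiles in terms of $\pi_I,\kappa,\lambda,\mu$.

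Third, I would translate ``infer-sharing uniquely IRPO'' into inequalities on these utility vectors: (i)~$is\succ ns$, i.e.\ the infer-sharing equilibrium weakly Pareto-dominates every no-sharing equilibrium at both training profiles and strictly at one --- this makes no-sharing not Pareto-optimal and makes infer-sharing always IR; (ii)~infer-sharing Pareto-optimal --- since $u_1^{is}+u_2^{is}=1$ while full-sharing gives $(\tfrac12,\tfrac12)$, having $u_1^{is}\neq\tfrac12$ somewhere already gives $fs\not\succeq is$, and one also rules out $ts\succeq is$ (Firm~1 is not perfectly informed under train-sharing, so its equilibrium cannot match the perfectly-informed infer-sharing outcome for both firms); (iii)~full-sharing not always IR --- at some $(w_1,w_2)$ some firm strictly prefers its no-sharing equilibrium payoff to $\tfrac12$, so $fs$ is not IRPO; (iv)~train-sharing not IRPO --- either it violates IR as in (iii), or $is\succ ts$. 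I would then exhibit an explicit open box of parameters on which (i)--(iv) simultaneously hold, and conclude: since every inequality invoked is strict, the same conclusions persist on a neighbourhood, which is the claimed open set.

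The crux is reconciling (iii) with (i). Full-sharing delivers $(\tfrac12,\tfrac12)$, so violating its IR requires some firm to exceed $\tfrac12$ under no-sharing; but infer-sharing makes Firm~1 perfectly accurate and thus captures the whole reward ($u_1^{is}+u_2^{is}=1$), so for infer-sharing to still Pareto-dominate no-sharing that same firm cannot exceed its infer-sharing payoff. This forces the target region to be one where the no-sharing equilibrium either ``wastes'' reward (both firms mispredict with positive probability --- which occurs precisely when the anti-coordination incentive above leads a firm to play the minority action) or reallocates reward in exactly the direction infer-sharing does, while the train-sharing equilibrium fails to reproduce the infer-sharing outcome. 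Pinning down such a region is the main work; the rest is bookkeeping over the cases of the equilibrium characterization, together with the continuity argument for openness.
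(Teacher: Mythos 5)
Your plan mirrors the paper's proof essentially step for step: compute the four equilibria at each training profile, use that full-sharing yields $(\tfrac12,\tfrac12)$ while Firm~1 exceeds $\tfrac12$ under no-sharing at one profile (killing IR of full-sharing), that $ns=ts$ because Firm~2's training signal is uninformative and its ``exit'' action is dominant, and that infer-sharing makes Firm~1 perfectly informed while leaving Firm~2's equilibrium behavior unchanged, then exhibit an explicit open parameter box. Your ``crux'' paragraph identifies exactly the mechanism the paper exploits --- the no-sharing equilibrium wastes reward (under Hypothesis~II Firm~1 cannot distinguish $Ba$ from $Bb$ and stays out), and infer-sharing hands that wasted reward entirely to Firm~1 as a ``free meal.''
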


The intuition underlying the construction in the proof of Theorem~\ref{thm:infer-sharing} is the following. Generally, in the two hypotheses model, Firm 1 has the ability to deduce the correct world model during training, even with only its own signal. In the cases we identify, train/full-sharing makes it lose this advantage, and thus can not be beneficial for it. We are left with no/infer-sharing as possible individually rational contracts. Since generally in the two hypotheses model, the signal of Firm 1 by itself is not enough to decide the user classification with certainty, infer-sharing helps in that Firm 1 can both determine the correct hypothesis and has the pair of signals that determines the true realization, and thus it always predicts correctly. In the cases we identify, the behavior of Firm 2 remains the same under both contracts, because of the fact that it can not deduce the correct world model during training. Hence, infer-sharing allows Firm 1 a ``free information meal'', similar to the example, given for a model that only captures inference stage sharing, without consideration of the training stage, in \cite{CoopetitionAmazon}.

\subsection{Beyond the Infinite-data Model}
\label{subsec:finite_data}
So far, we focused on the infinite-data model, where the training signal allows the firm to deduce the marginal distribution over its signal and the true realization. We conjecture that with enough data, the results are similar to the idealized infinite case that we analyze. However, with few samples, the results may change significantly. 
%However, in some cases it might be interesting to consider limited data. 
To demonstrate how the analysis may lead to different results when there is only little historical data, we consider the setting of Section~\ref{sec:two_hyp}, but when only one labeled example of past data is available to the firms. Thus, after the hypothesis (world) is drawn (Hypothesis $I$ w.p. $\pi_I$, and otherwise Hypothesis $II$), a sample is drawn from the joint distribution over the pair of signals and true realizations, and each firm sees its own signal and the true realization. I.e., if the true hypothesis is Hypothesis $I$, then Firm 1 sees $(A,0)$ w.p. $\alpha$, $(A,1)$ w.p. $\beta$, and $(B,1)$ w.p. $1 - \alpha - \beta$. The firms then update a Bayesian posterior over the world models. Under train-sharing and full-sharing, when historical predictions are shared, both firms see the entire sample, i.e., the pair of signals and the true realization. 

In the appendix, we prove that, in the two hypotheses model with the parameters used for Theorem~\ref{thm:infer-sharing} but with a single labeled example, the statement of Theorem~\ref{thm:infer-sharing} breaks down, as do some of the properties of equilibria derived in the theorem's proof. In particular:
\begin{restatable}[]{theorem}{oneSampleTwoHypotheses}
    \label{thm:oneSampleTwoHypotheses}
 Under the parameters of Theorem~\ref{thm:infer-sharing} but with one sample, no-sharing and train-sharing are not necessarily equivalent, no-sharing is IRPO (rather than infer-sharing), and Firm 1 has lower equilibrium expected utility than Firm 2. 
\end{restatable}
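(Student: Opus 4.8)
The proof should be by explicit construction: fix $\pi_I,\kappa,\lambda,\mu$ to a point in the open set produced by Theorem~\ref{thm:infer-sharing}, and redo the equilibrium analysis with one labeled historical example in place of ``infinitely many''. The first step is to write down the training-phase posteriors, where the entire difference from the infinite-data case lives. With a single sample, Firm~1 learns the hypothesis \emph{with certainty} only when its historical example is diseased ($t_h=1$, probability $\kappa$) --- in that case the gene signal it saw must be the disease-gene --- and otherwise holds a non-degenerate posterior over $\{I,II\}$ that depends on whether it saw signal $A$ or $B$. Firm~2, as in the infinite case, still learns nothing about the hypothesis from its own example, since the marginal law of (environmental signal, label) is hypothesis-independent. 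Under train- and full-sharing the picture changes once more: Firm~1 also sees $x_h$, and a non-diseased example with the environmental cause present ($t_h=0$, $x_h=a$) reveals the hypothesis outright (the patient has the cause but no disease, so their gene is not the disease-gene), so Firm~1 now identifies the hypothesis in strictly more states; symmetrically, Firm~2 --- seeing $X_h$ --- now also updates. This new fact, that with one sample \emph{sharing conveys real information in both directions}, is the engine behind all three assertions.

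\textbf{Non-equivalence of no-sharing and train-sharing.} With the posteriors in hand, solve for the pure Bayesian equilibria of the inference/action game under each of the four contracts. Each firm's type is now a (signal, label) pair --- and, under the sharing contracts, the pair of both firms' observations --- so this is a finite but sizeable case analysis: for each type one tabulates the conditional probability of the label and of the opponent's action and reads off best responses. To show $ts$ and $ns$ are \emph{not} equivalent it then suffices to exhibit one realization of the historical sample --- a non-diseased example with the environmental cause present --- after which, under train-sharing, Firm~1's newly acquired certainty about the hypothesis flips its inference-time action relative to its no-sharing action; the resulting equilibrium action profiles, hence the utility profiles, differ, so neither $ts\succeq ns$ nor $ns\succeq ts$ holds. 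The parameters must be chosen so that this flip is strict, which is why the statement is pinned to a concrete point inside the Theorem~\ref{thm:infer-sharing} set.

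\textbf{No-sharing is IRPO, infer-sharing is not, and $u_1^{ns}<u_2^{ns}$.} The remaining assertions follow by comparing equilibrium utility profiles across contracts at the chosen parameters. Exactly as in the infinite-data case, under train- and full-sharing the dominant effect is that Firm~1 surrenders the exclusivity of whatever it knows about the hypothesis, so in every equilibrium Firm~2 can imitate it and split its rewards; at the chosen parameters this leaves Firm~1 strictly worse off than under no-sharing, so $ts$ and $fs$ are not IR. The genuinely new point is infer-sharing: giving Firm~1 the signal $x$ --- in particular $x=b$, which is conclusive for $t=0$ --- makes Firm~1 correct strictly more often in states where Firm~1 would otherwise wrongly predict $1$ on a positive gene signal; in those states Firm~2 is also correct (it predicts $0$ on $x=b$), so under matching-recommendations utilities Firm~2's solo-correct reward of $1$ becomes a split reward of $\tfrac12$, strictly lowering Firm~2's payoff, while its own extra signal $X$ is chosen not to change its action. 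Hence no infer-sharing equilibrium Pareto-dominates no-sharing, $is$ is not IR, and no-sharing --- always IR and now not Pareto-dominated --- is Pareto optimal, i.e.\ IRPO, whereas $is$, uniquely IRPO in the infinite-data model, is not. Finally, $u_1^{ns}<u_2^{ns}$ is obtained by evaluating the two equilibrium utilities numerically: the environmental signal is a sharper predictor of the label than the gene signal ($x=b$ settles $t$, $X\in\{A,B\}$ never does), and with one sample Firm~1 identifies the hypothesis with probability only $\kappa$, so Firm~1 ends up predicting incorrectly strictly more often than Firm~2.

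\textbf{Main obstacle.} The hard part is the equilibrium enumeration in the one-sample game: unlike the infinite-data model each firm now has several informational types, equilibria need not be unique, and every Pareto-dominance (or non-dominance) claim quantifies over all equilibria of the relevant contract, so the case analysis cannot be short-circuited. A secondary, more delicate point is the simultaneous bookkeeping of parameter constraints --- staying inside the Theorem~\ref{thm:infer-sharing} open set while making Firm~1's belief-flip under train-sharing strict, keeping Firm~2's action under infer-sharing unchanged, and making $u_1^{ns}<u_2^{ns}$ strict --- which is why the result is verified by direct computation at an explicit parameter point rather than argued in closed form.
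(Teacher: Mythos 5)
Your overall strategy is the same as the paper's: fix a concrete parameter point, recompute the one-sample training posteriors, enumerate the pure equilibria contract by contract, and compare the resulting utility profiles. Your information-structure analysis also matches the paper's almost exactly --- Firm~1 resolves the hypothesis from its own sample only when the historical label is $1$ (probability $\kappa$), sharing the sample additionally resolves it on $(\cdot,a,0)$ realizations, and Firm~2 learns nothing on its own --- and your account of why infer-sharing fails IR (Firm~1 uses $x=b$ to correct its false positives, converting Firm~2's exclusive rewards of $1$ into split rewards of $\tfrac12$ while Firm~2's own action stays at the constant-$0$ dominant strategy) is exactly the mechanism behind the paper's computations.

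There is, however, a genuine gap in your witness for $ts\neq ns$. You locate the behavioral difference at a sample with $t_h=0$, $x_h=a$, where train-sharing gives Firm~1 certainty about the hypothesis. But under the equilibria the paper derives, certainty of Hypothesis~I induces the same action ($A\to 1$, $B\to 0$) as the no-sharing posterior $w''$ after $(B,0)$, and certainty of Hypothesis~II induces the same action (always $0$) as the posterior $w'$ after $(A,0)$ --- so at those realizations nothing flips. The actual source of non-equivalence is the opposite case, the sample $(A,b,0)$: under train-sharing the shared observation is uninformative and Firm~1's posterior reverts to the \emph{prior}, under which it predicts $1$ on signal $A$, whereas under no-sharing the coarser observation $(A,0)$ pools this event with the (Hypothesis-II-revealing) $(A,a,0)$ event, tilts the posterior toward Hypothesis~II, and Firm~1 always predicts $0$. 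In other words, the flip comes from train-sharing \emph{explaining away} Firm~1's own evidence, not from it granting certainty; your argument as stated would fail at the paper's parameter point. Relatedly, your reason for train- and full-sharing failing IR (``Firm~2 can imitate Firm~1 and split its rewards'') does not match the equilibria either: in the paper Firm~2 plays the constant-$0$ strategy under every contract by a dominance argument and never imitates Firm~1; the IR failure comes entirely from how Firm~1's refined posterior changes its own play and thereby reshuffles the split versus exclusive rewards, which is why the comparison genuinely has to be settled by the explicit utility formulas rather than by an imitation argument.
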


%\yotam{Add intuition why infer-sharing works?}

\section{Implementation for a Real Data-set}
\label{sec:simulation}

To see how our ideas may be put to practice, we use the peer-to-peer loan data of LendingClub, popularized by recent research such as \cite{emekter2015evaluating}, and publicly available at Kaggle \cite{lendingClubData}, to conduct a synthetic simulation. We take a random subset of 25\% of the features and assign it to Firm 1\footnote{In the Appendix, we include robustness tests where we vary the choice of features.}. We take another subset of 10\% of the features (possibly overlapping) and assign it to Firm 2. Vertically, we split the data into train, test and validation sets. We let each of the firms train a neural net over the training data (that includes only its features). Each neural net was trained for 20 epochs on a $8$-GB RAM M1 MacBook Pro, which takes about half an hour. The training signal consists of the neural net's predictions on whether loans are good or bad. The firms use the test data to learn the signal performance, which we assume then becomes common knowledge. Depending on the contract, the firms choose their equilibrium strategies based on the performance in the test data: under train-sharing and full-sharing they also see the other firm's predictions on the test data (rather than only knowing the aggregate performance measurements). The firms then use their models to get a signal for every example in the validation data. 
 Under infer-sharing and full-sharing they see the other firm's signals on the validation set, and may use it to alter their final actions. %\ronen{What about infer-sharing?} 
They are evaluated using their actions on the validation data, under significant action utilities with $R_1=1$ and cost $C_1$. 

Recall that our model assumes that firms share their training and inference-time signals, and not their entire data. 
 We thus compare the performance of \textit{full-sharing}---sharing of the firms' signals both on the historical data (Here: the test data) and the inference stage data (Here: the validation data)---with  \textit{total-sharing}---sharing of the firms' entire data, training a joint neural net model over the shared data, and dividing the utility that this model achieves on the validation data. 

There are several important aspects in which the practical implementation deviates from our formal model:
\begin{itemize}[leftmargin=*]
    \item We do not naturally have a common Bayesian prior over the joint distribution of signals and true realizations. We make the simplifying assumption that the signals are independent when relevant, i.e., under no-sharing and infer-sharing. Under train-sharing and full-sharing, we use the test data to learn the signals' correlations. 
    
    \item Simplifying symmetry assumptions---e.g., that the prediction accuracy is symmetric across labels---do not naturally appear in the real data-set, and so our decision rules need to adapt and do not exactly follow the ones given symmetry. 
    
    \item The calculation of equilibrium strategies over the test data leads to an empirical equilibrium that has some error when compared to a theoretical equilibrium taken in expectation. Moreover, the average utility of different contracts as calculated on the validation set may also have some error. 
    \end{itemize}

We find that the results generally follow the lines of our discussion in Section~\ref{sec:corrModel}: Varying by cost (going from $C_1 = 0$ to $C_1 = 2.5$ in $0.05$ steps), as summarized in Figure~\ref{fig:compare_contracts}, we find regimes where either full-sharing, no-sharing, or train-sharing are uniquely IRPO.
%Interestingly, there is also one point ($0.55$) where infer-sharing is the unique optimal-welfare IR contract, in contradiction to our conclusion in Lemma~\ref{lem:full_dominates_infer}. This is due to the fact that in this specific case, the heuristic that decides the inference rules performs better (by deciding to avoid issuing a loan if seeing the pair of signals $B, b$) than full-sharing (that always issues a loan). It results in a minuscule advantage of $<0.001$ in utility. In all other cases, infer-sharing is either the same as full-sharing, or dominated by it. 
%As for the other contracts,
While full-sharing is almost always a Pareto optimal contract, there are significant regimes where it is not IR for firm 1, which results in the no-sharing and train-sharing regimes. In almost all instances and cost values of the simulation, infer-sharing is Pareto dominated by full-sharing, as predicted by Lemma~\ref{lem:full_dominates_infer}. 

The behavior of no-sharing and train-sharing is of particular interest. With low values of $C_1$, both contracts have the two firms issue a loan regardless of the signal. Then, with higher values of $C_1$, the firms move to an equilibrium where  each acts according to its signal, and later to an equilibrium where firm 1 predicts its signal while firm 2 does not issue any loans. 
At each such equilibrium shift, there is a discontinuity for firm 1's utility. For example, moving from each firm predicting its own signal to Firm 2 not issuing loans, allows it to get the full utility of its action instead of half. 

\begin{figure}[!htb]
\centering
\begin{minipage}{\textwidth}
\centering
\includegraphics[width=\linewidth]{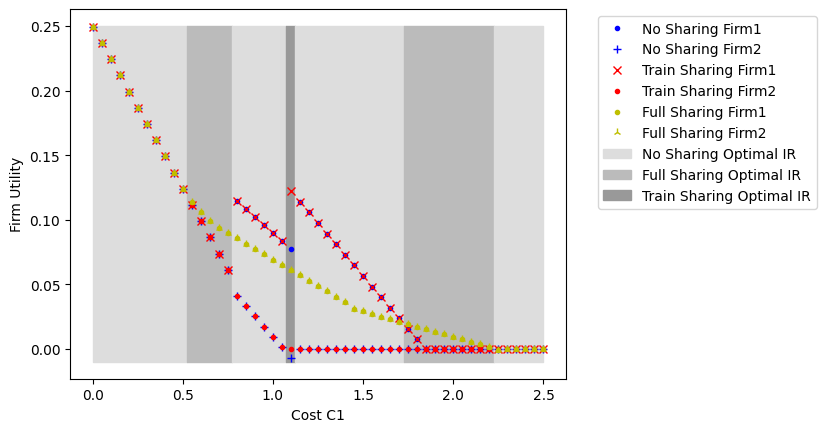}
\caption{No Sharing, Train Sharing and Full Sharing contracts performance for both firms and different costs. We do not include the infer sharing contract utility as they are very similar to (and dominated by) full sharing. We mark regimes where each contract is the optimal-welfare IR contract.}
\label{fig:compare_contracts}
\end{minipage}    \quad
\begin{minipage}{\textwidth}
\centering
\includegraphics[width=0.75\linewidth]{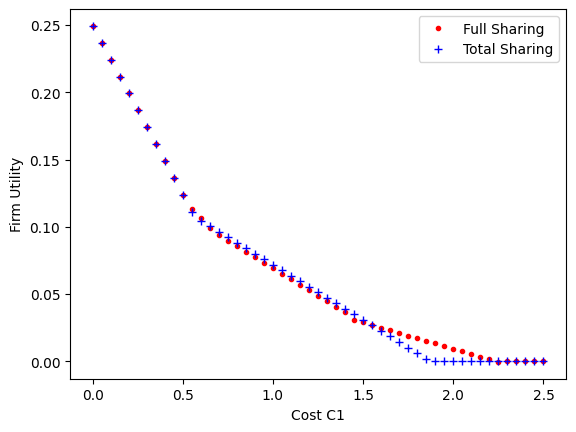}
\caption{Full Sharing vs. Total Sharing for different costs}
\label{fig:full_vs_total}
\end{minipage}
\end{figure}

In Figure~\ref{fig:full_vs_total}, %of Appendix~\ref{app:simulations},
we compare the performance of full-sharing with total-sharing for different costs. Importantly, both the full-sharing and total-sharing models are trained once for the symmetric cost $C_1 = 1$, and are then adapted to different costs by decision rules decided based on the test data. Somewhat surprisingly, they achieve very comparable performance, with even a slight advantage to full-sharing. We believe that this is due to the richer dual signal in the full-sharing case, which allows for more granular decision rules, mitigating the generally better prediction accuracy of total-sharing. %\deleted{The worst relative performance of full-sharing compared to total-sharing we see is $\approx 0.88786$ which happens when $C_1 = 1.45$.}

%\begin{figure}[!htb] \includegraphics[scale=0.4]{full_vs_total2.png}
%\caption{Full Sharing vs. Total Sharing performance for different costs}
%\label{fig:full_vs_total}
%\end{figure}

%\yotam{Missing a theorem similar to Lemma 5 for this setting}

\section{Discussion}\label{sec:discussion}

The analysis of incentives is a crucial aspect of the general effort to encourage data sharing, as recognized by the European Commission: ``In spite of the economic potential, data sharing between companies has not taken off at sufficient
scale. This is due to a lack of economic incentives (including the fear of losing a competitive edge)'' \cite{EC2020}. This paper introduces a novel element of data sharing---the distinction between sharing during training and inference---and demonstrates its importance to understanding firms' data-sharing incentives. 

Some natural questions arise as a result of our work: 

\begin{itemize}
\item We have assumed a common prior over priors for the firms. What if the firms have different beliefs? How robust is the emergence of uniquely optimal contracts to small differences in the epistemic models of the firms? 

    \item Our work is set within the framework of mechanism design without money, i.e., we suppose that firms share data based on mutual gain, rather than based on monetary compensation. In some cases it is natural to consider that one of the firms may compensate the other as part of the data sharing process. This could be interesting as future work and may build on the framework and insights we develop. 
\end{itemize}

\bibliographystyle{splncs04}
\bibliography{refer}

\appendix

\section{Known Correlation Results and Proofs}
\label{sec:KnownArbitrarySA}
\label{sec:GeneralIndependent}

%\ronen{Should we mention somewhere that we only look at pure equilibria?}
\knownSA*

\begin{proof}
We first emphasize again that when the prior over priors consists of a single possible prior, the first stage of learning is redundant, and so there is no difference between ``no-sharing'' and ``train-sharing'', and similarly no difference between ``infer-sharing'' and ``full-sharing''. Under no-sharing, a strategy $s^{no}_i$ of firm $i \in \{1,2\}$ is, given signal $x$, whether to predict $0$ or $1$. Under full-sharing, a strategy $s^{full}_i$ of firm $i$ is, given both firms' signals $x_1, x_2$, whether to predict $0$ or $1$. To specify a strategy (or part of a strategy), we sometimes use the notation $signal \rightarrow prediction$, e.g., under no sharing $A \rightarrow 0$ means that the primary firm predicts $0$ when it gets the signal $A$. 

Under \textit{full-sharing}, there is a unique symmetric equilibrium where $Aa \rightarrow 1, Ab \rightarrow 1, Ba \rightarrow 0, Bb \rightarrow 0$. I.e., both firms follow the primary firm's signal. This holds by the following argument. Under full-sharing, we have $\sigma_1^{fs} = \sigma_2^{fs}$. Moreover, in the known correlation case, $w_i$ can be ignored (as there is only one possible world model) and $\sigma_1^{fs}$ is of the form $Xx$ for some pair of inference-time signals $X \in \{A,B\}$ of Firm1 and $x \in \{a,b\}$ of Firm2. Let $p_i = s_i(Xx)$ be the prediction of firm $i$ if the pair of signals is $Xx$, and let $Q^{Xx}_{p_i,s_{\neg i}} = \begin{cases} 1 & p_i \neq s_{\neg i}(Xx) \\
\frac{1}{2} & p_i = s_{\neg i}(Xx)
\end{cases}$.

When $p_i = 1$, Equation~\ref{eq:firm_cond_utility} takes the form:

\begin{equation}
\begin{split}
      \tilde{u}_i^{fs}(\sigma_i^{fs}, 1, s_{\neg i}) &= E[u_i(1, t, s_{\neg i}(Xx))]\\ &= 
     Q^{Xx}_{1,s_{\neg i}} \cdot E[u_i(1, t, 0)] \\
     &= R_1 \cdot Pr[t = 1 | Xx] + C_1 \cdot Pr[t = 0 | Xx]\\  &=
     Pr[t = 1 | Xx] - Pr[t = 0 | Xx],
    \end{split}\end{equation}

where the first transition is since the signal that firm $\neg i$ sees is fixed to be the same one that firm $i$ sees, and so the expectation is only over the true realization $t$. The second transition is by the structure of our utility model. The third transition is by conditional expectation. The fourth transition is since in the symmetric significant action utility model $R_1 = 1, C_1 = -1$. 

When $p_i = 0$, Equation~\ref{eq:firm_cond_utility} takes the form:

\begin{equation}\begin{split}\tilde{u}_i^{fs}(\sigma_i^{fs}, 0, s_{\neg i}) &= E[u_i(0, t, s_{\neg i}(Xx))] \\ &= Q^{Xx}_{0,s_{\neg i}} \cdot E[u_i(0, t, 0)] \\ &= R_0 \cdot Pr[t = 1 | Xx] + C_0 \cdot Pr[t = 0 | Xx] \\ &= 0,\end{split}\end{equation}

where the last transition is since in the significant action utility model $R_0 = C_0 = 0$. We conclude that to satisfy the equilibrium condition of Equation~\ref{eq:conditional_equilibrium_condition}, it suffices to show that for every pair of signals $Xx$, both agents' strategies choose $s_i(Xx) = 1$ if and only if $Pr[t = 1 | Xx] - Pr[t = 0 | Xx] \geq 0$. 

%Given a pair of signals $Xx$, a firm's utility predicting $1$ given that the other firm predicts $1$ as well, is half the firm's utility if the other firm predicts $0$, which is $Pr[1 | Xx] - Pr[0 | Xx]$. Thus, regardless of the other firm's action, the firm predicts $1$ iff $Pr[1 | Xx] \geq Pr[0 | Xx]$. 
Moreover, we know by Bayes' formula and our symmetry assumption ($Pr[0] = Pr[1]$) that:
$$Pr[1 | Xx] - Pr[0 | Xx] = \frac{\frac{1}{2}}{Pr[Xx]} \left( Pr[Xx | 1] - Pr[Xx | 0]\right),$$

and so the left-hand-side expression is non-negative iff the right-hand-side expression is non-negative. 

Let $\rho \stackrel{def}{=} Pr[X = A \land x = a | 1]$. Then we have
\[
\begin{split}
Pr[Aa | 1] - Pr[Aa | 0] = 
\rho - (1 - \alpha - \beta + \rho) \stackrel{\alpha \geq \beta \geq \frac{1}{2}}{\geq} 0, \\
\qquad Pr[ Ab | 1] - Pr[ Ab | 0] = \left(\alpha - \rho\right) - \left(\beta  - \rho\right) \stackrel{\alpha \geq \beta}{\geq} 0,  
\end{split}\]

and since $Pr[1 | Bb] - Pr[0 | Bb] = - \left(Pr[ 1 | Aa] - Pr[ 0 | Aa]\right), Pr[ 1 | Ba] - Pr[ 0 | Ba] = - \left(Pr[1 | Ab] - Pr[0 | Ab]\right),$ the inequalities are reversed for these signals. The utility for the primary firm is thus, following Eq.~\ref{eq:firms_utility},
\begin{equation}
\label{eq:u1_fs}
\begin{split}
 &u_1^{fs}(s_1,s_2) = 
 E_{X, x, t}[u_1(s_1(Xx),t,s_2(Xx))] \\ &= 
 \sum_{x_1 \in \{A,B\}} \sum_{x_2 \in \{a,b\}} \big( Pr[X = x_1 \land x = x_2]  \cdot Pr[1 | X = x_1 \land x = x_2] \cdot u_1(s_1(x_1, x_2), 1, s_2(x_1, x_2)) \big) \\
& + Pr[0 | X = x_1 \land x = x_2] u_1(s_1(x_1,x_2),0,s_2(x_1,x_2)) \big) \\ & \stackrel{\text{Bayes formula}}{=} 
 \sum_{x_1 \in \{A,B\}} \sum_{x_2 \in \{a,b\}} \big( Pr[1] \cdot Pr[X = x_1 \land x = x_2 | 1]  \cdot u_1(s_1(x_1,x_2), 1, s_2(x_1,x_2)) \\
 &+ Pr[0] \cdot Pr[X = x_1 \land x = x_2 | 0] \cdot u_1(s_1(x_1,x_2), 1, s_2(x_1,x_2))\big) ] \\
& = \frac{1}{2}\sum_{x_2 \in \{a, b\}} \big( Pr[1] \cdot Pr[X = A \land x = x_2 | 1]  - Pr[0] \cdot Pr[X = A \land x = x_2 | 0] \big) \\
& = \frac{1}{2} \left( Pr[1] \cdot Pr[X = A | 1] - Pr[0] \cdot Pr[X = A | 0] \right) \\
& = \frac{1}{4} \left(\alpha - (1 - \alpha)\right) = \frac{2\alpha - 1}{4}.
\end{split}
\end{equation}

Under \textit{no-sharing}, there are two regimes. Consider if the primary firm plays $A \rightarrow 1, B \rightarrow 0$. We argue that then, the secondary firm always plays $b \rightarrow 0$, since its utility from $b \rightarrow 1$ is (following Equation~\ref{eq:firm_cond_utility}):
\[
\begin{split}
     \tilde{u}_2^{ns}(b, 1, s_1) &= E[u_2(1, t, s_1(X)) | x = b] \\
    & =  \frac{1}{2} \cdot Pr[t = 1 \land X = A | x = b] + Pr[t = 1 \land X = B | x = b]\\
    & - \frac{1}{2} \cdot Pr[t = 0 \land X = A | x = b] - Pr[t = 0 \land X = B | x = b] \\
    & \stackrel{\text{Bayes Formula}}{=} \frac{1}{Pr[x = b]}\bigg( \frac{1}{2} Pr[x = b \land t = 1 \land X = A]  + Pr[x = b \land t = 1 \land X = B ] \\
    & - \frac{1}{2} \cdot Pr[x = b \land t = 0 \land X = A]  - Pr[ x = b \land t = 0 \land X = B] \bigg) \\ 
    & = 2 \bigg(\frac{\alpha - \rho}{4} + \frac{1 - \alpha - \beta + \rho}{2} - \frac{\rho}{2} - \frac{\beta - \rho}{4} \bigg) = \frac{1}{2}\left(2 - 2\rho - \alpha - 3\beta\right) \\
    & \stackrel{\alpha \geq \beta \geq \frac{1}{2}}{\leq} 0 = \tilde{u}_2^{ns}(b, 0, s_1)
\end{split}
\]

 As for the prediction given the signal $a$, the utility of Firm2 from $a\rightarrow 1$ is
 
 \[
\begin{split}
     \tilde{u}_2^{ns}(a, 1, s_1) & = E[u_2(1, t, s_1(X)) | x = a] \\
    & = \frac{1}{Pr[x = a]}\big( \frac{1}{2} Pr[x = a \land t = 1 \land X = A]  + Pr[x = a \land t = 1 \land X = B ] \\
    & - \frac{1}{2} \cdot Pr[x = a \land t = 0 \land X = A] - Pr[ x = a \land t = 0 \land X = B] \big) \\
    & = 2 \left(\frac{\rho}{4} + \frac{\beta - \rho}{2} - \frac{1 +\rho - \alpha - \beta}{4} - \frac{\alpha - \rho}{2} \right) \\
    & = \frac{1}{2} \left( 3\beta -1 -\alpha \right) 
\end{split}
\]
 
 Thus, the secondary firm best response is $a \rightarrow 1$ if and only if $3 \beta - \alpha - 1 \geq 1$. It is straightforward to verify that in both cases the strategies then form a unique equilibrium. 
 
 In the case that the secondary firm plays $a \rightarrow 0$, the equilibrium utility for the primary firm is (following Equation~\ref{eq:firms_utility}):
 
 \[
 \begin{split}
  u_1^{ns} & = E[u_1(s_1(X),t,0)] \\
 & = Pr[X = A \land t = 1] - Pr[X = A \land t = 0] \\
 & = \frac{1}{2} \left(\alpha - (1- \alpha)\right) = \frac{2\alpha - 1}{2} \geq \frac{2\alpha - 1}{4} \stackrel{\text{Eq.~\ref{eq:u1_fs}}}{=} u_1^{fs}.
 \end{split}
 \]
 
 In the case that the secondary firm plays $a \rightarrow 1$, 
the equilibrium utility for the primary firm is 
\[
\begin{split}
 u_1^{ns} & = E[u_1(s_1(X),t,s_2(x))] \\
& = \frac{1}{2}Pr[X = A \land t = 1 \land x = a] + Pr[X = A \land t = 1 \land x = b] \\
& - \frac{1}{2}Pr[X = A \land t = 0 \land x = a] - Pr[X = A \land t = 0 \land x = b] \\
& = \frac{1}{2}\left(\frac{\rho}{2} + (\alpha - \rho) - \frac{1 - \alpha - \beta + \rho}{2} - (\beta - \rho)\right)  \\
& = \frac{1}{4}\left(3\alpha - \beta - 1\right) = \frac{2\alpha - 1}{4} + \frac{1}{4}(\alpha - \beta) \geq \frac{2\alpha - 1}{4} \stackrel{\text{Eq.~\ref{eq:u1_fs}}}{=} u_1^{fs}.
\end{split}
\]
\end{proof}

%\section{Known independent correlation with general utilities}

\begin{lemma}
For any $\beta$ there is such $\alpha$ and a utility model (as defined by $C_0, C_1, R_0, R_1$) so that full-sharing is the unique feasible contract with known independent correlation. 
\end{lemma}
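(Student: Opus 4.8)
The plan is to realize full-sharing's advantage through \emph{asymmetric} significant-action utilities with a large enough mistake cost, exactly the case alluded to just before the lemma. Fix any $\beta \in (1/2,1)$. Work in the significant-action model with $R_0 = C_0 = 0$, normalize $R_1 = 1$, and write $C_1 = -c$ with $c>0$ to be chosen; take the signals conditionally independent, $Pr[0] = Pr[1] = 1/2$, and set $\alpha = \max\{\beta,\, 2/3\}$, so that $\alpha \ge \beta$ (respecting the standing convention) and $\alpha \ge 2/3$. With these symmetries Bayes' rule gives $Pr[1 \mid A] = \alpha$, $Pr[1 \mid Aa] = \alpha\beta/(\alpha\beta + (1-\alpha)(1-\beta))$, and so on --- the same computations already carried out in the proof of Lemma~\ref{lem:KnownArbitrarySA}. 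I would then pick
\[
c \in \left( \frac{\alpha}{1-\alpha},\ \frac{\alpha\beta}{(1-\alpha)(1-\beta)} \right),
\]
an interval that is nonempty exactly because $\beta > 1/2$.

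Recall that with known correlation $ns \equiv ts$ and $is \equiv fs$, so there are only two contract classes and it suffices to show that full-sharing strictly Pareto-dominates no-sharing. For the no-sharing game I would argue that the profile in which both firms always take the safe action $0$ (yielding each firm utility $0$) is the \emph{unique} pure equilibrium. The lower bound $c > \alpha/(1-\alpha)$ makes following one's own positive signal unprofitable even against a firm playing safe; this both confirms that ``both safe'' is an equilibrium and eliminates the three profiles in which at least one firm follows its positive signal (against a firm playing safe, or against a firm following its positive signal, deviating to ``safe'' is profitable). The choice $\alpha \ge 2/3$ rules out any firm taking the significant action on its \emph{negative} signal, since playing $1$ on a signal $s$ can be a best reply only if $Pr[1 \mid s]/Pr[0 \mid s] \ge c/2$, which fails for $B$ because $c > \alpha/(1-\alpha) \ge 2(1-\alpha)/\alpha$, and fails for $b$ because $c > \alpha/(1-\alpha) \ge 2 \ge 2(1-\beta)/\beta$. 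This exhaustive check over the finitely many pure profiles is where essentially all the (routine) work lies.

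For the full-sharing game, since $\sigma_1^{fs} = \sigma_2^{fs}$ there is a symmetric equilibrium in which each firm plays $1$ on a signal pair $Xx$ iff $Pr[1 \mid Xx] \ge Pr[0 \mid Xx]\,c$ (the split only scales the indifference condition by $1/2$, leaving the threshold unchanged). With the chosen $c$ this equilibrium is precisely ``$Aa \to 1$, all other pairs $\to 0$'': the upper bound $c < \alpha\beta/((1-\alpha)(1-\beta))$ gives $Aa \to 1$, while $c > \alpha/(1-\alpha)$ already exceeds each of $\alpha(1-\beta)/((1-\alpha)\beta)$, $(1-\alpha)\beta/(\alpha(1-\beta))$ and $(1-\alpha)(1-\beta)/(\alpha\beta)$ (using $\alpha \ge \beta > 1/2$), forcing $Ab, Ba, Bb \to 0$. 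A one-line computation then gives $u_i^{fs} = \tfrac14\big(\alpha\beta - (1-\alpha)(1-\beta)c\big) > 0$ for both firms.

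Combining: every no-sharing (hence every train-sharing) equilibrium gives both firms $0$, while the displayed full-sharing equilibrium gives both firms a strictly positive amount, so $fs \succ ns$ and $fs \succ ts$ strictly; thus $ns$ and $ts$ are not Pareto-optimal, and full-sharing --- equivalently, infer-sharing --- is the unique individually-rational Pareto-optimal (``feasible'') contract. I expect the only delicate part to be the exhaustive verification that ``both safe'' is the \emph{unique} pure no-sharing equilibrium, in particular ruling out the perverse off-signal strategies; this is precisely why I pin $\alpha$ away from $1/2$ via $\alpha = \max\{\beta, 2/3\}$ rather than simply taking $\alpha = \beta$.
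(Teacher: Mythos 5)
Your proof is correct, but it establishes the lemma through a genuinely different parameter regime than the paper's. The paper keeps the cost moderate so that, under no-sharing, the equilibrium is still ``both firms follow their positive signals'' ($A\rightarrow 1$, $a\rightarrow 1$); the Pareto dominance of full-sharing is then obtained by an explicit utility comparison, $u_2^{ns}\le u_1^{ns}=\frac{\alpha\beta}{2}+\alpha(1-\beta)-C_1\left(\frac{(1-\alpha)(1-\beta)}{2}+\beta(1-\alpha)\right)\le\frac{1}{2}\left(\alpha\beta-C_1(1-\alpha)(1-\beta)\right)=u_1^{fs}=u_2^{fs}$, verified at a concrete point such as $\alpha=0.9,\beta=0.85,C_1=2.5$. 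You instead push the cost into the window $c\in\left(\frac{\alpha}{1-\alpha},\frac{\alpha\beta}{(1-\alpha)(1-\beta)}\right)$, where a single positive signal no longer justifies the significant action but the conjunction $Aa$ still does; no-sharing then collapses to the all-safe equilibrium with utility $0$ for both firms, and dominance by full-sharing's strictly positive $\frac{1}{4}\left(\alpha\beta-c(1-\alpha)(1-\beta)\right)$ is immediate. The trade-off is where the work sits: the paper's route needs the utility inequality but little equilibrium analysis, while yours needs the exhaustive uniqueness check for the no-sharing equilibrium (which you handle correctly via the ratio test $Pr[1\mid s]/Pr[0\mid s]\ge c/2$ to kill off-signal play, plus the three remaining profiles). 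Your argument is, if anything, tighter than the paper's sketch, and it cleanly exhibits the qualitative story that full-sharing is valuable precisely when neither firm would act alone. One small caveat: your construction needs $\beta>1/2$ strictly (the interval for $c$ is empty at $\beta=1/2$) and $\beta<1$, but the paper's own closed-form choice has the same implicit restrictions, so this does not distinguish the two.
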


\begin{proof}
We give the idea of the construction. Consider asymmetric significant-action utilities, and fix (normalize) $R_1 = 1$.

Under full-sharing, as we know by the proof of Lemma~\ref{lem:KnownArbitrarySA}, the equilibrium strategies $s_1,s_2$ both have $s_i(Xx)$ for a pair of signals $Xx$ if and only if $Pr[Xx | 1] - C_1 \cdot Pr[Xx | 0] \geq 0$. The expression is monotone in the pair of signals $Xx$: It is highest for $Aa$, lower for $Ab$, even lower for $Ba$, and lowest for $Bb$. Thus, the equilibrium strategies are $Aa \rightarrow 1$ (i.e., the firms predict $1$ upon seeing $Aa$, and $0$ otherwise), if and only if:
\[
\begin{split}
    & Pr[Aa | 1 ] - C_1 Pr[Aa | 0] = \alpha \beta - C_1(1-\alpha)(1-\beta) \geq 0 \\
    & Pr[Ab | 1] - C_1 Pr[Ab | 0] = \alpha(1-\beta) - C_1\beta(1-\alpha) < 0,
    \end{split}
\]

where we use the expressions for independent correlation probabilities. 

Under no-sharing, we can follow the argument of Lemma~\ref{lem:KnownArbitrarySA} (but with a parametric cost $C_1$) to derive inequalities that guarantee an equilibrium where the firms predict according to their signals ($A \rightarrow 1, a \rightarrow 1$). Lastly, given that these are the no-sharing equilibrium strategies, assume that $u_2^{ns} \leq u_1^{ns} = \frac{\alpha\beta}{2} + \alpha(1-\beta) - C_1\left(\frac{(1-\alpha)(1-\beta)}{2} + \beta(1-\alpha)\right) \leq \frac{1}{2}\left(\alpha \beta - C_1(1-\alpha)(1-\beta)\right) = u_1^{fs} = u_2^{fs}$. If all these conditions hold, then full-sharing is the unique IRPO contract. This holds, for example, when $\alpha = 0.9, \beta = 0.85, C_1 = 2.5$. More generally, the conditions hold for any $\beta$ when $\alpha = \beta$ and $C_1 = \frac{2\beta^2 - 2\beta - 1}{2(\beta - 1)(\beta + 1)}$.   

\end{proof}

\section{Proofs for Unknown Correlation Case}
\label{app:unknown_corr}

%\fullDominatesInfer* 

 %\trainEqNoFirst* 

  \fullDominatesTrainSymmetric*

  \begin{proof}

 Under train-sharing, the agents are symmetric, as they share their signals at the training phase, and have the same prediction accuracy at the inference phase. 
Since the agents are symmetric, we can assume a symmetric equilibrium is being played. Then, as we saw in the proof of Lemma~\ref{lem:full_dominates_infer}, this means the equilibrium strategy is to take a significant action (predict $1$) if and only if the expected utility is positive given the correlation learnt in the training phase (and without knowledge of the other firm's signal, as we are under train-sharing). Under full-sharing, the added information of the other firm's signal allows for more granular decisions when to take a significant action, and so $fs \succeq ts$. 
 \end{proof}
 
 \NoSharingSymmetricSA*
 
 \begin{proof}
 We know by Lemma~\ref{lem:full_dominates_infer} that $full-sharing \geq infer-sharing$ and by Lemma~\ref{lem:symmetric_reward_train_eq_no} that $train-sharing = no-sharing$. Thus, the only possibly IRPO contracts are $no-sharing$ or $full-sharing$. However, %in the proof of Lemma~\ref{lem:symmetric_reward_train_eq_no} we already saw that 
 in the symmetric case, we know from the proof of Lemma~\ref{lem:KnownArbitrarySA} that regardless of the correlation $\theta$, full-sharing always results in a symmetric equilibrium $s,s$ where $Aa \rightarrow 1, Ab \rightarrow 1$ (and $0$ otherwise). In essence, this is since these are the pairs of signals with probability $Pr[1 | Xx] > \frac{1}{2}$. 
 Under no-sharing, also by the proof of Lemma~\ref{lem:KnownArbitrarySA}, the equilibrium strategies $s_1, s_2$ do not depend on $\theta$, and $u_1^{ns} \geq u_1^{fs}$ for any fixed $\theta$. 
 Thus, the expected utility for the primary firm under full-sharing and any fixed correlation is $\frac{2\alpha - 1}{4}$, and since the expression does not depend on $\theta$, we have $u_1^{ns} = E_{\theta \sim \Theta} [E[u_1(s_1(X), t, s_2(x) | \theta]] \geq E_{\theta \sim \Theta} [E[u_1(s(X,x), t, s(X,x) | \theta]] = u_1^{fs}$. 
 
 %in any equilibrium the secondary firm chooses $b \rightarrow 0$. Therefore, the primary firm is guaranteed at least $\frac{1}{2} \left( \frac{Pr[X = A, x = a | 1]}{2} + Pr[X = A, x = b | 1] \right)$ in equilibrium, which is strictly larger than its utility under full-sharing. 
 \end{proof}
 
 \FullSharingSymmetricPrediction*
 
 \begin{proof}
 We consider both symmetric and asymmetric no-sharing equilibria. In a symmetric no-sharing equilibrium $s_1,s_2$, the symmetry means that $(s_1(A) = 1) \leftrightarrow (s_2(a) = 1), (s_1(B) = 1) \leftrightarrow (s_2(b) = 1)$. Under such a symmetric equilibrium, since $\alpha = \beta$, the utilities of the firms are the same. The utilities of the firms are the same for full-sharing as well, under the symmetric equilibrium $s,s$. %Let $Xx = \{ Aa, Ab, Ba, Bb \}$ be the set of all signal pairs. 
 The sum of utilities of the firms satisfies
 % \[
 % \begin{split}
 %  & u_1^{ns} + u_2^{ns} =  E[u_1(s_1(X),t,s_2(x)) + u_2(s_1(X),t,s_2(x))] = E_{\theta \sim \Theta} [E[u_1(s_1(X),t,s_2(x)) + u_2(s_1(X),t,s_2(x)) | \theta]] = \\
 %    & E_{\theta \sim \Theta} [E[1[s_1(X) = 1 \lor s_2(x) = 1] u_1(1,t,0) | \theta]] = \\
 %    & E_{\theta \sim \Theta} [E[1[s_1(X) = 1 \lor s_2(x) = 1] (R_1 \cdot Pr[t = 1 | Xx] - C_1 \cdot Pr[t = 0 | Xx]) | \theta]] \leq \\
 %    &  E_{\theta \sim \Theta} [E[(R_1 \cdot Pr[t = 1 | Xx] - C_1 \cdot Pr[t = 0 | Xx])^+ | \theta]] = \\
 %    &  E_{\theta \sim \Theta} [ E[1[s(X,x, \theta) = 1](R_1 \cdot Pr[t = 1 | Xx] - C_1 \cdot Pr[t = 0 | Xx]) | \theta]] = \\
 %  & u_1^{fs} + u_2^{fs},
 % \end{split}
 % \]

 \begin{align*}
     & u_1^{ns} + u_2^{ns} \\
     &=  E[u_1(s_1(X),t,s_2(x)) + u_2(s_1(X),t,s_2(x))]\\
     &= E_{\theta \sim \Theta} [E[u_1(s_1(X),t,s_2(x)) + u_2(s_1(X),t,s_2(x)) | \theta]]\\ 
     &= 
    E_{\theta \sim \Theta} [E[1[s_1(X) = 1 \lor s_2(x) = 1] u_1(1,t,0) | \theta]]\\ &= 
     E_{\theta \sim \Theta} [E[1[s_1(X) = 1 \lor s_2(x) = 1] (R_1 \cdot Pr[t = 1 | Xx] \\
     & - C_1 \cdot Pr[t = 0 | Xx]) | \theta]]\\ &\leq 
      E_{\theta \sim \Theta} [E[(R_1 \cdot Pr[t = 1 | Xx] - C_1 \cdot Pr[t = 0 | Xx])^+ | \theta]]\\ &= 
 E_{\theta \sim \Theta} [ E[1[s(X,x, \theta) = 1](R_1 \cdot Pr[t = 1 | Xx] \\
 & - C_1 \cdot Pr[t = 0 | Xx]) | \theta]]\\ &= 
   u_1^{fs} + u_2^{fs},
 \end{align*}
 
 where we use the notation $(x)^+ = \max \{x, 0\}$. We also use the fact that full-sharing takes a significant action if and only if $R_1 \cdot Pr[t = 1 | X, x, \theta] - C_1 \cdot Pr[t = 0 | Xx, \theta]$ holds, as we have seen in previous proofs. Therefore, and since $u_1^{fs} = u_2^{fs}$, each of the firms has at least as much expected utility under full-sharing than under no-sharing, which yields the theorem statement.
 
 As for asymmetric no-sharing equilibria, it can be directly calculated that the only possible such equilibrium when $\alpha = \beta$ is where $s_1$ is $A \rightarrow 1, B \rightarrow 0$, and $s_2$ always predicts $0$. Given this equilibrium, we know that 
 %\begin{equation}
 %\label{eq:asymmetric_equilibrium_equal_precision_A}E_{\theta}[Pr[0 | Aa, \theta] + Pr[0 | Ab, \theta] - C\cdot \left([Pr[1 | Aa, \theta] + Pr[1 | Ab, \theta] \right)] = E_{\theta}[Pr[0 | Aa, \theta] + Pr[0 | Ba, \theta] - C\cdot \left([Pr[1 | Aa, \theta] + Pr[1 | Ba, \theta] \right)] \geq 0,\end{equation}
 %by the fact that $A \rightarrow 0$, and 
 \begin{equation} \label{eq:asymmetric_equilibrium_equal_precision_a}
 \begin{split}
 & E_{\theta \sim \Theta}[R_1 \left( \frac{Pr[1 | Aa, \theta]}{2} + Pr[1 | Ba, \theta] \right)  - C_1 \cdot \left(\frac{[Pr[0 | Aa, \theta]}{2} + Pr[0 | Ba, \theta] \right)] < 0.
 \end{split}\end{equation}
 by the equilibrium condition that determines $a \rightarrow 1$. 
 
 Thus, 
 \[
 \begin{split}
  u_1^{ns} 
 & = E_{\theta \sim \Theta}[R_1 \left( Pr[1 | Aa, \theta] + Pr[1 | Ab, \theta]\right)  - C_1 \cdot \left([Pr[0 | Aa, \theta] + Pr[0 | Ab, \theta] \right)] \\
 & =  E_{\theta \sim \Theta}[R_1 \cdot \frac{ Pr[1 | Aa, \theta]}{2} - C_1 \cdot \frac{Pr[0 | Aa, \theta]}{2}] \\
 & + E_{\theta \sim \Theta}[R_1 \cdot \frac{Pr[1 | Aa, \theta]}{2} + Pr[1 | Ab, \theta] - C_1 \cdot \left(\frac{[Pr[0 | Aa, \theta]}{2} + Pr[0 | Ab, \theta] \right)] \\
 & \stackrel{\text{Eq.~\ref{eq:asymmetric_equilibrium_equal_precision_a}}}{<} E_{\theta}[R_1 \cdot  \frac{Pr[1 | Aa, \theta]}{2} - C_1 \cdot \frac{Pr[0 | Aa, \theta]}{2}] \\
 & \leq E_{\theta}[\left(R_1 \cdot \frac{Pr[1 | Aa, \theta]}{2} - C_1 \cdot \frac{Pr[0 | Aa, \theta]}{2} \right)^+] \\
 & \leq u_1^{fs}.
 \end{split}
 \]
 
 \end{proof}
 
 \TrainSharingExample*
 
 \begin{proof}
 
 We describe the general settings that yield our example, and then state such $\pi_{\theta}, \alpha, \beta, R_1, C_1$ values that implement it. Consider two possible worlds, with probability $w$ and $1-w$, respectively: In the first, signals are independent, and in the second, signals are totally correlated, i.e., $\theta$ is the maximal correlation possible between two firms with prediction accuracies $\alpha, \beta$. 
Notice that since \[
\begin{split}
& \beta = Pr[x=a | 1] \geq Pr[X=A \land x = a | 1]  \stackrel{Eq.~\ref{eq:Aa1_formula}}{=} \sqrt{\alpha \beta}\left(\sqrt{\alpha \beta} + \theta \cdot \sqrt{(1-\alpha)(1-\beta)}\right),
\end{split}
\]
we have $\theta \leq \sqrt{\frac{\beta(1-\alpha)}{\alpha(1-\beta)}}$, and $\theta$ is maximized when this inequality holds as an equality. When this happens, we have $Pr[X=A \land x = a | 1] = \beta, Pr[X=A \land x = b | 1] = \alpha - \beta, Pr[X=B \land x = a | 1] = 0, Pr[X=B \land x = b | 1] = 1 - \alpha$, and the mirror image of it conditional on $0$: $Pr[X=B \land x = B | 0] = \beta, Pr[X=B \land x = a | 0] = \alpha - \beta, Pr[X=A \land x = b | 0] = 0, Pr[X=A \land x = A | 0] = 1 - \alpha$. See Figure~\ref{fig:correct-modelling} for an illustration of the case when $\alpha = 0.7, \beta = 0.6$. 

For conciseness, in this proof we specify the strategies by what signals lead to predicting $1$, and all other signals lead to predicting $0$. 
Now, consider a case where the train-sharing equilibrium in the independent correlation case is $A \rightarrow 1, a \rightarrow 1$, and in the total correlation case is $A \rightarrow 1$, while under no-sharing the equilibrium is $A \rightarrow 1, a\rightarrow 1$.  With probability $w$, $\theta = 0$ and the equilibrium the firms play under both regimes is the same and so are the utilities. But w.p. $1-w$, the equilibrium played under the two regimes is not the same. This benefits the secondary firm by the fact that this is the train-sharing equilibrium: It means that given that $\theta = \theta_{max}$ and the primary firm plays $A\rightarrow 1$, the secondary firm's best response is to not take a significant action given any signal. But this also benefits the primary firm, since now it does not need to share its reward of $R_1 \cdot Pr[1 | Aa, \theta_{max}] - C_1 \cdot Pr[0 | Aa, \theta_{max}]$ with the secondary firm, and this expression is strictly positive as part of the equilibrium condition for the primary firm. Overall, this establishes ($train-sharing > no-sharing$). 

It is then enough, in order for $train-sharing$ to be the unique feasible contract, to require that it does not hold that $full-sharing > no-sharing$ (recall that by Lemma~\ref{lem:full_dominates_infer}, $full-sharing \geq infer-sharing$). This holds whenever the primary firm's utility is larger under no-sharing than under full-sharing. 

We can require that the full-sharing equilibrium both in the independent case and the totally correlated case is $Aa \rightarrow 1, Ab \rightarrow 1$, and this implies $$u_1^{fs} = \frac{1}{4} \left( w\left(R_1 \alpha - C_1(1-\alpha)\right) + (1-w)\left(R_1 \alpha - C_1(1- \beta)\right) \right).$$ 
% The 1/8 results from 3 sources of 1/2: The distribution is 1/2-1/2, the utility is shared between the two firms, and only gained when the true realization is 1 which happens w.p. 1/2
We also have: 

\[
\begin{split}
  u_1^{ns} & = \frac{1}{2} \bigg( w\bigg(R_1 \left(\frac{\alpha \beta}{2} + \alpha(1-\beta)\right) - C_1\left( \frac{(1 - \alpha)(1 - \beta)}{2} + (1-\alpha) \beta \right)\bigg) \\
& + (1-w) \left(R_1 \left(\frac{\beta}{2} + \alpha - \beta\right) -C_1 \cdot \frac{1 - \beta}{2} \right)\bigg),
\end{split}
\]

and we impose the condition $u_1^{ns} > u_1^{fs}$. 

Finally, we note that the open set $0.72 < \alpha < 0.721, 0.513 < \beta < 0.514, 0.755 < C_1 < 0.756, 0.999 < R_1 < 1.001, 0.5 < w < 0.50001$ satisfies all the above conditions.
 
 \end{proof}
 
  \subsection{Representing the Correlation Model of Section~\ref{sec:corrModel} in our General Framework of Section~\ref{sec:model}}
  
First, we show how to describe our model of unknown correlations within the framework of our general model for firms' prediction-sharing. Consider the following construction. Fix some $\alpha, \beta$ and correlation distribution $\Theta$. We draw $\eta \sim \mathrm{Uniform}([0,1])$ and $\theta \sim \Theta$, and describe how the two parameters determine a world $w_{\eta, \theta}$. In a world $w_{\eta, \theta}$, the interval structure of firm 1 satisfies 
\[
\begin{split} & A_w = \\
& \begin{cases}
\{ [\eta, \eta + \alpha] \times \{1\} \cup ([0,1]\setminus [\eta, \eta + \alpha]) \times \{0\} \} \\
\qquad \qquad \qquad \qquad \eta + \alpha \leq 1 \\
\{ [\eta, 1] \cup [0, \eta + \alpha - 1] \times \{1\} \\
\cup ([0,1]\setminus ([\eta, \eta + \alpha] \cup [0, \eta + \alpha - 1) )) \times \{0\} \} \\
\qquad \qquad \qquad \qquad \eta + \alpha > 1 \\
\end{cases}.\end{split}\] We let $\rho = \sqrt{\alpha \beta}(\sqrt{\alpha \beta} + \theta \cdot \sqrt{(1-\alpha)(1-\beta)}$ as in Equation~\ref{eq:Aa1_formula}. 
Let 
\[
\begin{split}
& e_w = \\
& \begin{cases}
[\eta + \alpha - \rho, \eta + \alpha - \rho + \beta] \\ \qquad \qquad \qquad \qquad \eta + \alpha - \rho + \beta \leq 1 \\
[\eta + \alpha - \rho, 1] \cup [0, \beta - (1 - (\eta + \alpha - \rho))] \\
\qquad \qquad \qquad \qquad  \eta + \alpha - \rho + \beta > 1 \land \eta + \alpha - \rho \leq 1 \\
[\eta + \alpha - \rho - 1,\beta + \eta + \alpha - \rho - 1] \\
\qquad \qquad \qquad \qquad \eta + \alpha - \rho > 1 \land \eta + \alpha - \rho + \beta \leq 2 \\
[\eta + \alpha - \rho - 1,1] \cup [0, \beta - (1 - (\eta + \alpha - \rho - 1))] \\
\qquad \qquad \qquad \qquad  \eta + \alpha - \rho > 1 \land \eta + \alpha - \rho + \beta > 2\\
\end{cases}.
\end{split}
\]
The interval structure of firm 2 then satisfies
$a_w = \{e_w \times \{1\} \cup ([0,1]\setminus e_w) \times \{0\} \}$. 

The main technical claim for the construction is that not only firm 1, but also firm 2, gets a uniform draw over its intervals $a_w$. That is, if $\mathrm{frac}(x) = x - \lfloor x \rfloor$ is the fractional part of a number $x$, then $\mathrm{frac}(\eta + \alpha - \rho) \sim \mathrm{Uniform}([0,1])$ (up to a measure zero adjustment at $0$ and $1$). 
To see this, observe that, for every $\rho$, $\mathrm{frac}(\eta + \alpha - \rho) | \rho \sim \mathrm{Uniform}([0,1])$, and so the probability density function $\tilde{p}(\mathrm{frac}(\eta + \alpha + \rho)) = E_{\rho \sim \rho(\Theta)} [\mathrm{frac}(\eta + \alpha + \rho) | \rho] = 1$.

We can then verify that the firms' posterior over the correlation between the firms' signals is $\Theta$ regardless of the training signal $w_i$---their own interval structure---that they observe. As we show in Section~\ref{subsec:knownIndependent}, knowing $Pr[X = A \land x = a | 1]$ uniquely determines $\theta$, and vice-versa. In our construction, any world has $Pr[X = A \land x = a | 1] = \rho$, which is a 1-to-1 function of $\theta$. So, firm 1, upon learning $A_w$, knows that the distribution over $\rho$ is determined by $\Theta$, and so concludes that the distribution over correlations is $\Theta$. Similarly, firm 2, upon learning $a_w$ (which is determined by $\eta + \alpha - \rho$), knows that $\eta$ is drawn uniformly over $[0,1]$, and so by Bayes' rule $\tilde{p}(\rho | \eta + \alpha - \rho) = \tilde{p}(\eta + \alpha - \rho | \rho) \frac{\tilde{p}(\rho)}{\tilde{p}(\eta + \alpha - \rho)} = \tilde{p}(\rho)$, where the last transition follows from the firms' uniform prior over world models. 

To further illustrate this construction, let us use the example we repeatedly use in our proofs, where the correlation (conditioned on the true realization) between the firms' signal is either $0$ (conditional independence) or the maximal possible (total correlation). Figure~\ref{fig:incorrect-modelling} shows an intuitive but \textbf{incorrect} way to model this case within our general framework. Figure~\ref{fig:correct-modelling} shows a simple correct modelling using a finite number of worlds (which is \textit{not} our general construction). Figure~\ref{fig:general_construction} shows how we model this case using our general construction. 

\begin{figure}[!htb]
\includegraphics[scale=0.6]{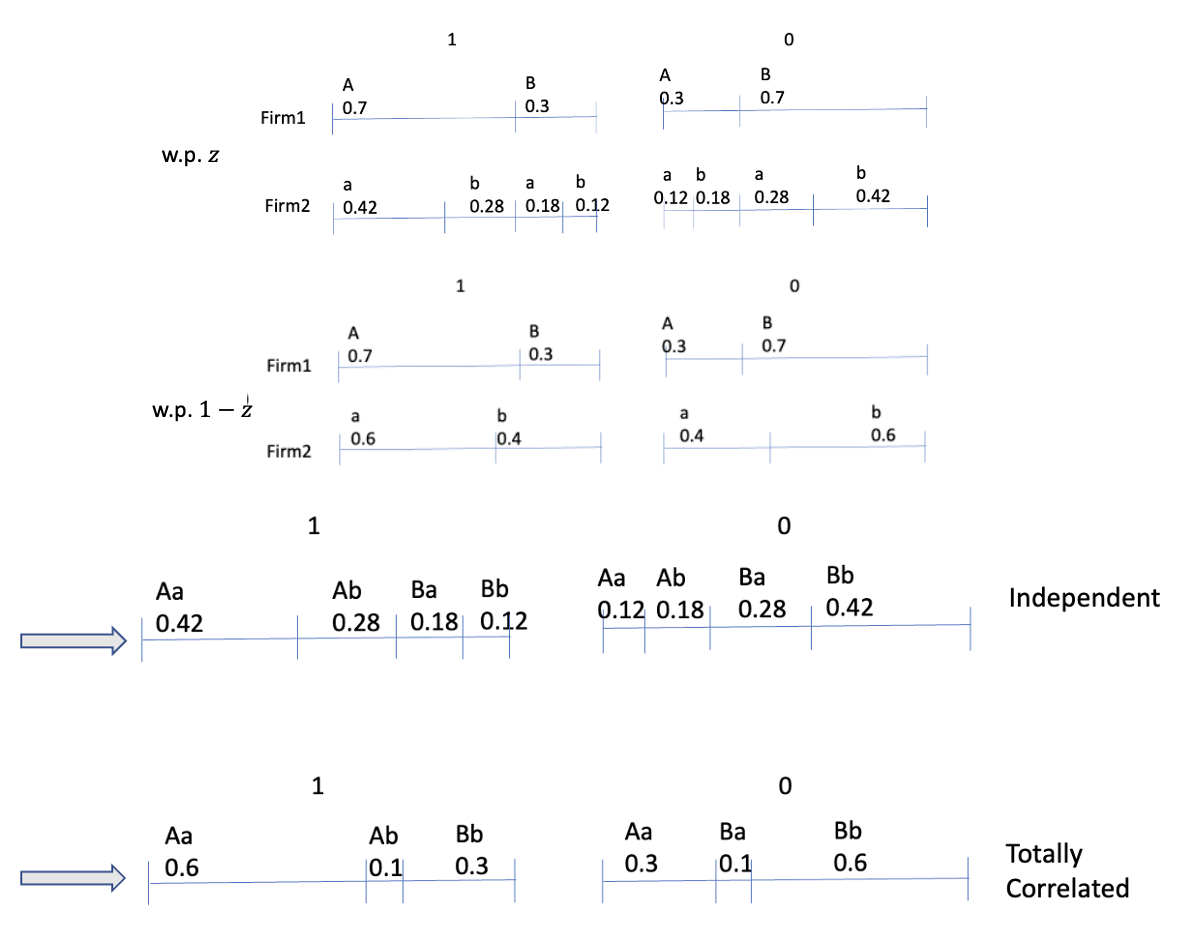}
\caption{\textbf{Incorrect} Modelling of the unknown correlation within the general prediction-sharing framework. In this modelling, we have $\alpha = 0.7, \beta = 0.6$, and there are two possible worlds, one appears w.p. $z$, and results in independent signals $A, a$ and $B,b$ given the true realization (whether $0$ or $1$). The other appears w.p. $1-z$ and results in totally correlated signals. However, this modelling does not capture our model of unknown correlation, since firm 2 can deduce the correlation based only on its own information.
}
\label{fig:incorrect-modelling}
\end{figure}

\begin{figure}[!htb]
\includegraphics[scale=0.8]{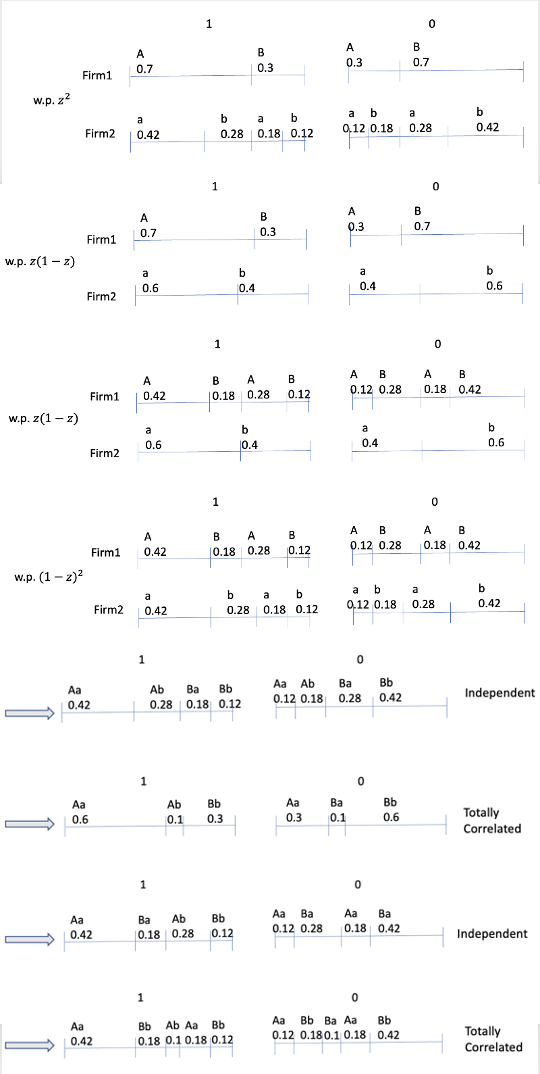}
\caption{\textbf{Correct} Modelling of the unknown correlation within the general prediction-sharing framework. In this modelling, we have $\alpha = 0.7, \beta = 0.6$, and there are \textit{four} possible worlds, appearing respectively w.p. $z^2, (1-z)z, (1-z)z,$ and $(1-z)^2$. The first and third possible worlds result in independent signals $A, a$ and $B,b$ given the true realization (whether $0$ or $1$). The second and fourth possible worlds result in totally correlated signals. 
In this modelling, whatever interval structure firm 1 sees, a Bayesian updating of the posterior would lead it to believe that the correlation between the firms' signal is independent w.p. $z$ and totally correlated w.p. $1-z$. The same holds for firm 2. 
}
\label{fig:correct-modelling}
\end{figure}

\begin{figure}[!htb]
\includegraphics[scale=0.8]{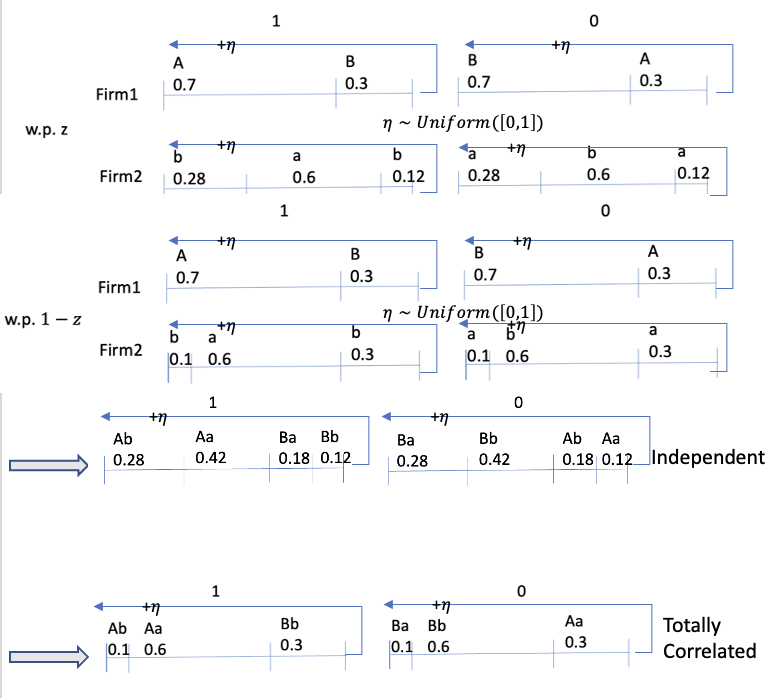}
\caption{\textbf{General \& Correct} Modelling of the unknown correlation within the general prediction-sharing framework, using our \textit{general construction}. In this modelling, we have $\alpha = 0.7, \beta = 0.6$, and there are \textit{infinite} possible worlds, drawn either from the upper type (representing the independent signals case) w.p. $z$ or the lower type (representing the totally correlated signals case) w.p. $1-z$, and then the intervals as described in the figure are shifted cyclically with an offset $\eta \sim \mathrm{Uniform}([0,1])$. 
In this modelling, whatever interval structure firm 1 sees, a Bayesian updating of the posterior would lead it to believe that the correlation between the firms' signal is independent w.p. $z$ and totally correlated w.p. $1-z$, and the same holds for firm 2. 
}
\label{fig:general_construction}
\end{figure}

\clearpage
 
 \section{Two Hypotheses Model Proofs}

\begin{figure}[!htb]
\includegraphics[scale=0.4]{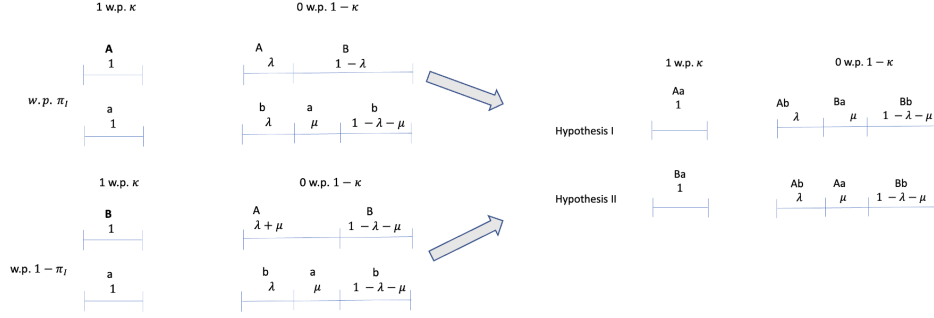}
\caption{For the reader's convenience we include Figure~\ref{fig:two_hyp_general_framework} again here. 
}
\label{fig:two_hyp_general_framework2}
\end{figure}
 
 \InferSharingExample*
 \begin{proof}
We describe conditions that yield the stated result.

%$ \alpha = \kappa$
%$ \beta / (1 - \alpha) = \lambda $
%$ \delta / (1 - \alpha) = \mu$

(Conditions that yield $no-sharing = train-sharing$)

Consider 

\begin{equation}
\label{eq:thm4_ns=ts}
(1-\kappa) \mu > 2\kappa, %\frac{1}{2}(1 - \kappa)(1 - \lambda - \mu)
\frac{1}{2}(1 - \kappa)(1 - \lambda - \mu)
> \kappa > \frac{(1-\kappa)\lambda}{2}.\end{equation} Then, 

Since $w_1$ (the signal that Firm1 receives regarding the true model) is enough to determine the correct hypothesis with certainty, under train-sharing both firms know the correct hypothesis given $\sigma_i^{ts}$. Let $\sigma_2^{ts} = \{w_1, w_2, a\}$, i.e., Firm2 sees $w_1, w_2$ in the training phase and the signal $a$ in the inference phase. For any strategy $s_1$ of Firm1, and train-phase signals $w_1, w_2$, we have

% \[
% \begin{split}
% & \tilde{u}_2^{ts}(\{w_1, w_2, a\}, 0, s_1)=E\Big[u_2\Big(0,\tlabel, s_1(\{w_1, w_2, X\})\Big)\mid  \{w_1, w_2, a\}\Big]  \geq \\
% & E\Big[u_2\Big(0,\tlabel, 0\Big)\mid  \{w_1, w_2, a\}\Big] = \\
% & \frac{1}{2} Pr[\tlabel = 0 | \{w_1,w_2, a\} ] = \\
% & \frac{1}{2}Pr[\tlabel = 0 | a ] = \\
% & \frac{1}{2}\frac{(1-\kappa)\mu}{\kappa + (1-\kappa)\mu} \stackrel{\text{Eq.~\ref{eq:thm4_ns=ts}}}{>} \\
% & \frac{\kappa}{\kappa + (1-\kappa)\mu} = Pr[\tlabel = 1 | \{w_1,w_2, a\} ] = \\
% & E\Big[u_2\Big(1,\tlabel, 0\Big)\mid  \{w_1, w_2, a\}\Big] \geq \\
% & E\Big[u_2\Big(0,\tlabel, s_1(\{w_1, w_2, X\})\Big)\mid  \{w_1, w_2, a\}\Big] = \tilde{u}_2^{ts}(\{w_1, w_2, a\}, 1, s_1),
% \end{split}
% \]

\[
\begin{split}
& \tilde{u}_2^{ts}(\{w_1, w_2, a\}, 0, s_1) =E\Big[u_2\Big(0,\tlabel, s_1(\{w_1, w_2, X\})\Big)\mid  \{w_1, w_2, a\}\Big] \\ &\geq 
 E\Big[u_2\Big(0,\tlabel, 0\Big)\mid  \{w_1, w_2, a\}\Big] =  \frac{1}{2} Pr[\tlabel = 0 | \{w_1,w_2, a\} ]\\ &= 
 \frac{1}{2}Pr[\tlabel = 0 | a ] = \frac{1}{2}\frac{(1-\kappa)\mu}{\kappa + (1-\kappa)\mu} \\ & \stackrel{\text{Eq.~\ref{eq:thm4_ns=ts}}}{>} 
 \frac{\kappa}{\kappa + (1-\kappa)\mu} = Pr[\tlabel = 1 | \{w_1,w_2, a\} ] = 
 E\Big[u_2\Big(1,\tlabel, 0\Big)\mid  \{w_1, w_2, a\}\Big]\\ &\geq 
 E\Big[u_2\Big(0,\tlabel, s_1(\{w_1, w_2, X\})\Big)\mid  \{w_1, w_2, a\}\Big] \\
 & = \tilde{u}_2^{ts}(\{w_1, w_2, a\}, 1, s_1),
\end{split}
\]

i.e., it is dominant for secondary firm to play $a \rightarrow 0$ regardless of the hypothesis or the primary's firm action. Since the signal $b$ always coincides with true realization $0$, the secondary firm always plays $b \rightarrow 0$ as well. But since predicting $0$ is the dominant strategy for any training-phase signal under train-sharing, as we saw in previous proofs, this is also the dominant strategy under $no-sharing$ for the secondary firm. 

As for the primary firm, because it has a different signal for the true realization $1$ under each hypothesis, it can determine which hypothesis is true both in the $no-sharing$ and $train-sharing$ contracts. Overall, this means that the equilibrium for both contracts is that the secondary firm always predicts $0$, and the primary firm, given that it finds that hypothesis I is correct, predicts $A \rightarrow 1, B \rightarrow 0$, and given that it finds that hypothesis II is correct, always predicts $0$ (by a direct calculation of the conditional utilities for Firm1 and using Equation~\ref{eq:thm4_ns=ts}). 

(Condition that prevents $full-sharing$ being always individually rational)

Consider 
\begin{equation}
\label{eq:thm4_fs_not_IR}
\kappa > (1-\kappa)\lambda.\end{equation} 

Under full-sharing, both firms can determine the true hypothesis and given the pair of signals can also follow up with predicting the correct true realization with certainty. Thus, the expected utility under any $w_1, w_2$, for each firm, is $\frac{1}{2}$. To show that it does not hold that full-sharing is always individually rational, it thus suffices to show that $u_1^{ns} > \frac{1}{2}$ with some $(w_1,w_2)$. There are two possible $w_1$ for the primary firm: Either that the true realization $1$ always coincides with $A$ (which happens if and only if Hypothesis I is true), or that it always coincides with $B$. If the first holds, and by our assumption in Equation~\ref{eq:thm4_fs_not_IR}, for the equilibrium strategies we saw in our analysis of no-sharing, %$u_1^{ns} = \frac{1 + \kappa - (1 - \kappa)\lambda}{2} > \frac{1}{2} = u_1^{fs}$. 
$ u_1^{ns} = \kappa + \frac{(1-\kappa)(1-\lambda)}{2} = \frac{2\kappa + 1 - \kappa - \lambda + \kappa\lambda}{2} = \frac{1 + \kappa - (1-\kappa)\lambda}{2} > \frac{1}{2} = u_1^{fs}$. 

(Conditions for $infer-sharing > no-sharing$)

We note that under infer-sharing, the primary firm can both determine the correct hypothesis and has the pair of signals that determines the true realization, and thus always predicts correctly. We wish to find a condition so that the secondary firm has the same strategy as under $no-sharing$, to always predict $0$. Under both hypotheses, the secondary firm should predict $Ab, Bb \rightarrow 0$. As for signal $Aa$, the secondary firm predicts $0$, as long as
$(1-\pi_I) (1-\kappa) \mu > \pi_I \kappa$. Similarly for signal $Ba$, the secondary firm predicts $0$ as long as $\pi_I (1-\kappa)\mu > (1-\pi_I) \kappa$. When this holds, the primary firm gains utility through its ability to differentiate between the $Ba$ and $Bb$ signals under hypothesis II. This ``free meal'' phenomenon is similar to the example in the introduction of \cite{CoopetitionAmazon}. 

We wrap up by noting that all the above conditions are satisfied when $0.2 < \mu < 0.38, 0.05 < \kappa < 0.06, 0.011 < \lambda < 0.02$, and  $0.65 < \pi_I < 0.75$. 
\end{proof}

\oneSampleTwoHypotheses*

%\yotam{This is a definition of welfare optimality, use it to explain in which sense still infer-sharing is the `correct' contract here}
%Given training signals $w_1$ and $w_2$ and a contract $ct\in\{ns, ts, is, fs\}$ that is IR for both firms, denote by $s^{ct}(w_1,w_2)$ an equilibrium under which the total welfare $u_1(w_1,s^{ct}) + u_2(w_2,s^{ct})$ is maximal. Also, denote by $u(w_1, w_2)$ the maximal total welfare under any IR contract, conditional on signals $w_1$ and $w_2$. Next, define a contract $ct$ as {\em welfare-optimal IR} 
%if $ct$ is always IR, and if 
%$$E\left[u_1^{ct}(w_1)+u_2^{ct}(w_2)\right] \geq E\left[u(w_1,w_2)\right],$$
%where the expectations are over $(w_1,w_2)$. Finally, define a contract $ct$ as {\em uniquely welfare-optimal IR} if $ct$ is the only contract that is welfare-optimal IR.

\begin{proof}
We consider additional conditions, on top of these of Theorem 4, that would yield the stated result: An example where firm 2 has higher expected utility in the IRPO contract than firm 1, and that all contracts besides no-sharing are not individually rational. 

(Under train-sharing)
In the two hypotheses model with one sample, the pair of signals $Aa$ and $Ba$, together with their true realization, determines with certainty the true hypothesis. On the other hand, the signals $Ab$ or $Bb$ together with their true realization (that can only be $0$) maintains the same posterior as the prior. Thus, we conclude that under the train-sharing and full-sharing contracts, there is a probability of $\kappa + (1-\kappa)\mu$ that both firms learn the true hypothesis, and otherwise the firms maintain their prior. 

The secondary firm always predicts $0$ (the dominance argument of Theorem~\ref{thm:infer-sharing} generalizes regardless of the training-phase signal), and the primary firm predicts $A \rightarrow 1, B \rightarrow 0$ if it knows Hypothesis I is correct, and always predicts $0$ if it knows Hypothesis II is correct. Otherwise, its posterior is the same as the prior, and so its prediction for $A$ is $0$ if and only if $\frac{(1-\kappa)(\lambda + (1-\pi_I)\mu)}{2} \geq \pi_I \kappa $, and its prediction for $B$ is $0$ if and only if $\frac{(1-\kappa)(1 - \lambda - (1-\pi_I)\mu)}{2} \geq (1-\pi_I) \kappa$. For the parameters used in Theorem~\ref{thm:infer-sharing}, this results in $A \rightarrow 1, B \rightarrow 0$. The expected utility of the primary firm is thus 
\[
\begin{split}
& Pr[Hypothesis I] \cdot \left( Pr[A \land 1| Hypothesis I] + \frac{1}{2}Pr[B \land 0]\right) \\
& + Pr[Hypothesis II]\bigg( Pr[s_1 = A\rightarrow 0, B\rightarrow 0 | Hypothesis II]  \cdot \frac{1}{2}Pr[0 | Hypothesis II] \\
& + Pr[s_1 = A\rightarrow 1, B\rightarrow 0 | Hypothesis II]  \cdot \left(\frac{1}{2}Pr[B \land 0 | Hypothesis II] \right) \bigg) \\
& = \pi_I \cdot (\kappa + \frac{(1 - \kappa)(1 - \lambda)}{2})  + (1-\pi_I) ((\kappa + (1-\kappa)\mu) \cdot \frac{1 - \kappa}{2}  + (1 - \kappa)(1-\mu) \cdot \frac{(1-\kappa)(1 - \lambda - \mu)}{2}), \end{split}
\]
%$1 - (\kappa + (1 - \kappa)\mu) = (1 - \kappa)(1-\mu) $
%$\pi_I \cdot (\kappa + \frac{(1 - \kappa)(1 - \lambda)}{2}) + (1-\pi_I) ((\kappa + (1-\kappa)\mu) \cdot \frac{1 - \kappa}{2} + (1 - \kappa)(1-\mu) \cdot (\kappa + \frac{(1-\kappa)(\lambda + \mu)}{2})) \approx 0.421204$, 
and the utility of the secondary firm is 

\[
\begin{split}
& Pr[Hypothesis I] \cdot \left( Pr[A \land 0| Hypothesis I] + \frac{1}{2}Pr[B \land 0]\right) \\
& + Pr[Hypothesis II] \bigg( Pr[s_1 = A\rightarrow 0, B\rightarrow 0 | Hypothesis II] \cdot\frac{1}{2}Pr[0 | Hypothesis II] \\
& + Pr[s_1 = A\rightarrow 1, B\rightarrow 0 | Hypothesis II] \cdot \left(Pr[A \land 0 | Hypothesis II] \cdot \frac{1}{2}Pr[B \land 0 | Hypothesis II] \right) \bigg) \\
& = \pi_I \cdot ((1-\kappa)\lambda + \frac{(1 - \kappa)(1 - \lambda)}{2})  + (1-\pi_I) ((\kappa + (1-\kappa)\mu) \cdot \frac{1 - \kappa}{2} \\
& + (1 - \kappa)(1-\mu) \cdot \frac{(1-\kappa)(\lambda + \mu) + (1-\kappa)(1 - \lambda - \mu)}{2}). \end{split}
\]

 %I.e., what we called the primary and secondary firm are actually reversed in their utilities! This is since the main advantage of the primary firm in the two hypotheses settings is its ability to differentiate the two hypotheses, which is much reduced in the one sample case.

(Under no-sharing)

The secondary firm always predicts $0$. Thus, the strategy of the primary firm follows a direct application of Equation~\ref{eq:conditional_equilibrium_condition} where we have the strategy $s_2 = 0$ of firm 2.

The primary firm knows Hypothesis I is correct with certainty when it sees $(A,1)$ (which happens w.p. $\pi_I \kappa$), and then predicts $A \rightarrow 1, B \rightarrow 0$. It knows Hypothesis II is correct with certainty when it sees $(B,1)$ (which happens w.p. $(1-\pi_I) \kappa$) and then always predicts $0$. For $(A,0)$, we have $Pr[(A,0) | Hypothesis I] = (1-\kappa)\lambda, Pr[(A,0) | Hypothesis II] = (1-\kappa)(\lambda + \mu)$, and so $w' = Pr[Hypothesis I | (A,0)] = Pr[(A,0) | Hypothesis I] \frac{Pr[Hypothesis I]}{Pr[(A,0)]} = (1-\kappa)\lambda \cdot \frac{\pi_I}{\pi_I (1- \kappa)\lambda + (1-\pi_I)(1-\kappa)(\lambda + \mu)}$. Similarly, for $(B,0)$, we have $w'' = Pr[Hypothesis I | (B,0)] = \frac{(1 -\kappa)(1 - \lambda)\pi_I}{(1 - \kappa)(1-\lambda)\pi_I + (1-\pi_I)(1-\kappa)(1 - \lambda - \mu)}$. 
We conclude (similarly to as we did in the train-sharing case for the original prior) that if the primary firm sees $(A,0)$ it always predicts $0$, and if the primary firm sees $(B,0)$, it predicts $A \rightarrow 1, B \rightarrow 0$. The primary firm's expected utility is then 
\[
\begin{split}
& Pr[Hypothesis I] \cdot \bigg(Pr[(B,0) \lor (A,1) | Hypothesis I]  \cdot \bigg(Pr[A \land 1| Hypothesis I] + \frac{1}{2}Pr[B \land 0]\bigg) \\
& + Pr[(A,0) | Hypothesis I] \cdot \left(\frac{1}{2}Pr[0 | Hypothesis I]\right)\bigg) \\
& + Pr[Hypothesis II] \cdot \bigg( Pr[(B,1) \lor (A,0) | Hypothesis II] \cdot \frac{1}{2}Pr[0 | Hypothesis II] \\
& + Pr[(B,0) | Hypothesis II] \cdot \left(\frac{1}{2}Pr[B \land 0 | Hypothesis II] \right) \bigg) \\
& = \pi_I \cdot \bigg((1 - (1-\kappa)\lambda) (\kappa + \frac{(1-\kappa)(1-\lambda)}{2}) + (1-\kappa)\lambda \frac{1-\kappa}{2}\bigg) \\
& + (1-\pi_I) \cdot \bigg((1 - (1-\kappa)(1-\lambda - \mu)) \frac{1-\kappa}{2} \\
& + (1-\kappa)(1 - \lambda - \mu) \left(\frac{(1-\kappa)(1-\lambda-\mu)}{2} \right),
\end{split}
\]

and firm2's expected utility is
\[
\begin{split}
& Pr[Hypothesis I] \cdot \bigg(Pr[(B,0) \lor (A,1) | Hypothesis I]  \cdot \bigg(Pr[A \land 0| Hypothesis I] + \frac{1}{2}Pr[B \land 0]\bigg) \\
& + Pr[(A,0) | Hypothesis I] \cdot \left(\frac{1}{2}Pr[0 | Hypothesis I]\right)\bigg) \\
& + Pr[Hypothesis II] \cdot \bigg( Pr[(B,1) \lor (A,0) | Hypothesis II] \cdot \frac{1}{2}Pr[0 | Hypothesis II] \\
& + Pr[(B,0) | Hypothesis II] \cdot \bigg(Pr[A \land 0 | Hypothesis II] + \frac{1}{2}Pr[B \land 0 | Hypothesis II] \bigg) \bigg) \\
& = \pi_I \cdot \bigg((1 - (1-\kappa)\lambda) ((1-\kappa)\lambda + \frac{(1-\kappa)(1-\lambda)}{2}) + (1-\kappa)\lambda \frac{1-\kappa}{2}\bigg) \\
& + (1-\pi_I) \cdot \bigg((1 - (1-\kappa)(1-\lambda - \mu)) \frac{1-\kappa}{2} \\
& + (1-\kappa)(1 - \lambda - \mu) \left((1-\kappa)(\lambda + \mu) + \frac{(1-\kappa)(1-\lambda-\mu)}{2} \right).
\end{split}
\]

(Under infer-sharing)

The Bayesian updating phase based on the historical sample is the same as in the no-sharing case, and so the primary firm attains the various posteriors under the same probabilities. The secondary firm, regardless on the sample, has the same posterior as the prior, as the two hypotheses look the same for its signal structure. For the pair of signals $Ab$ and $Bb$, both firms always predict $0$. For the signal $Ba$, since $\frac{\pi_I(1-\kappa)\mu}{2} > (1-\pi_I) \kappa$, it is dominant for the secondary firm to predict $0$ regardless of how the primary firm predicts, and similarly for $Aa$, since $\frac{(1-\pi_I)(1-\kappa)\mu}{2} > \pi_I \kappa$, it is also dominant for the secondary firm to predict $0$. %For the signal $Aa$, we first need a partial analysis of the primary firm behavior. If it learns the hypothesis with certainty during the historical learning phase, then it predicts $Aa$ true realization according to the learned hypothesis. If it learns the posterior we called $w'$, it is then dominant for it to predict $Aa \rightarrow 1$. If it learns the posterior called $w''$ (after seeing the signal $(B,1)$), its prediction depends on the prediction made by the secondary firm: They coordinate to avoid making the same prediction (i.e., either the primary firm predicts $1$ and the secondary $0$ or vice versa). 
%Overall we get that the primary firm's expected utility (depending on the equilibrium) is either, when the secondary firm predicts $Aa \rightarrow 1$:
%$$\frac{1 - \alpha - \delta}{2} + w (\alpha (\alpha + \frac{\delta}{2}) + \beta \cdot 0 + (1 - \alpha - \beta)(\alpha + \frac{\delta}{2}) ) + (1-w) (\alpha + \beta + \delta) (\alpha + \frac{\delta}{2} ) /approx 0.495117, $$

%or when the secondary firm predicts $Aa \rightarrow 0$:
%$$\frac{1 - \alpha - \delta}{2} + w ((1-\beta) ( \frac{\alpha + \delta}{2}) + \beta \cdot 0 + (1 - \alpha - \beta)(\alpha + \frac{\delta}{2}) ) + (1-w) (\alpha + \beta + \delta) (\alpha + \delta) /approx 0.486267, $$

For the signal $Ba$, since the secondary firm always predicts $0$, the primary firm predicts $1$ for it when its posterior is $w'$,
or when it knows with certainty that Hypothesis II is correct. It predicts $0$ for it when its posterior is $w''$, since $$w''\frac{(1-\kappa)\mu}{2} > (1-w'') \kappa,$$
and also when it knows with certainty that Hypothesis I is correct.

For the signal $Aa$, since the secondary firm always predicts $0$, the primary firm predicts $0$ for it when its posterior is $w'$,
or when it knows with certainty that Hypothesis II is correct. It predicts $1$ for it when its posterior is $w''$, since $$w''\kappa > (1-w'') \frac{(1-\kappa)\mu}{2},$$
and also when it knows with certainty that Hypothesis I is correct.

The primary firm's utility is then:

\[
\begin{split}
   & Pr[Hypothesis I] \bigg( Pr[(B,0) \lor (A,1) | Hypothesis I] 
   \cdot ( Pr[1| Hypothesis I] + \frac{1}{2}Pr[0 | Hypothesis I] ) \\
   & + Pr[(A,0) | Hypothesis I](\frac{1}{2} Pr[0 \land b | Hypothesis I]) \bigg) \\
   & + Pr[Hypothesis II] \bigg( Pr[(B,0) | Hypothesis II] \cdot ( \frac{1}{2} Pr[0 \land b | Hypothesis II]) \\
   & + Pr[(A,0) \lor (B,1) | Hypothesis II](\frac{1}{2} Pr[0 | Hypothesis II] + Pr[1 | Hypothesis II]) \bigg) \\
   & = \pi_I \bigg( (1 - (1-\kappa)\lambda)(\kappa + \frac{1-\kappa}{2}) + (1-\kappa)\lambda \frac{(1-\kappa)(1-\mu)}{2}\bigg) \\
   & + (1-\pi_I) \bigg((1-\kappa)(1-\lambda - \mu) \frac{(1-\kappa)(1-\mu)}{2} \\
   & + (1 - (1-\kappa)(1-\lambda - \mu))(\frac{1-\kappa}{2} + \kappa)  \bigg).
\end{split}
\]

The utility of the secondary firm is:

\[
\begin{split}
   & Pr[Hypothesis I] \bigg( Pr[(B,0) \lor (A,1) | Hypothesis I] \cdot (\frac{1}{2}Pr[0 | Hypothesis I] ) \\
   & + Pr[(A,0) | Hypothesis I](\frac{1}{2} Pr[0 \land b | Hypothesis I] + Pr[0 \land a | Hypothesis I]) \bigg) \\
   & + Pr[Hypothesis II] \bigg( Pr[(B,0) | Hypothesis II] \cdot ( \frac{1}{2} Pr[0 \land b | Hypothesis II] \\
   & + Pr[0 \land a | Hypothesis II]) + Pr[(A,0) \lor (B,1) | Hypothesis II] \cdot (\frac{1}{2} Pr[0 | Hypothesis II] ) \bigg) \\
   & = \pi_I \bigg( (1 - (1-\kappa)\lambda)( \frac{1-\kappa}{2}) + (1-\kappa)\lambda (\frac{(1-\kappa)(1-\mu)}{2} + (1-\kappa)\mu)\bigg) \\
   & + (1-\pi_I) \bigg((1-\kappa)(1-\lambda - \mu) (\frac{(1-\kappa)(1-\mu)}{2} + (1-\kappa)\mu) \\
   & + (1 - (1-\kappa)(1-\lambda - \mu))(\frac{1-\kappa}{2})  \bigg).
\end{split}
\]

%Under matching-recommendation utilities, the aggregate expected utilities of the firms can be at most $1$. Thus, the above gives an upper-bound over the secondary's firm expected utility of $0.54$. Since the secondary firm's expected utility under $no-sharing$ is higher, the contract is not individually rational. 

%\yotam{Actually, the discussion of what the firms are doing for $Aa$ doesn't matter: just based on what happens with the other signals can lower bound primary firm’s utility and thus upper bound secondary’s firm utility and infer-sharing can not be a feasible contract.}

(Under full-sharing)
The Bayesian updating phase based on the historical sample is the same as in the train-sharing case, and so both firms (that see the full single sample together) either know the hypothesis with certainty, or maintain the prior $\pi_I$. 

Given that the firms maintain the prior $\pi_I$, upon seeing the signals $Ab$ or $Bb$ they both predict $0$. If they see the signal $Ba$, since
$$\pi_I \frac{(1-\kappa)\mu}{2} > (1-\pi_I)\kappa,$$
both firms predict $0$. If they see the signal $Aa$, since
$$(1-\pi_I) \frac{(1-\kappa)\mu}{2} > \pi_I\kappa,$$ 
both firms predict $0$. 

We thus have that the utility of firm 2 is, if we denote 
$$p^* = Pr[(Aa,1) \lor (Ba,1) \lor (Aa,0) \lor (Ba,0)] = \kappa + (1-\kappa)\mu, $$
is $$\frac{1}{2} (p^* + (1-p^*) (1-\kappa)) = \frac{1}{2}(\kappa (\kappa + (1-\kappa)\mu - 1) + 1).$$

Finally, we note that the parameter choice $\kappa = \frac{5}{32}, \lambda = \frac{1}{8}, \mu = \frac{1}{2}, \pi_I = \frac{1}{2}$ satisfies both the conditions detailed above and the conditions detailed in the proof of Theorem~\ref{thm:infer-sharing}. 

\end{proof}

\section{Robustness of Our Results in Section~\ref{sec:simulation}}
\label{app:robustness}

We wish to deepen our empirical results of Section~\ref{sec:simulation}. In particular, we wish to add more variability into the feature selection process. For this purpose, we introduce the following sampling method. First, for every $\epsilon \in \{1, 0.85, 0.7, 0.55, 0.4, 0.25\}$, we randomly sample $\epsilon$ of the data-set features. Denote the resulting partial set of features $X$. Next, we randomly sample $\frac{0.25}{\epsilon}$ of the features in $X$ to decide the features for Firm1. Similarly, we sample $\frac{0.1}{\epsilon}$ of the features in $X$ to decide the features for Firm2. The end result is, as in Section~\ref{sec:simulation}, that Firm1 sees $25\%$ random features out of the original, and Firm2 sees $10\%$. However, one could hope that the features (and so, the resulting models after training) will be more correlated as $\epsilon$ is smaller, since an overlap of the features the firms see is more likely. For each value of $\epsilon$, we run $32$ random experiments, where we repeat the above process, followed by the analysis described in Section~\ref{sec:simulation}. Overall this results in $32 \cdot 7 = 224$ experiments where we train a random model for each of the firms. 

We then quantify the emergence of optimal contracts in the following way. For each cost in the range $0.5$ to $1.5$ in $0.05$ steps, we find what are the optimal contracts (in the IRPO sense). We break equivalencies in favor of the more ``natural'' contracts, i.e., if any contract has exactly the same expected utilities as no-sharing, we would not consider it optimal, and if train-sharing or infer-sharing are equivalent to full-sharing, we would similarly not consider them optimal. If by the end of this process we have more than one optimal contract for a specific cost, we divide the `benefit' between all the optimal contracts. Overall, for each experiment, we get a score for each contract of the frequency it is optimal. Since we are interested in seeing the possible influence of correlation on optimality, we directly calculate the correlation of Firm1's model and Firm2's model predictions over the validation set, using the Matthews correlation for a confusion matrix. In this way, we can directly compare the two variables we are interested in, instead of using the indirect $\epsilon$ parameter we use as part of our process, which we only expect to have a probabilistic negative connection with the correlation (i.e., as $\epsilon$ higher, we could expect lower correlation). 

We present our results in Figure~\ref{fig:freq-vs-corr}. We also provide a smoothed presentation of our results in Figure~\ref{fig:smoothed-freq-vs-corr}, where for every $0.1$ range of the correlation that appeared in the experiments (i.e., $-0.1$ to $0$, $0$ to $0.1$, and so on, up to $0.8$ to $0.9$), we average the correlation that appear in the range, and the frequencies associated with them. 

\begin{figure}[!htb]
\centering
\includegraphics[scale=0.5]{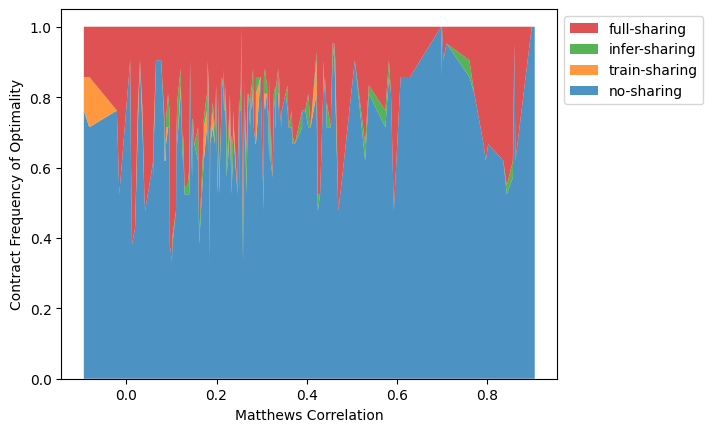}
\caption{Contract Optimality vs. Firm Models' Correlation
}
\label{fig:freq-vs-corr}
\end{figure}

\begin{figure}[!htb]
\centering
\includegraphics[scale=0.5]{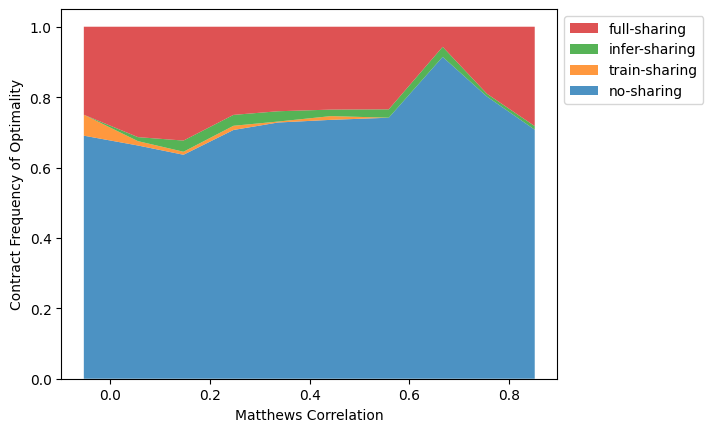}
\caption{Smoothed Contract Optimality vs. Firm Models' Correlation
}
\label{fig:smoothed-freq-vs-corr}
\end{figure}

Overall, the results reiterate the findings we detail in Section~\ref{sec:simulation}. A few surprising aspects to notice are the following: 
\begin{itemize}
\item Infer-sharing appears as a unique IRPO contract, contrary to our prediction. We believe that this is likely due to the full-sharing empirical decision that is based on the test data, where infer-sharing turns out to be better on the validation data. 

\item Train-sharing, which our example shows to emerge with high correlation rates, appears mostly with lower (or negative) correlations. It is reasonable, however, to expect that with negative correlation rates there could be examples of it as well. This is interesting as an indication of where train-sharing might be most relevant. 

\item In our implementation, the decision rules for no-sharing and infer-sharing are determined based on an assumption of independence (i.e. no correlation). We would thus expect the contracts to be more frequently optimal when this assumption is justified (low correlation rates). However, no-sharing is found to be somewhat more frequently optimal when the correlation is high. 

\end{itemize}

\end{document}